\newtheorem{Theorem}{Theorem}[section]
\newtheorem{Definition}[Theorem]{Definition}
\newtheorem{Proposition}[Theorem]{Proposition}
\newtheorem{Lemma}[Theorem]{Lemma}
\newtheorem{Conjecture}[Theorem]{Conjecture}
\newtheorem{Remark}[Theorem]{Remark}
\newcommand{\bC}{{\mathbb C}}
\newcommand{\bP}{{\mathbb P}}
\newcommand{\bZ}{{\mathbb Z}}
\newcommand{\cB}{{\mathcal B}}
\newcommand{\cE}{{\mathcal E}}
\newcommand{\cF}{{\mathcal F}}
\newcommand{\cM}{{\mathcal M}}
\newcommand{\cP}{{\mathcal P}}
\newcommand{\half}{\frac{1}{2}}
\newcommand{\cW}{{\mathcal W}}
\newcommand{\Mbar}{\overline{\cM}}
\newcommand{\tc}{{\tilde c}}
\newcommand{\tF}{{\widetilde F}}
\newcommand{\tA}{{\widetilde{A}}}
\newcommand{\tGa}{{\widetilde{\Gamma}}}
\newcommand{\be}{\begin{equation}}
\newcommand{\ee}{\end{equation}}
\newcommand{\bea}{\begin{eqnarray}}
\newcommand{\ben}{\begin{eqnarray*}}
\newcommand{\een}{\end{eqnarray*}}
\newcommand{\eea}{\end{eqnarray}}
\DeclareMathOperator{\Aut}{Aut}
\DeclareMathOperator{\Id}{id}
\definecolor{yellow}{rgb}{1,1,0}
\definecolor{orange}{rgb}{1,.7,0}
\definecolor{red}{rgb}{1,0,0}
\definecolor{green}{rgb}{0,1,1}
\definecolor{white}{rgb}{1,1,1}
\definecolor{A}{rgb}{.75,1,.75}
\begin{document}

\newtheorem{myDef}{Definition}
\newtheorem{thm}{Theorem}
\newtheorem{eqn}{equation}

\title[On a Proof of the ADKMV Conjecture]
{On a Proof of the ADKMV Conjecture\\
---
$3$-KP Integrability of the Topological Vertex}

\author{Zhiyuan Wang}
\address{Zhiyuan Wang, School of Mathematics and Statistics,
	Huazhong University of Science and Technology,
	Wuhan, China}
\email{wangzy23@hust.edu.cn}

\author{Chenglang Yang}
\address{Chenglang Yang, Institute for Math and AI, Wuhan University, Wuhan, China}
\address{Hua Loo-Keng Center for Mathematical Sciences,
	Academy of Mathematics and Systems Science,
	Chinese Academy of Sciences,
	Beijing, China}
\email{yangcl@pku.edu.cn}

\author{Jian Zhou}
\address{Jian Zhou, Department of Mathematical Sciences\\
	Tsinghua University, Beijing, China}
\email{jianzhou@mail.tsinghua.edu.cn}

\begin{abstract}
We present a mathematical proof of the Aganagic-Dijkgraaf-Klemm-Mari\~no-Vafa Conjecture proposed in 2006,
which states that the generating function of the topological vertex,
i.e., the generating function of the open Gromov-Witten invariants of $\mathbb{C}^3$,
satisfies the $3$-component KP hierarchy.
In our proof we introduce a boson-fermionic field assignment which generalizes the well-known boson-fermion correspondence.
The proof also works for the generalization to the framed topological vertex case conjectured by Deng and Zhou.
As a consequence,
open Gromov-Witten theory of all smooth toric Calabi-Yau threefolds are controlled by the multi-component KP hierarchy.
\end{abstract}

\maketitle


\section{Introduction}

Enumerative geometry and integrable systems are deeply related to each other
from the viewpoint of string theory.
This connection starts from the famous Witten Conjecture/Kontsevich Theorem \cite{wit90, kon}
which relates the intersection theory on the moduli spaces of stable curves to the KdV hierarchy.
Later this was generalized to the case of $r$-spin intersection numbers
and the $r$-KdV hierarchy \cite{wit2, fsz},
and the Fan-Jarvis-Ruan-Witten theory of ADE \cite{fjr}, BCFG \cite{lrz} types
and the corresponding Drinfeld-Sokolov integrable systems.
This connection also appears in the Gromov-Witten theory with a non-trivial target,
for example,
the Gromov-Witten invariants of the projective line $\bC\bP^1$
was shown to be related to the 2D Toda lattice hierarchy by Okounkov-Pandharipande \cite{op},
see also \cite{ey,dz}.
It is natural to expect that enumerative invariants of Gromov-Witten type
are all controlled by some infinite-dimensional integrable systems.
See \cite{dz01} and \cite{bur} for two constructions of integrable systems
from a cohomological field theory \cite{km}.

In this paper,
we settle the cases of the open Gromov-Witten theory of smooth toric Calabi-Yau threefolds
and establish their connection to the multi-component KP hierarchy.
This connection was conjectured by string theorists
Aganagic, Dijkgraaf, Klemm, Mari\~no and Vafa \cite{adkmv} in 2006.
Roughly speaking, the open Gromov-Witten invariants of a smooth toric Calabi-Yau threefold $X$
count holomorphic maps from Riemann surfaces with boundaries to $X$
such that the boundaries are mapped to some given Lagrangians in $X$.
Motivated by the duality between large $N$ Chern-Simons theory and the string theory \cite{wit},
Aganagic, Klemm,  Mari\~no and Vafa \cite{akmv} proposed a formalism called the topological vertex
which provides a powerful tool to compute the generating series of all-genera
open Gromov-Witten invariants of a toric Calabi-Yau threefold.
This generating series is called the open string amplitudes of corresponding models in physics literatures.
The topological vertex is the generating series of the open Gromov-Witten invariants of $\mathbb{C}^3$,
and the generating series of a general smooth toric Calabi-Yau threefold can be
obtained from the topological vertex by certain gluing rules.
The way to glue some topological vertices together
can be combinatorially read from the toric diagram associated with this threefold.
A mathematical theory of the topological vertex
has been established by Li, Liu, Liu and the third-named author \cite{lllz}
based on a sequence of earlier works on localizations on moduli spaces of relative stable maps.
Another mathematical approach to the topological vertex was
given by Maulik, Oblomkov, Okounkov, Pandharipande \cite{moop}
using the Gromov-Witten/Donaldson-Thomas correspondence.

The purpose of this work is to give a mathematical proof of a conjecture of Aganagic {\em et al} \cite{adkmv},
which represents the topological vertex as a Bogoliubov transform of the
vacuum vector in the $3$-component fermionic Fock space.
Let $\mu^1,\mu^2,\mu^3$ be three partitions of integers,
and let $W_{\mu^1,\mu^2,\mu^3}(q)$ be the topological vertex labeled by $\mu^1,\mu^2,\mu^3$.
Based on the duality between Chern-Simons theory and string theory,
Aganagic-Klemm-Mari\~no-Vafa \cite{akmv} represented the topological vertex
as an explicit but complicated formula in terms of skew Schur functions,
see \eqref{eqn:def W as skew}.
It is natural to consider the following generating function:
\be
\label{eqn:gf of top vert}
\sum_{\mu^1, \mu^2, \mu^3 \in\mathcal{P}} W_{\mu^1,\mu^2,\mu^3}(q) \; s_{\mu^1}(\mathbf{t}^1)
s_{\mu^2}(\mathbf{t}^2)
s_{\mu^3}(\mathbf{t}^3),
\ee
where $s_{\mu}(\mathbf{t})$ is the Schur polynomial
in formal variables $\mathbf{t}^i=(t^i_1,t^i_2,t^i_3,\cdots)$,
and $\cP$ is the set of all partitions of integers.
Motivated by the conjectural multi-component KP integrability of the generating function \eqref{eqn:gf of top vert},
one can apply the boson-fermion correspondence to $W_{\mu^1,\mu^2,\mu^3}(q)$
and represent it as a vector in the fermionic Fock space,
see \cite{akmv, adkmv}.
For an introduction of the boson-fermion correspondence and KP hierarchy,
see e.g. \cite{djm, sa, kv}.
The ADKMV Conjecture \cite{adkmv} is the following
conjectural fermionic representation of the topological vertex:
\begin{Conjecture}[ADKMV Conjecture]
The topological vertex is given by the following Bogoliubov transform
in the $3$-component fermionic Fock space $\cF \otimes \cF \otimes \cF$:
\be
\label{eqn:ADKMV}
W_{\mu^1,\mu^2,\mu^3}(q) =
 \langle \mu^1,\mu^2,\mu^3|
\exp\Big( \sum_{i,j =1,2,3} \sum_{m,n\geq 0} A_{mn}^{ij}(q)
\psi_{-m-\half}^{i} \psi_{-n-\half}^{j*} \Big)
|0\rangle \otimes |0\rangle \otimes |0\rangle,
\ee
where the coefficient $A_{mn}^{ij}(q)$ are given by the following combinatorial formulas:
\be\label{eqn:Aijmn}
\begin{split}
	&A_{mn}^{ii}(q) = (-1)^n \frac{q^{m(m+1)/4 - n(n+1)/4}}{[m+n+1]\cdot [m]![n]!},\\
	&A_{mn}^{i(i+1)} (q) = (-1)^n q^{m(m+1)/4 - n(n+1)/4 +1/6}
	\sum_{l=0}^{\min(m,n)} \frac{q^{(l+1)(m+n-l)/2}}{[m-l]![n-l]!},\\
	&A_{mn}^{i(i-1)} (q) = (-1)^{n+1} q^{m(m+1)/4 - n(n+1)/4 -1/6}
	\sum_{l=0}^{\min(m,n)} \frac{q^{-(l+1)(m+n-l)/2}}{[m-l]![n-l]!},\\
\end{split}
\ee
and we use the conventions $A_{mn}^{34} = A_{m,n}^{31}$, $A_{mn}^{10}=A_{mn}^{13}$,
and
\be
[m]! = \prod_{k=1}^m [k] = \prod_{k=1}^m (q^{k/2} - q^{-k/2}).
\ee
for an integer $m\geq 1$, and $[0]! =1$.
\end{Conjecture}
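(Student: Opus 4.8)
The plan is to verify the identity \eqref{eqn:ADKMV} by computing both sides as explicit functions of the triple $(\mu^1,\mu^2,\mu^3)$ and matching them. On the right-hand side, since the quadratic expression $\Omega = \sum_{i,j}\sum_{m,n\geq 0} A^{ij}_{mn}(q)\,\psi^i_{-m-\half}\psi^{j*}_{-n-\half}$ involves only creation operators, the state $\exp(\Omega)|0\rangle^{\otimes 3}$ is a fermionic Gaussian state, and its pairing with $\langle\mu^1,\mu^2,\mu^3|$ is governed by Wick's theorem. First I would write each leg $\mu^i$ in (modified) Frobenius coordinates, so that $\langle\mu^1,\mu^2,\mu^3|$ becomes a product of creation operators acting on the dual vacuum; the matrix element then collapses to a single determinant whose entries are the propagators $A^{ij}_{mn}(q)$ indexed by the combined Frobenius data of the three legs. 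Thus the right-hand side is, for every $(\mu^1,\mu^2,\mu^3)$, an explicit determinant in the $A$'s.

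On the left-hand side I would start from the skew-Schur-function formula \eqref{eqn:def W as skew} for $W_{\mu^1,\mu^2,\mu^3}(q)$ and reinterpret each factor fermionically. The principal specialization $s_{\mu^2}(q^\rho)$ and the skew specializations $s_{\mu^1/\eta}$, $s_{(\mu^3)^t/\eta}$ are realized by applying half-vertex operators $\Gamma_\pm$ built from the bosonic currents to the states $|\eta\rangle$, and the summation over the intermediate partition $\eta$ is absorbed into the resolution of the identity $\sum_\eta |\eta\rangle\langle\eta|=\mathrm{Id}$ on the charge-zero sector. This turns the $\eta$-coupled product of skew specializations into a single vacuum matrix element of a product of vertex operators distributed across the three Fock-space sectors, which is already of the same shape as the right-hand side.

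The essential new ingredient, and the place where the argument is most delicate, is the shifted specialization $q^{\mu^2+\rho}$ appearing in the arguments of the middle leg. Unlike the plain principal specialization $q^\rho$, its dependence on the partition $\mu^2$ couples all three sectors and is not produced by any fixed vertex operator. To handle this I would deploy the \emph{boson-fermionic field assignment} advertised in the abstract: a partition-dependent modification of the fermion fields that linearizes the $q^{\mu^2+\rho}$ shift and renders it uniform across the Fock space, so that all three legs can be treated on an equal footing. Normal-ordering the resulting product of modified vertex operators then expresses the state directly in the Bogoliubov form $\exp(\Omega)|0\rangle^{\otimes 3}$, and the coefficients read off from the normal-ordered exponent are exactly the two-point functions $A^{ij}_{mn}(q)$.

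Finally I would verify the closed formulas \eqref{eqn:Aijmn} by evaluating these two-point functions explicitly. The diagonal blocks $A^{ii}_{mn}$ come from the ordinary principal specialization and produce the single $q$-integer/hook expression, while the off-diagonal blocks $A^{i(i\pm1)}_{mn}$ record the coupling between adjacent legs, the inner sum over $l$ being precisely the hook overlap left over from the $\eta$-summation after the field assignment. The hard part will be controlling exactly this off-diagonal coupling: proving that the shifted specialization, once passed through the boson-fermionic field assignment, reproduces the stated $l$-sum together with the correct signs $(-1)^n$, $(-1)^{n+1}$ and the framing-type powers $q^{\pm1/6}$ and $q^{m(m+1)/4-n(n+1)/4}$ is the combinatorial heart of the proof, and the bookkeeping of these normalization and $\kappa$-factors must be carried out with care. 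As a by-product, exhibiting $W$ in the Bogoliubov form \eqref{eqn:ADKMV} simultaneously establishes the conjectured $3$-component KP integrability of \eqref{eqn:gf of top vert}.
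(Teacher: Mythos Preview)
Your proposal is essentially the paper's strategy: both sides are reduced to determinants via Wick's theorem, the boson-fermionic field assignment handles the $q^{\mu^2+\rho}$ shift, and the computational heart is the explicit evaluation of the off-diagonal two-point functions. You have correctly identified all the main ingredients.

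There is, however, one point where your outline diverges from what actually happens in the paper and which you should be aware of. You describe the left-hand side as becoming ``a single vacuum matrix element of a product of vertex operators distributed across the three Fock-space sectors,'' after which normal-ordering would produce the Bogoliubov state $\exp(\Omega)|0\rangle^{\otimes 3}$ with coefficients literally equal to $A^{ij}_{mn}$. In the paper the intermediate object $C_{\mu^1,\mu^2,\mu^3}$ lives in a \emph{single} copy of $\cF$: it is $\langle \mu^1|\,q^{a_1K/2}\Psi^{(a_2)}_{\mu^2}(q)\,q^{-(a_3+1)K/2}|(\mu^3)^t\rangle$, with the $\mu^2$-dependence carried by the operator and the $\mu^3$-dependence by the ket. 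Wick's theorem then yields a determinant with entries $F^{ij}_{mn}$, and these do \emph{not} coincide with $(-1)^n A^{ij}_{mn}$; they differ by scalar factors $f_{ij}=q^{(a_i-a_j)/8+\cdots}$ depending only on the block indices $(i,j)$. The identification with the right-hand side is therefore not entrywise but only at the level of determinants: one checks that $f_{ij}=g_i/g_j$ for suitable $g_i$, so $(\tF_{kl})$ and $(B_{kl})$ are conjugate by a diagonal matrix and have equal determinant. This last step is short but essential; your plan to ``read off $A^{ij}_{mn}$ directly'' would instead produce the $F^{ij}_{mn}$, and you would then need exactly this cycle argument to finish.
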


Here $ \langle \mu^1,\mu^2,\mu^3| =  \langle \mu^1| \otimes \langle \mu^2| \otimes \langle \mu^3|$
is a vector in the $3$-component dual fermionic Fock space.
For a partition $\mu$,
the vector $|\mu\rangle \in \cF$ corresponds to the Schur function $s_\mu$
under the boson-fermion correspondence,
and $\langle \mu|$ is the dual vector of $|\mu\rangle$.
See \S \ref{sec-Prelim} for the notations,
and \S \ref{sec:def nKP} for more about
the Bogoliubov transformation and multi-component KP integrability.

A straightforward consequence of this fermionic representation
is that the generating series \eqref{eqn:gf of top vert} of the topological vertex,
i.e.,
the generating series of all open Gromov-Witten invariants of $\bC^3$,
satisfies the $3$-component KP hierarchy.
Aganagic {\em et al} \cite{akmv,adkmv} also hope that this kind of integrability holds for the open Gromov-Witten theory of
a general toric Calabi-Yau threefold (after some modifications).
This conjecture plays a crucial role in understanding the local mirror symmetry and local mirror curves.
Later Kashani-Poor \cite{kas} studied the case of resolved conifold,
and with the assumption of the $4$-component KP integrability
he computed the coefficients in corresponding Bogoliubov transform.
Takasaki \cite{tak} proved the KP integrability of the generalized conifold,
and he asserted that the full theory of these models
should satisfy the multi-component KP hierarchy
even though he is unable to prove it (see \cite[\S 4.1]{tak}).
Based on some techniques concerning the quantum torus algebra,
Takasaki and Nakatsu \cite{tn} proved the $2$-component KP integrability
of the closed topological vertex model
(and the full theory is supposed to be $6$-component KP integrable
since the toric diagram has $6$ external legs).

In 2015,
Deng and the third-named author obtained some partial results of the ADKMV Conjecture in \cite{dz1}.
They first generalized the ADKMV Conjecture to the case of the framed topological vertex
and then gave a proof of the $2$-legged case
(i.e., the case that $\mu^3$ is the empty partition),
which immediately implies the
$2$-component KP integrability of the generating function of $2$-legged (framed) topological vertex.
The framing plays a crucial role in the gluing procedures of topological vertices.
This generalization of the ADKMV Conjecture is called the framed ADKMV Conjecture,
and the original ADKMV Conjecture can be recovered from
the framed ADKMV Conjecture by simply taking all framings to zero.
See \S \ref{sec:adkmv} for a brief review of this generalization.
The proof of the $2$-legged case of the framed ADKMV Conjecture given in \cite{dz1}
involves complicated combinatorial computations of skew Schur functions
and it seems very hard to generalize this method directly to
the most general case, i.e., the $3$-legged case.
In another work \cite{dz2},
Deng and the third-named author proved that the gluing procedure of topological vertices
preserves the multi-component KP integrability.
More precisely,
they proved that the gluing of Bogoliubov transforms of the fermionic vacuum
is still (a part of) a Bogoliubov transform of the fermionic vacuum,
and thus one automatically obtains that
the generating functions of open Gromov-Witten invariants of a smooth toric Calabi-Yau manifold
is a part of some tau-function of the multi-component KP hierarchy
provided the framed ADKMV Conjecture.
See Theorem \ref{thm:provided adkmv} for a precise description of this statement.

It is not easy to generalize
the methods used in the literatures \cite{adkmv,dz1,dz2,kas,tak,tn}
dealing with the integrality of the open Gromov-Witten theory in the cases of $1$-component and $2$-component KP hierarchies
to the $3$-legged case.
The difficulty lies in the following two folds.
First,
in the $1$-component and $2$-component case
the combinatorial expressions in terms of Schur and skew Schur functions
can be represented using $1$-component free fermions via the boson-fermion correspondence,
but $1$-component free fermions would not be sufficient when one studies
the $n$-component KP integrability for a general $n\geq3$.
Second,
it is not easy to translate the combinatorial expression of the topological vertex in terms of skew Schur functions
as a vector in the fermionic Fock space.
One crucial step towards in this question
is obtained by Okounkov, Reshetikhin and Vafa \cite{orv}.
In that work,
they proved the equivalence between the topological vertex and the generating function of
enumeration of plane partitions, up to a MacMahon factor,
which is surprisingly essential to obtain a much simpler formula for the topological vertex
as a vacuum expectation value in the fermionic Fock space,
see \cite[(3.17), (3.21)]{orv}.
However,
their formula is still not enough to derive the multi-component KP integrability
since the dependence of their formula on the second partition is difficult to handle.
Moreover,
the operators appearing in their vacuum expectation value representation
are actually the free bosons instead of the free fermions,
and Wick's theorem does not work in this situation.

In this paper,
we prove the framed ADKMV Conjecture using another strategy.
To overcome the difficulties mentioned above,
we propose a boson-fermionic assignment
which generalizes the well-known boson-fermion correspondence,
which gives a new fermionic representation of the framed topological vertex.
More precisely,
we represent the framed topological vertex as a vacuum expectation value
of some operator on the $1$-component fermionic Fock space,
such that the dependence on the second partition is hidden in the operator.
For more details of this construction,
see \S \ref{sec:bf assign}.
Our result completes the picture described in last several paragraphs,
and in particular,
we now know that the Gromov-Witten theory of all smooth toric Calabi-Yau manifolds
are controlled by the multi-component KP hierarchy in the sense of Theorem \ref{thm:provided adkmv}.

In the rest of the Introduction,
we briefly state the main results of this work.
For an integer $a\in\mathbb{Z}$ (which corresponds to the framing data)
and a partition $\mu\in\mathcal{P}$,
we assign a fermionic field $\Psi_{\mu}^{(a)}(q)$
which is an operator on the fermionic Fock space.
The definition of each operator $\Psi_{\mu}^{(a)}(q)$ involves
only finitely many fermionic fields of the form:
\begin{equation*}
\tGa_+^{(a)} (z) = f_1^{(a)}(z) \Gamma_-(z^{-1}) \psi^*(z^{-1}),
\qquad \tGa_-^{(a)} (z) = z\cdot f_2^{(a)}(z) \psi(z) \Gamma_+(z^{-1}),
\end{equation*}
see \S \ref{sec:bf assign} for details.
This construction is similar to
the way how a fermionic state corresponds to a bosonic state in the ordinary boson-fermion correspondence,
and thus is suitable to study the multi-component KP integrability.
First we have:
\begin{Theorem} \label{thm:Main1}
For a partition $\mu$ and an integer $a\in \bZ$,
define an operator $\Psi_\mu^{(a)}$ on the fermionic Fock space as in Definition \ref{def-boson-ferm}.
Then the framed topological vertex is given by:
\be
W_{\mu^1,\mu^2,\mu^3}^{(a_1,a_2,a_3)} (q) =
\langle \mu^1 | q^{a_1 K/2} \Psi_{\mu^2}^{(a_2)}(q) q^{-(a_3 +1)K /2} |(\mu^3)^t \rangle,
\ee
where $K$ is the cut-and-join operator on the fermionic Fock space.
\end{Theorem}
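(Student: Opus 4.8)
The plan is to derive the stated vacuum expectation value directly from the combinatorial expression \eqref{eqn:def W as skew} of the framed topological vertex as a sum of products of skew Schur functions at (shifted) principal specializations, by realizing each ingredient of that formula inside the $1$-component fermionic Fock space $\cF$. Concretely, the two outer partitions $\mu^1$ and $\mu^3$ enter through the bra $\langle\mu^1|$ and the ket $|(\mu^3)^t\rangle$, whereas the entire dependence on $\mu^2$---which in the skew-Schur formula is the most delicate---is to be packaged into the single operator $\Psi_{\mu^2}^{(a_2)}$ of Definition \ref{def-boson-ferm}. The framings $a_1,a_3$ are produced by the diagonal operator $q^{aK/2}$: since the cut-and-join operator $K$ acts on the Schur basis by $K|\mu\rangle=\kappa(\mu)|\mu\rangle$, the factors $q^{a_1K/2}$ and $q^{-(a_3+1)K/2}$ act on the external states as the scalars $q^{a_1\kappa(\mu^1)/2}$ and $q^{(a_3+1)\kappa(\mu^3)/2}$ (using $\kappa(\mu^t)=-\kappa(\mu)$), giving exactly the framing prefactors; the extra $+1$ on the third leg absorbs the intrinsic (canonical) framing contribution built into the ORV normalization of the vertex.

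The key tools are the standard actions of the half-vertex operators, $\langle\lambda|\Gamma_-(\mathbf{x})=\sum_\eta s_{\lambda/\eta}(\mathbf{x})\langle\eta|$ and $\Gamma_+(\mathbf{x})|\lambda\rangle=\sum_\eta s_{\lambda/\eta}(\mathbf{x})|\eta\rangle$, together with the Cauchy-type commutation relation between $\Gamma_+$ and $\Gamma_-$ that is responsible for the sum over the gluing partition $\eta$. First I would expand the dressed fields $\tGa_+^{(a_2)}(z)=f_1^{(a_2)}(z)\,\Gamma_-(z^{-1})\psi^*(z^{-1})$ and $\tGa_-^{(a_2)}(z)=z\,f_2^{(a_2)}(z)\,\psi(z)\,\Gamma_+(z^{-1})$ in the spectral variable, reading off the coefficients dictated by the Frobenius coordinates of $\mu^2$; this is exactly the step in which the free fermions $\psi,\psi^*$ encode the Maya diagram of $\mu^2$, in direct analogy with the ordinary boson-fermion correspondence, and it explains why $\Psi_{\mu^2}^{(a_2)}$ involves only finitely many fields. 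The explicit functions $f_1^{(a_2)},f_2^{(a_2)}$ are designed so that, upon taking residues, the $\Gamma_\mp(z^{-1})$ arguments assemble into the bare principal specialization $q^\rho$ and carry the $q$-power $q^{\kappa(\mu^2)/2}$.

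With these preparations, the computation proceeds by first pulling the framing operators onto the external states to extract the scalar prefactors, and then applying Wick's theorem to the middle factor $\Psi_{\mu^2}^{(a_2)}$. The vertex-operator parts $\Gamma_-(z^{-1})$ and $\Gamma_+(w^{-1})$ acting toward $\langle\mu^1|$ and $|(\mu^3)^t\rangle$ generate the skew Schur functions $s_{\mu^1/\eta}$ and $s_{(\mu^3)^t/\eta}$; commuting them through each other yields the Cauchy factor and hence the sum over $\eta$, while commuting them through the external ket $|(\mu^3)^t\rangle$ is what shifts the bare $q^\rho$ into the $\mu^3$-dependent specializations $q^{\mu^3+\rho}$ and $q^{(\mu^3)^t+\rho}$ of \eqref{eqn:def W as skew}. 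Simultaneously, the contractions among the fermion parts $\psi,\psi^*$ supply the $\mu^2$-dependent weights. Collecting all contributions should reproduce \eqref{eqn:def W as skew} term by term.

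The main obstacle is this last matching step: verifying that the normal-ordered product of finitely many dressed fields in $\Psi_{\mu^2}^{(a_2)}$, after Wick contraction, reproduces precisely the $\mu^2$-factor of the skew-Schur formula---in particular that the fermionic contractions assemble into the correct determinantal (Jacobi--Trudi / Cauchy) structure, and that all the $q$-powers coming from the framing conjugations, the specialization shifts, and the functions $f_1^{(a_2)},f_2^{(a_2)}$ combine into exactly the prefactors of \eqref{eqn:def W as skew}. The individual operator manipulations---vertex-operator commutations, residue extraction, Wick's theorem---are routine; the genuine difficulty lies in the bookkeeping of the $q$-exponents and in checking the resulting identity as an equality of formal power series in the Schur variables $\mathbf{t}^1,\mathbf{t}^3$, uniformly in the framing integer $a_2$ and in the Frobenius data of $\mu^2$.
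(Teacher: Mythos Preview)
Your outline contains a genuine conceptual error that blocks the argument. In the target formula \eqref{eqn:def W as skew} the skew Schur functions are evaluated at the $\mu^2$-shifted specializations $q^{(\mu^2)^t+\rho}$ and $q^{\mu^2+\rho}$, not at $\mu^3$-dependent ones; and the mechanism producing these shifts is not ``commuting the vertex operators through the external ket $|(\mu^3)^t\rangle$''. The ket $|(\mu^3)^t\rangle$ only supplies the skew index $(\mu^3)^t$ via $\Gamma_+(\bm t)|(\mu^3)^t\rangle=\sum_\eta s_{(\mu^3)^t/\eta}|\eta\rangle$; it does not alter the arguments of the $\Gamma_\pm$.

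What you are missing is the structural identity that makes the whole construction work. Using the commutation relations between $\tGa_\pm^{(a)}$ and $\Gamma_\pm$ (Lemma \ref{lem-equiv-tGa}) one can move all the $\tGa_\pm^{(a_2)}$ factors past the ordinary $\Gamma_\pm$, and the free fermions $\psi,\psi^*$ inside them cancel in pairs (contributing only a scalar). After this, $\Psi_{\mu^2}^{(a_2)}$ becomes a \emph{scalar times a pure bosonic product}
\[
\Psi_{\mu^2}^{(a_2)} \;=\; c(\mu^2,a_2;q)\,\prod_{\substack{j\ge0\\ j\ne m_k}}\Gamma_-(q^{-j-\half})\cdot\prod_k\Gamma_-(q^{n_k+\half})\cdot\prod_k\Gamma_+(q^{m_k+\half})\cdot\prod_{\substack{j\ge0\\ j\ne n_k}}\Gamma_+(q^{-j-\half}),
\]
and the point is the set-theoretic Maya-diagram identity
\[
\{(\mu^2)^t+\rho\}=\{n_k+\tfrac12\}_k\cup\bigl((\bZ_{<0}+\tfrac12)\setminus\{-m_k-\tfrac12\}_k\bigr),\qquad
\{\mu^2+\rho\}=\{m_k+\tfrac12\}_k\cup\bigl((\bZ_{<0}+\tfrac12)\setminus\{-n_k-\tfrac12\}_k\bigr),
\]
which identifies these products with $\Gamma_-(\bm t((\mu^2)^t))\,\Gamma_+(\bm t(\mu^2))$. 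That is the origin of the $\mu^2$-shifted specializations; once you have this, $\langle\mu^1|\Gamma_-\Gamma_+|(\mu^3)^t\rangle$ gives $\sum_\eta s_{\mu^1/\eta}(q^{(\mu^2)^t+\rho})s_{(\mu^3)^t/\eta}(q^{\mu^2+\rho})$ immediately from \eqref{eq-special-schur}--\eqref{eq-special-schur-2}. The scalar $c(\mu^2,a_2;q)$ is computed as the vacuum expectation value $\langle\Psi_{\mu^2}^{(a_2)}\rangle$; here is where Wick's theorem and the Giambelli formula enter, yielding $q^{a_2\kappa_{\mu^2}/2}s_{\mu^2}(q^\rho)=(-1)^{|\mu^2|}q^{a_2\kappa_{\mu^2}/2}s_{(\mu^2)^t}(q^{-\rho})$, exactly the prefactor in \eqref{eq-framedTV-def}. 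Your plan to apply Wick's theorem directly to the dressed fields, with the fermion contractions ``supplying the $\mu^2$-dependent weights'', never isolates this identity and therefore cannot produce the correct specialization arguments.
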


Using this new fermionic expression for the framed topological vertex,
we are able to derive a determinantal formula using Wick's Theorem
which enables us to identify the framed topological vertex
with the Bogoliubov transform conjectured in \cite{adkmv, dz1}.
In this way we prove:
\begin{Theorem} \label{thm:Main2}
The framed ADKMV Conjecture holds.
More precisely,
for an arbitrary framing $(a_1,a_2,a_3)\in \bZ^3$,
the framed topological vertex
\begin{equation*}
W_{\mu^1,\mu^2,\mu^3}^{(a_1,a_2,a_3)}(q)
= q^{a_1\kappa_{\mu^1}/2 + a_2\kappa_{\mu^2}/2 + a_3\kappa_{\mu^3}/2} \cdot
W_{\mu^1,\mu^2,\mu^3}(q),
\end{equation*}
equals to a Bogoliubov transform of the $3$-component fermionic vacuum whose coefficients are
\be\label{eqn:def Aqa}
A_{mn}^{ij} (q;\bm a) = q^{\frac{a_i m(m+1) - a_j n(n+1)}{2}}
\cdot A_{mn}^{ij} (q).
\ee
\end{Theorem}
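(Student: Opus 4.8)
The plan is to start from the single-component fermionic expression furnished by Theorem \ref{thm:Main1} and transport it into the $3$-component Fock space via Wick's theorem, matching the resulting determinant against the one produced by the conjectured Bogoliubov transform. A useful first reduction is to peel off the framing dependence through conjugation by the cut-and-join operator. Since $K$ is diagonal on the fermionic modes, one has $q^{aK/2}\,\psi_{-m-\frac12}\,q^{-aK/2} = q^{a\,m(m+1)/2}\,\psi_{-m-\frac12}$ and $q^{aK/2}\,\psi^*_{-n-\frac12}\,q^{-aK/2} = q^{-a\,n(n+1)/2}\,\psi^*_{-n-\frac12}$, while $q^{aK/2}|\mu\rangle = q^{a\kappa_\mu/2}|\mu\rangle$. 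Pulling the outer operators $q^{a_1K/2}$, $q^{-(a_3+1)K/2}$ and the $a_2$-dependence inside $\Psi_{\mu^2}^{(a_2)}$ through the fermion modes thus simultaneously generates the overall framing prefactor $q^{\sum_i a_i\kappa_{\mu^i}/2}$ and converts the unframed coefficients $A_{mn}^{ij}(q)$ into the framed ones $A_{mn}^{ij}(q;\bm a)=q^{(a_i m(m+1)-a_j n(n+1))/2}A_{mn}^{ij}(q)$ of \eqref{eqn:def Aqa}. (The extra shift by $1$ and the transpose on the third leg are bookkeeping artifacts of the convention $\kappa_{\mu^t}=-\kappa_\mu$ and must be tracked carefully, but they do not affect the structure.) This reduces the theorem to the zero-framing statement, carrying the diagonal factors along.

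Next I would expand $\Psi_{\mu^2}^{(a_2)}$ as a finite ordered product of the fields $\tGa_\pm^{(a)}(z)$, each of which is a single free fermion dressed by the scalar $f_1^{(a)}$ or $f_2^{(a)}$ and a vertex operator $\Gamma_\mp$. The whole matrix element then becomes a vacuum expectation value of a product of free-fermionic operators, so Wick's theorem expresses it as the determinant of the pairwise two-point contractions. I would compute these contractions explicitly: the vertex-operator products $\Gamma_+(z^{-1})\Gamma_-(w^{-1})$ and the fermion--vertex-operator products are governed by the standard OPEs, and extracting the mode coefficients reduces to residue/contour-integral evaluations in $z$ and $w$. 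The three sectors $A^{ii}$, $A^{i(i+1)}$, $A^{i(i-1)}$ of \eqref{eqn:Aijmn} arise from which pair of blocks --- component $1$ (the bra $\langle\mu^1|$), component $2$ (the interior operator $\Psi_{\mu^2}^{(a_2)}$), component $3$ (the ket $|(\mu^3)^t\rangle$) --- the two contracted modes belong to: intra-block contractions give the diagonal entries, and cross-block contractions, mediated by the intervening $\Gamma_\pm$ insertions, give the super- and sub-diagonal entries, with the cyclic conventions $A^{34}=A^{31}$ and $A^{10}=A^{13}$ reflecting that component $1$ and component $3$ are adjacent in this cyclic arrangement.

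On the other side, the $3$-component Bogoliubov transform \eqref{eqn:ADKMV} is Gaussian in the fermions, so pairing it against $\langle\mu^1,\mu^2,\mu^3|$ and applying Wick's theorem yields a determinant whose entries are the corresponding minors of the coefficient matrix $\big(A_{mn}^{ij}(q;\bm a)\big)$, the rows and columns being indexed by the Frobenius data of the $\mu^i$. The proof is then completed by identifying the two determinants entry by entry. The main obstacle I anticipate is precisely this identification: one must show that the explicit contractions of the dressed fields $\tGa_\pm^{(a)}$ reproduce the closed combinatorial formulas \eqref{eqn:Aijmn}, including the signs $(-1)^n$, the $q$-powers $q^{m(m+1)/4-n(n+1)/4}$, the $q$-factorials $[m]!\,[n]!$, and, in the off-diagonal sectors, the finite sums $\sum_{l=0}^{\min(m,n)} q^{\pm(l+1)(m+n-l)/2}/([m-l]!\,[n-l]!)$. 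Pinning these down will require a careful residue computation combined with a $q$-binomial (Cauchy-type) summation, and this is where the bulk of the technical work lies. By contrast, the reduction of each determinant to a single two-point function per entry is forced by the free-fermionic structure and should be routine.
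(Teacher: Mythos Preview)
Your overall architecture matches the paper's: apply Wick's theorem to the single-component expression of Theorem \ref{thm:Main1} to obtain a determinant, do the same on the Bogoliubov side, and compare. However, your plan to finish by ``identifying the two determinants entry by entry'' has a genuine gap: the entries do \emph{not} match. When the contractions are carried out (the paper's Proposition \ref{prop-coeff-F}), the resulting entries $\tF_{kl}$ agree with the Bogoliubov entries $B_{kl}=(-1)^{n^j_{\bar l}}A^{ij}_{m^i_{\bar k} n^j_{\bar l}}(q;\bm a)$ only on the diagonal blocks; on the off-diagonal blocks they differ by scalar factors depending solely on the block indices, e.g.\ $\tF_{kl}=q^{(a_1-a_2)/8+1/48}\,B_{kl}$ for $k\in I_1,\ l\in I_2$, and $\tF_{kl}=q^{(a_3-a_1)/8-1/24}\,B_{kl}$ for $k\in I_3,\ l\in I_1$. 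These stray fractional powers $q^{\pm 1/48}$, $q^{\pm 1/24}$ originate from the normalizations $f_1^{(a)},f_2^{(a)}$ in the definition of $\tGa_\pm^{(a)}$ and survive even at zero framing, so your reduction to $\bm a=0$ does not remove them.

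The missing idea is that $\det(\tF)=\det(B)$ holds not because the matrices coincide but because the discrepancy factors $f_{ij}$ satisfy a \emph{cocycle condition}: for any cycle $(i_1 i_2\cdots i_s)$ in $\{1,2,3\}$ one has $f_{i_1 i_2}f_{i_2 i_3}\cdots f_{i_s i_1}=1$. Decomposing an arbitrary $\sigma\in S_r$ into cycles then gives $\prod_k \tF_{k\sigma(k)}=\prod_k B_{k\sigma(k)}$, so the two Leibniz expansions agree term by term. Without this extra lemma your argument stalls exactly where you flagged ``the main obstacle'': the residue computation does produce the sums $\sum_l q^{\pm(l+1)(m+n-l)/2}/([m-l]!\,[n-l]!)$ and the correct $q$-factorial structure, but with prefactors that refuse to match \eqref{eqn:Aijmn} on the nose.
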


A Bogoliubov transform of the above form
automatically corresponds to a tau-function of the multi-component KP hierarchy
under the boson-fermion correspondence,
see \cite{kv}, and see \S \ref{sec:def nKP} for a brief review.
Thus as a corollary of the above theorem, we get:
\begin{Theorem} \label{thm:Main3}
For every $a_1, a_2, a_3 \in \bZ$,
the generating function of the (framed) topological vertex
(i.e.,
the generating function of the open Gromov-Witten invariants of $\mathbb{C}^3$):
\be
Z^{(a_1,a_2,a_3)}(\bm t^1,\bm t^2,\bm t^3)
= \sum_{\mu^1, \mu^2, \mu^3} W^{(a_1,a_2,a_3)}_{\mu^1,\mu^2,\mu^3}
\cdot s_{\mu^1}(\bm t^1)  s_{\mu^2}(\bm t^2)  s_{\mu^3}(\bm t^3)
\ee
satisfies the $3$-component KP hierarchy.
\end{Theorem}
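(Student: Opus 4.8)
\emph{Proof plan.} The plan is to deduce the statement from Theorem \ref{thm:Main2} together with the standard dictionary between Bogoliubov transforms of the fermionic vacuum and tau-functions of the multi-component KP hierarchy. By Theorem \ref{thm:Main2}, for a fixed framing $\bm a = (a_1,a_2,a_3)$ the framed topological vertex equals the matrix coefficient
\be
W^{(a_1,a_2,a_3)}_{\mu^1,\mu^2,\mu^3}(q) = \langle \mu^1,\mu^2,\mu^3| \, G \, |0\rangle\otimes|0\rangle\otimes|0\rangle,
\ee
where $G = \exp\Big( \sum_{i,j=1,2,3}\sum_{m,n\geq 0} A_{mn}^{ij}(q;\bm a)\,\psi^{i}_{-m-\half}\psi^{j*}_{-n-\half}\Big)$ is the Bogoliubov transform whose coefficients are given in \eqref{eqn:def Aqa}. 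Thus the whole content of Theorem \ref{thm:Main3} is to pass from this fermionic representation of the coefficients to a statement about the bosonic generating series $Z^{(a_1,a_2,a_3)}$.

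First I would interpret the generating series as the bosonization of the single fermionic state $G\,|0\rangle\otimes|0\rangle\otimes|0\rangle$. Under the $3$-component boson-fermion correspondence $\sigma\colon \cF\otimes\cF\otimes\cF \to \bC[\bm t^1,\bm t^2,\bm t^3]$ (restricted to the charge $(0,0,0)$ component), the basis vector $|\mu^1,\mu^2,\mu^3\rangle$ is sent to the product of Schur polynomials $s_{\mu^1}(\bm t^1)\, s_{\mu^2}(\bm t^2)\, s_{\mu^3}(\bm t^3)$; see \S \ref{sec:def nKP}. Expanding $G\,|0\rangle^{\otimes 3}$ in the basis $\{|\mu^1,\mu^2,\mu^3\rangle\}$ and reading off the coefficients via the fermionic pairing therefore gives exactly
\be
\sigma\big(G\,|0\rangle\otimes|0\rangle\otimes|0\rangle\big) = \sum_{\mu^1,\mu^2,\mu^3} W^{(a_1,a_2,a_3)}_{\mu^1,\mu^2,\mu^3}\, s_{\mu^1}(\bm t^1)\, s_{\mu^2}(\bm t^2)\, s_{\mu^3}(\bm t^3) = Z^{(a_1,a_2,a_3)}.
\ee
So it suffices to show that this fermionic state bosonizes to a $3$-component KP tau-function.

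Next I would invoke the general characterization of tau-functions in the fermionic picture: a state lies in the (multi-component) KP Grassmannian — equivalently, its bosonization satisfies the Hirota bilinear equations of the $3$-component KP hierarchy — precisely when it is obtained from the vacuum by the action of an element of the associated infinite-dimensional Clifford/general linear group. Since $G$ is the exponential of a charge-preserving quadratic expression in the fermionic operators (each summand $\psi^{i}_{-m-\half}\psi^{j*}_{-n-\half}$ creates a particle--hole pair and hence fixes the total charge), it is such a group element, so $G\,|0\rangle^{\otimes 3}$ automatically satisfies the bilinear identity; this is the content of the statement recalled after Theorem \ref{thm:Main2} and of \cite{kv, djm}. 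Combining this with the identification in the previous paragraph shows that $Z^{(a_1,a_2,a_3)}$ is a tau-function of the $3$-component KP hierarchy, which is the assertion.

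The step I expect to be the only genuine obstacle is checking that $G$ is admissible as a group element, i.e.\ that the doubly-infinite sum in the exponent defines a well-defined operator whose action on the vacuum produces a convergent element of the Fock space (in the appropriate formal/topological sense), so that the abstract equivalence genuinely applies to the explicit coefficients $A_{mn}^{ij}(q;\bm a)$ of \eqref{eqn:def Aqa}. The hard part will thus be this admissibility, which I would expect to follow from the factorial decay of the coefficients in \eqref{eqn:Aijmn}; once it is in place, the remaining reasoning is the standard fermionic derivation of KP integrability and requires no further computation.
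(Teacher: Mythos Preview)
Your proposal is correct and follows essentially the same approach as the paper: the paper deduces Theorem~\ref{thm:Main3} directly as a corollary of Theorem~\ref{thm:Main2}, invoking (via \cite{kv} and the review in \S\ref{sec:def nKP}) the standard fact that a Bogoliubov transform of the vacuum bosonizes to a multi-component KP tau-function. The admissibility concern you raise is not addressed separately in the paper, which simply treats the passage as formal and refers to the cited literature.
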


As introduced in the previous paragraphs,
the (framed) topological vertex is the building block of
Aganagic-Klemm-Mari\~no-Vafa's algorithm to compute the open Gromov-Witten invariants of
a general toric Calabi-Yau threefold \cite{akmv,lllz,moop}.
Now by combining the above Theorems with the fact that
gluing of topological vertices preserves the multi-component KP integrability \cite{dz1},
we conclude that
the generating series of the open Gromov-Witten invariants of an arbitrary smooth toric Calabi-Yau threefold
satisfies the multi-component KP hierarchy (in the sense of Theorem \ref{thm:provided adkmv}).

We arrange the rest of the paper in the following fashion.
First we recall some preliminary facts and notations on the boson-fermion correspondence
in \S \ref{sec-Prelim}.
The framed topological vertex,
the ADKMV Conjecture,
and the framed ADKMV Conjecture will be recalled in \S \ref{sec:Conjecture}.
At the end of \S \ref{sec:Conjecture}
we also give a brief sketch of our proof of the (framed) ADKMV Conjecture.
The first three steps are carried out in \S \ref{sec-ferm-topovert}.
In that section we first introduce the boson-fermionic field assignment $(\mu, a) \mapsto \Psi^{(a)}_\mu(q)$,
and use it to define a combinatorial (framed) topological vertex $C^{(a_1,a_2,a_3)}_{\mu^1, \mu^2, \mu^3}(q)$.
Next we derive a determinantal formula for  $C^{(a_1,a_2,a_3)}_{\mu^1, \mu^2, \mu^3}(q)$,
and then prove Theorem \ref{thm:Main1} by showing the equivalence between this combinatorial topological vertex
and the original topological vertex $W_{\mu^1,\mu^2,\mu^3}^{(a_1,a_2,a_3)} (q)$.
Finally we prove Theorem \ref{thm:Main2} in \S \ref{sec:ADKMV-proof} in two steps.
First we derive a determinantal formula for the Bogoliubov transform
in the right-hand side of the (framed) ADKMV Conjecture,
and then we show that this determinantal formula actually matches with
the determinantal formula for $C^{(a_1,a_2,a_3)}_{\mu^1, \mu^2, \mu^3}(q)$.

\section{Preliminaries of Boson-Fermion Correspondence}
\label{sec-Prelim}

In this section,
we give a brief review of the preliminaries we need,
including some basic knowledge of symmetric functions,
free fermions, vertex operators,
and boson-fermion correspondence.

\subsection{Schur functions and skew Schur functions}

First we recall the definitions of Schur functions and skew Schur functions.
We use \cite[\S I]{mac} for references.

A partition
$\lambda=(\lambda_1,\lambda_2,\cdots,\lambda_{l(\lambda)})$ of the integer $n\geq 0$
is a sequence of integers $\lambda_1\geq \cdots\geq \lambda_{l(\lambda)}> 0$,
satisfying $|\lambda|=\lambda_1+\cdots+\lambda_l = n$.
The number $l(\lambda)$ is called the length of this $\lambda$.
In particular,
the empty partition $\lambda = (\emptyset )$ is a partition of $0$.
There is a one-to-one correspondence between the set of partitions of integers
and the set of Young diagrams,
and the Young diagram corresponding to $\lambda=(\lambda_1,\cdots,\lambda_{l(\lambda)})$
consists of $|\lambda|$ boxes such that there are exactly $\lambda_i$ boxes in the $i$-th row.

Let $\lambda=(\lambda_1,\cdots,\lambda_{l(\lambda)})$ be a partition,
then its transpose $\lambda^t=(\lambda_1^t,\cdots,\lambda_m^t)$ is the partition defined by
$m=\lambda_1$ and $\lambda_j^t = |\{i| \lambda_i\geq j\}|$.
The Young diagram corresponding to $\lambda^t$
is obtained by flipping the Young diagram corresponding to $\lambda$ along the diagonal.
It is clear that $(\lambda^t)^t =\lambda$ for every partition $\lambda$.
The Frobenius notation of a partition $\lambda$ is defined to be:
\begin{equation*}
\lambda= (m_1,m_2,\cdots,m_k | n_1,n_2,\cdots,n_k),
\end{equation*}
where $k$ is the number of boxes on the diagonal of the corresponding Young diagram,
and $m_i = \lambda_i - i$, $n_i=\lambda_i^t - i$ for $i=1,2,\cdots,k$.

Now we recall the definition of the Schur functions $s_\lambda$ indexed by a partition $\lambda$.
First consider the partition $\lambda=(m|n) = (m+1,1^n)$ whose Young diagram is a hook,
and in this case the Schur function $s_{(m|n)}$ is defined by:
\begin{equation*}
s_{(m|n)}= h_{m+1}e_n - h_{m+2}e_{n-1} + \cdots
+ (-1)^n h_{m+n+1},
\end{equation*}
where $h_n$ and $e_n$ are the
complete symmetric function and elementary symmetric function of degree $n$ respectively.
For a general $\lambda$
whose Frobenius notation is given by $(m_1,\cdots,m_k|n_1,\cdots,n_k)$,
the Schur function $s_\lambda$ is defined by:
\begin{equation*}
s_\lambda  = \det (s_{(m_i|n_j)} )_{1\leq i,j\leq k}.
\end{equation*}
There is also an equivalent definition for $s_\lambda$:
\be
s_\lambda = \det(h_{\lambda_i-i+j})_{1\leq i,j\leq n}
= \det(e_{\lambda_i^t-i+j})_{1\leq i,j\leq m},
\ee
for $n\geq l(\lambda)$ and $m\geq l(\lambda^t)$.
The Schur function indexed by the empty partition is defined to be $s_{(\emptyset)} = 1$.
Denote by $\Lambda$ the space of all symmetric functions,
then Schur functions $\{s_\lambda\}_{\lambda}$ form a basis of  $\Lambda$.
Let $\{s_\lambda\}_{\lambda}$ be an orthonormal basis,
and in this way one obtains an inner product $(\cdot,\cdot)$ on $\Lambda$.

Now let $\lambda,\mu$ be two partitions,
then the skew Schur function $s_{\lambda/\mu} \in \Lambda$ is defined by the following property:
\be
(s_{\lambda/\mu},s_\nu ) = (s_\lambda, s_\mu s_\nu),
\qquad \forall \text{ partition $\nu$}.
\ee
Or equivalently,
\begin{equation*}
s_{\lambda/\mu} = \sum_\nu c_{\mu\nu}^\lambda s_\nu
\end{equation*}
where $\{c_{\mu\nu}^\lambda\}$ are the Littlewood-Richardson coefficients:
\begin{equation*}
s_\mu s_\nu = \sum_{\lambda} c_{\mu\nu}^\lambda s_\lambda,
\end{equation*}
One has $s_{\lambda/\mu} = 0$ unless $\mu \subset \lambda$
(i.e., $\mu_i\leq \lambda_i$ for every $i$).
Skew Schur functions can also be represented as determinants
(cf. \cite[\S I.(5.4)]{mac}):
\begin{equation*}
s_{\lambda / \mu} = \det(h_{\lambda_i -\mu_j -i+j})_{1\leq i,j\leq n}
= \det(e_{\lambda_i^t -\mu_j^t -i+j})_{1\leq i,j\leq m} ,
\end{equation*}
for $n\geq l(\lambda)$ and $m\geq l(\lambda^t)$.
From this determinantal formula one knows that:
\be
\label{eq-skewS-length1}
s_{(m)/(n)} = h_{m-n} = s_{(m-n)},
\qquad
s_{(1^m)/(1^n)} = e_{m-n} = s_{(1^{m-n})},
\ee
for $m\geq n\geq 0$.

Now we consider a particular specialization of symmetric functions.
Let $q$ be a formal variable and $\rho = (-\half, -\frac{3}{2},-\frac{5}{2},\cdots)$,
and denote by $q^\rho$ the sequence $q^\rho = (q^{-1/2},q^{-3/2}, q^{-5/2},\cdots)$,
then it is easy to see that:
\begin{equation}\label{eqn:p_n at q^rho}
p_n (q^\rho) = \frac{1}{q^{n/2} - q^{-n/2}} = \frac{1}{[n]},
\end{equation}
where $p_n (q^\rho)$ means taking the evaluation $x_n = q^{-(2n-1)/2}$
in the Newton symmetric function $p_n (x_1,x_2,\cdots)$,
and we use the notation
\begin{equation*}
[n] = q^{n/2} - q^{-n/2}.
\end{equation*}
The following specialization of Schur functions will be useful
(\cite[p. 45, Example 2]{mac}, see also \cite[\S 4]{zhou4}):
\be
\label{eq-eval-schur-rho}
s_\mu (q^\rho) = q^{\kappa_\mu /4}\cdot \frac{1}{\prod_{e\in \mu} [h(e)]},
\ee
where for a partition $\mu = (m_1,\cdots, m_k|n_1,\cdots, n_k)$,
the number $\kappa_\mu$ is:
\be
\label{eq-def-kappamu}
\kappa_\mu = \sum_{i=1}^{l(\mu)} \mu_i(\mu_i -2i +1)
=\sum_{i=1}^k (m_i+\half)^2 - \sum_{i=1}^k (n_i+\half)^2,
\ee
in particular $\kappa_{\emptyset} =0$;
and $e\in \mu$ runs over all boxes in the Young diagram corresponding to $\mu$,
where $h(e)$ denotes the hook length of the box $e$:
\begin{equation*}
h(e) = (\lambda_i - i) + (\lambda_j^t -j) +1,
\end{equation*}
if $e$ is in the $i$-th row and $j$-th column of the Young diagram.

\subsection{Fermionic Fock space}

In this subsection we recall the free fermions and
the semi-infinite wedge construction of the fermionic Fock space.
See \cite{djm, kac, sa}.

Let $\bm{a}=(a_1,a_2,\cdots)$ be a sequence of half-integers $ a_i \in \bZ+\half$ such that $a_1<a_2<a_3<\cdots$.
We say $\bm{a}$ is admissible if:
\begin{equation*}
\big|(\bZ_{\geq 0}+\half)-\{a_1,a_2\cdots\}\big|<\infty,
\qquad
\big|\{a_1,a_2\cdots\}-(\bZ_{\geq 0}+\half)\big|<\infty.
\end{equation*}
Given an admissible $\bm{a}$,
one can associate a semi-infinite wedge product $|\bm a\rangle$:
\be
| \bm a\rangle =
z^{a_1} \wedge z^{a_2} \wedge z^{a_3} \wedge \cdots
\ee
The fermionic Fock space $\cF$ is the infinite dimensional vector space of
all formal (infinite) summations of the following form:
\begin{equation*}
\sum_{\bm a:\text{ admissible}} c_{\bm a} |\bm a\rangle.
\end{equation*}

Let $\bm a$ be admissible,
then the charge of the vector $|\bm a\rangle \in \cF$ is defined to be:
\begin{equation*}
\text{charge}(|\bm a\rangle)=
\big|\{a_1,a_2\cdots\}-(\bZ_{\geq 0}+\half)\big|
-\big|(\bZ_{\geq 0}+\half)-\{a_1,a_2\cdots\}\big|,
\end{equation*}
and this gives a decomposition of the fermionic Fock space:
\be
\cF=\bigoplus_{n\in \bZ} \cF^{(n)},
\ee
where $\cF^{(n)}$ is spanned by vectors of charge $n$.
We will denote:
\be
|n\rangle = z^{n+\half}\wedge z^{n+\frac{3}{2}}
\wedge z^{n+\frac{5}{2}} \wedge \cdots \in \cF^{(-n)}.
\ee
The vector $|0\rangle = z^{\half}\wedge z^{\frac{3}{2}}
\wedge z^{\frac{5}{2}} \wedge \cdots \in \cF^{(0)}$
is called the fermionic vacuum vector.

The subspace $\cF^{(0)} \subset \cF$ has a natural basis indexed by partitions of integers.
Let $\mu=(\mu_1,\mu_2,\cdots)$ be a partition where
$\mu_1 \geq \mu_2 \geq \cdots\geq \mu_l >\mu_{l+1}=\mu_{l+2}=\cdots=0$,
and denote:
\be
\label{eq-basisF-mu}
|\mu\rangle =
z^{\frac{1}{2}-\mu_1}\wedge z^{\frac{3}{2}-\mu_2}\wedge
z^{\frac{5}{2}-\mu_3}\wedge \cdots \in\cF^{(0)},
\ee
then $\{|\mu\rangle\}_\mu$ form a basis for $\cF^{(0)}$.
Here the empty partition $(0,0,\cdots)$ of $0\in \bZ$
corresponds to the vacuum vector $|0\rangle\in\cF^{(0)}$.

The free fermions $\psi_r, \psi_r^*$ ($r\in \bZ+\half$) are a family of operators satisfying the following anti-commutation relations:
\be
\label{eq-anticomm-psi}
[\psi_r,\psi_s]_+=0,
\qquad
[\psi_r^*,\psi_s^*]_+=0,
\qquad
[\psi_r,\psi_s^*]_+= \delta_{r+s,0}\cdot \Id,
\ee
where the bracket is defined by $[\phi,\psi]_+ =\phi\psi+\psi\phi$.
These fermions generate a Clifford algebra.
The Clifford algebra acts on the fermionic Fock space $\cF$ by:
\ben
\psi_r |\bm a\rangle = z^r \wedge |\bm a\rangle,
\qquad \forall r\in \bZ+\half,
\een
and
\ben
\psi_r^* | \bm a\rangle=
\begin{cases}
(-1)^{k+1} \cdot z^{a_1}\wedge z^{a_2}\wedge \cdots \wedge \widehat{z^{a_k}} \wedge \cdots,
&\text{if $a_k = -r$ for some $k$;}\\
0, &\text{otherwise.}
\end{cases}
\een
It is clear that $\{\psi_r\}$ all have charge $1$,
and $\{\psi_r^*\}$ all have charge $-1$.
The operators $\{\psi_r,\psi_r^*\}_{r<0}$ are called the fermionic creators,
and $\{\psi_r,\psi_r^*\}_{r>0}$ are called the fermionic annihilators.
One can easily check that:
\be
\psi_r |0\rangle=0,
\qquad
\psi_r^* |0\rangle=0,
\qquad \forall r>0.
\ee
Moreover,
every element of the form $|\mu\rangle$ (where $\mu$ is a partition)
can be obtained by applying fermionic creators successively to the vacuum $|0\rangle$:
\be
\label{eq-ferm-basismu}
|\mu\rangle=(-1)^{n_1+\cdots+n_k}\cdot
\psi_{-m_1-\half} \psi_{-n_1-\half}^* \cdots
\psi_{-m_k-\half} \psi_{-n_k-\half}^* |0\rangle,
\ee
if the Frobenius notation for $\mu=(\mu_1,\mu_2,\cdots)$ is
$\mu=(m_1,\cdots, m_k | n_1,\cdots,n_k)$.

One can define an inner product $(\cdot,\cdot)$ on $\cF$ by taking
$\{|\bm a\rangle | \text{$\bm a$ is admissible}\}$ to be an orthonormal basis,
and then $\{|\mu  \rangle \}_\mu$ is an orthonormal basis for $\cF^{(0)}$.
Given two admissible sequences $\bm a$ and $\bm b$,
we denote by $\langle \bm b | \bm a\rangle =(|\bm a\rangle,|\bm b\rangle)$
the inner product of the two vectors $|\bm a\rangle , |\bm b\rangle \in \cF$.
It is clear that
$\psi_r$ and $\psi_{-r}^*$ are adjoint to each other with respect to this inner product.

Let $A$ be an operator on the Fock space $\cF$,
and we will denote by
\begin{equation*}
\langle A \rangle = \langle 0| A |0\rangle
\end{equation*}
the vacuum expectation value of $A$.
Let $w_1, w_2,\cdots,w_k$ be some linear combinations of the free fermions $\{\psi_r,\psi_r^*\}$,
then the vacuum expectation value of the product $w_1w_2\cdots w_k$ can be computed using Wick's Theorem
(see e.g. \cite[\S 4.5]{djm}):
\be
\label{eq-Wickthm}
\langle  w_1 w_2 \cdots w_k \rangle
=\begin{cases}
0, & \text{ if $k$ is odd;}\\
\sum\limits_\sigma (-1)^\sigma  \cdot \prod\limits_{j=1}^{k/2} \langle w_{\sigma(2j-1)} w_{\sigma(2j)} \rangle,
& \text{ if $k$ is even,}\\
\end{cases}
\ee
where the summation is over permutations $\sigma \in S_{k}$ such that
$\sigma(2j-1)<\sigma(2j)$ for every $j$ and
$\sigma(1) < \sigma(3) <\sigma(5) <\cdots <\sigma(k-1)$;
and $(-1)^\sigma = \pm 1$ denotes the sign of this permutation.
And the vacuum expectation value of the product of two free fermions are given by:
\be
\label{eq-quadVEV}
\begin{split}
&\langle \psi_r^* \psi_s^* \rangle = \langle \psi_r \psi_s \rangle =0,
\qquad \forall r,s\in \bZ+\half;\\
&\langle \psi_r^* \psi_s \rangle = \langle \psi_r \psi_s^* \rangle
=\begin{cases}
1,&\text{ if $r=-s>0;$}\\
0,&\text{ otherwise.}
\end{cases}
\end{split}
\ee

One can also rewrite Wick's Theorem via determinant.
Let $\varphi_1,\varphi_2,\cdots,\varphi_n$ be
some linear summations of $\{\psi_r\}_{r\in \bZ+\half}$,
and let $\varphi_1^*,\varphi_2^*,\cdots,\varphi_n^*$ be
some linear summations of $\{\psi_r^*\}_{r\in \bZ+\half}$.
Then one has:
\be
\label{eq-Wick-det-1}
\langle \varphi_1 \varphi_1^* \varphi_2 \varphi_2^* \cdots \varphi_n \varphi_n^* \rangle
= \sum_{\bm p} (-1)^{\bm p} \cdot M(\bm p)_1 M(\bm p)_2 \cdots M(\bm p)_n,
\ee
where the sequence $\bm p = (p_1,\cdots ,p_n)$ runs over permutations of $(1, 2, \cdots , n)$,
and $M(\bm p)_i$ is:
\begin{equation*}
M(\bm p)_i = \begin{cases}
\langle \varphi_i \varphi_{p_i}^* \rangle, &\text{ if $p_i\geq i$;}\\
- \langle  \varphi_{p_i}^*  \varphi_i \rangle, &\text{ if $p_i < i$.}\\
\end{cases}
\end{equation*}
Clearly the right-hand side of \eqref{eq-Wick-det-1} is a determinant,
and thus one concludes that:
\be
\label{eq-Wick-det-2}
\langle \varphi_1 \varphi_1^* \varphi_2 \varphi_2^* \cdots \varphi_n \varphi_n^* \rangle
= \det (M_{ij})_{1\leq i,j\leq n},
\ee
where the entries of the $n\times n$ matrix $M$ is:
\begin{equation*}
M_{ij} = \begin{cases}
\langle \varphi_i \varphi_{j}^* \rangle, &\text{ if $j\geq i$;}\\
- \langle  \varphi_{j}^*  \varphi_i \rangle, &\text{ if $j < i$.}\\
\end{cases}
\end{equation*}

\subsection{Cut-and-join operator}
In this subsection,
we recall the cut-and-join operator on the fermionic Fock space \cite{Ok1, gjv}.

The cut-and-join operator $K$ is defined to be:
\be
\label{eq-def-C&Jopr}
K = \sum_{s\in \bZ+\half} s^2
:\psi_s \psi_{-s}^*:.
\ee
Then $K$ is self-adjoint.
Moreover, one can easily check that:
\be
\label{eq-comm-Kpsi}
[K,\psi_r]= r^2 \psi_r,
\qquad
[K,\psi_r^*] = - r^2 \psi_r^*.
\ee
Then by the Baker-Campbell-Hausdorff formula
\be
\label{eq-BCH}
e^{-X} Y e^X = Y -[X,Y] + \frac{1}{2!}[X,[X,Y]] + \frac{1}{3!} [X, [X,[X,Y]]] -\cdots,
\ee
one has:
\be
\label{eq-comm-expKpsi}
e^{-K} \psi_r e^K = e^{-r^2}\psi_r,
\qquad
e^{-K} \psi_r^* e^K = e^{r^2}\psi_r^*.
\ee
It is well-known that
the basis vectors $|\mu\rangle$ are eigenvectors of the cut-and-join operator $K$:
\be
\label{eq-eigen-C&J}
K |\mu \rangle  =  \kappa_\mu |\mu\rangle,
\ee
where the corresponding eigenvalue $\kappa_\mu$ is the number \eqref{eq-def-kappamu}.
This fact can be easily seen by using \eqref{eq-ferm-basismu} and the commutation relations \eqref{eq-comm-Kpsi}.

\subsection{Boson-fermion correspondence}
\label{sec-b-f}

In this subsection,
we recall the bosonic Fock space and boson-fermion correspondence.
See \cite{djm} for details.

Let $\alpha_n$ be the following operator on $\cF$:
\be
\alpha_n = \sum_{s\in \bZ+\half} :\psi_{-s} \psi_{s+n}^*:.
\qquad n\in \bZ,
\ee
Here $: \psi_{-s} \psi_{s+n}^* :$ denotes the normal-ordered product of fermions:
\begin{equation*}
:\phi_{r_1}\phi_{r_2}\cdots \phi_{r_n}:
=(-1)^\sigma \phi_{r_{\sigma(1)}}\phi_{r_{\sigma(2)}}\cdots\phi_{r_{\sigma(n)}},
\end{equation*}
where $\phi_k$ is either $\psi_k$ or $\psi_k^*$,
and $\sigma\in S_n$ is a permutation such that $r_{\sigma(1)}\leq\cdots\leq r_{\sigma(n)}$.
The operators $\{\alpha_n\}_{n\in \bZ}$ are called the bosons.
Denote the generating series of free fermions by:
\be
\label{eq-gen-fermi}
\psi(\xi)= \sum_{s\in\bZ+\half} \psi_s \xi^{-s-\half},
\qquad
\psi^*(\xi)= \sum_{s\in\bZ+\half} \psi_s^* \xi^{-s-\half},
\ee
then the generating series of the bosons is:
\be
\label{eq-gen-bos}
\alpha(\xi)= \sum_{n\in\bZ} \alpha_n \xi^{-n-1} :\psi(\xi)\psi^*(\xi): .
\ee
These bosons  satisfy the following commutation relations:
\be
\label{eq-comm-boson}
[\alpha_m,\alpha_n]= m\delta_{m+n,0} \cdot \Id,
\ee
i.e.,
they generate a Heisenberg algebra.
One can also check that:
\be
\label{eq-comm-alphapsi}
[\alpha_n, \psi_r] = \psi_{n+r},
\qquad
[\alpha_n, \psi_r^*] = -\psi_{n+r}^*.
\ee
The operator $\alpha_0$ is called the charge operator, and will be denoted by $C$:
\be
C = \sum_{s\in \bZ+\half} :\psi_{-s} \psi_{s}^*:.
\ee
Then $\cF^{(n)}$ is the eigenspace of $C$
with respect to the eigenvalue $n\in \bZ$.

The bosonic Fock space $\cB$ is defined by $\cB:=\Lambda[[w,w^{-1}]]$,
where $\Lambda$ is the space of symmetric functions in some formal variables
$\bm x=(x_1,x_2,\cdots)$,
and $w$ is a formal variable.
The boson-fermion correspondence is a linear isomorphism $\Phi:\cF \to \cB$ of vector spaces,
given by (see e.g. \cite[\S 5]{djm}):
\be
\Phi:\quad
|\bm a\rangle \in \cF^{(m)}
\quad\mapsto\quad
w^m\cdot \langle m |
e^{\sum_{n=1}^\infty \frac{p_n}{n} \alpha_n}
| \bm a \rangle,
\ee
where $p_n \in \Lambda$ is the Newton symmetric function of degree $n$.
In particular,
by restricting to $\cF^{(0)}$ one obtains an isomorphism:
\be
\label{eq-b-f-corresp}
\cF^{(0)}\to \Lambda,
\qquad
|\mu\rangle \mapsto s_\mu =
\langle 0 | e^{\sum_{n=1}^\infty \frac{p_n}{n} \alpha_n} | \mu \rangle,
\ee
where $|\mu\rangle \in \cF^{(0)}$ is the basis vector \eqref{eq-basisF-mu},
and  $s_\mu$ is the Schur function indexed by $\mu$.

\subsection{Properties of the vertex operators $\Gamma_\pm$}

Let $\bm t =(t_1,t_2,t_3,\cdots)$ be a sequence of formal variables,
and define $\Gamma_\pm$ to be the following operators on $\cF$:
\be
\label{e-def-gammat}
\Gamma_\pm (\bm t) = \exp\Big( \sum_{n=1}^\infty t_n\alpha_{\pm n} \Big).
\ee
Let $z$ be a formal variable,
and let $\{z\}$ be the sequence $\{z\} = (z,\frac{z^2}{2},\frac{z^3}{3},\cdots)$.
Then:
\be
\label{def-vert-Gamma}
\Gamma_\pm ( \{z\} ) = \exp\Big( \sum_{n=1}^\infty \frac{z^n}{n} \alpha_{\pm n} \Big).
\ee
Since $[\alpha_m,\alpha_n]= m\delta_{m+n,0} \cdot \Id$ is central,
one has:
\be
\label{eq-comm-gammapm}
\begin{split}
\Gamma_+ (\{z\}) \Gamma_- (\{w\}) = &
\exp\Big(\big[\sum_{n\geq 1}\frac{z^n}{n}\alpha_n, \sum_{m\geq 1}\frac{w^m}{m}\alpha_{-m} \big] \Big)
\Gamma_- (\{w\})\Gamma_+ (\{z\})\\
=& \frac{1}{1-zw}  \Gamma_- (\{w\})\Gamma_+ (\{z\}).
\end{split}
\ee
The fermionic fields $\psi(z),\psi^*(z)$ can be recovered from $\Gamma_{\pm}$ by
(cf. \cite[Ch.14]{kac}):
\be
\label{eq-psi-inGamma}
\begin{split}
&\psi(z) = z^{C-1} R \Gamma_- (\{z\}) \Gamma_+ (-\{z^{-1}\}),\\
&\psi^*(z) = R^{-1} z^{-C} \Gamma_- (-\{z\}) \Gamma_+ (\{z^{-1}\}),
\end{split}
\ee
where $C$ is the charge operator,
and $R$ is the shift operator on $\cF$ defined by:
\begin{equation*}
R (z^{a_1} \wedge z^{a_2} \wedge z^{a_3} \wedge \cdots ) =
z^{a_1-1} \wedge z^{a_2-1} \wedge z^{a_3-1} \wedge \cdots.
\end{equation*}
(Here \eqref{eq-psi-inGamma} looks different from \cite[Thm 14.10]{kac}
since we use different notations.)

Take $t_n = \frac{p_n}{n} \in \Lambda$ for every $n\geq 1$,
then $\Gamma_- (\bm t)$ act on $\cF$ by
(cf. \cite[(A.15)]{Ok2}):
\be
\Gamma_-(\bm t) |\mu\rangle = \sum_\lambda s_{\lambda/\mu}|\lambda\rangle.
\ee
It is clear that $\Gamma_+(\bm t)$ and $\Gamma_-(\bm t)$ are adjoint to each other
with respect to the inner product $(\cdot,\cdot)$ on $\cF$,
and thus:
\begin{equation*}
(\Gamma_+(\bm t)|\mu\rangle , |\nu\rangle)
= (|\mu\rangle , \Gamma_-(\bm t)|\nu\rangle)
=(|\mu\rangle, \sum_\lambda s_{\lambda/\nu}|\nu\rangle) =s_{\mu/\nu},
\end{equation*}
and thus:
\be
\label{eq-special-schur}
\Gamma_+(\bm t) |\mu\rangle = \sum_\lambda s_{\mu/\lambda}|\lambda\rangle.
\ee
And by taking dual one obtains:
\be
\label{eq-special-schur-2}
\langle \mu | \Gamma_+(\bm t) =\sum_\lambda s_{\lambda/\mu} \langle\lambda|,
\qquad
\langle \mu | \Gamma_-(\bm t) =\sum_\lambda s_{\mu/\lambda} \langle\lambda|.
\ee

\subsection{KP hierarchy, Bogoliubov transform, and multi-component generalization}
\label{sec:def nKP}

The $N$-component fermionic Fock space is the tensor product $\cF_1\otimes \cF_2 \otimes \cdots \otimes \cF_N$,
where $\cF_1, \cF_2,\cdots,\cF_N$ are $N$ copies of the fermionic Fock space $\cF$.
One can define the action of fermions $\{\psi_r^i,\psi_r^{i*}\}$ for $r\in \bZ+\half$ and $i=1,2,\cdots,N$
on the $N$-component fermionic Fock space
such that $\psi_r^i,\psi_r^{i*}$ act on the $i$-th component $\cF_i$
in the same way as the action of $\psi_r,\psi_r^{*}$ on $\cF$.
These fermions satisfy the anti-commutation relations:
\be
\label{eqn:def multi fermions}
\begin{split}
[\psi_r^i,\psi_s^j]  =  [\psi_r^{i*},\psi_s^{j*}] =0,
\qquad
[\psi_r^i,\psi_s^{j*}] = \delta_{r+s,0} \cdot \delta_{i,j} \cdot \Id.
\end{split}
\ee
for every $r,s\in\bZ+\half$ and $1\leq i,j\leq N$.

A Bogoliubov transform (of the vacuum vector) in the Fock space $\cF_1\otimes \cdots \otimes \cF_N$
is a vector of the following form:
\begin{equation}\label{eqn:def V}
|V\rangle= \exp \Big( \sum_{i,j=1}^N \sum_{m,n\geq 0} A_{mn}^{ij} \psi_{-m-\half}^i \psi_{-n-\half}^{j*} \Big)
|0\rangle\otimes \cdots \otimes |0\rangle.
\end{equation}
Notice that the operators in the above equation only involve fermionic creators.
The Bogoliubov transform of the vacuum vector is deeply related to the KP and multi-component KP hierarchy.
Roughly speaking, under the boson-fermion correspondence,
such a Bogoliubov transform corresponds to a tau-function of the $N$-component KP hierarchy.
For details we recommend the references \cite{djm,kv}.

Below we review a precise description of the multi-component KP hierarchy.
First for the case of $N=1$,
we use the following definition of a tau-function of the KP hierarchy
in terms of the Hirota bilinear relation:
\begin{Definition}
	\label{def: KP}
	The formal power series $\tau(\mathbf{t})\in\mathbb{C}[\![t_1,t_2,\cdots]\!]$ is a tau-function of the KP hierarchy if
the following Hirota bilinear relation holds:
	\begin{align*}
		\oint_{\infty} dz
		\quad e^{\xi(\mathbf{t}-\mathbf{t}',z)}
		\tau(\mathbf{t}-[z^{-1}])
		\tau(\mathbf{t}'+[z^{-1}])
		=0,
	\end{align*}
where $\mathbf{t}=(t_1,t_2,\cdots)$, and
	\begin{align*}
		\mathbf{t}\pm[z^{-1}]
		=\{t_1&\pm z^{-1}, t_2\pm \frac{z^{-2}}{2},
		t_3\pm \frac{z^{-3}}{3},\cdots\},
\qquad
		\xi(\mathbf{t},z)=\sum_{k\geq0}t_k z^k.
	\end{align*}
\end{Definition}
Denote $u=2\frac{\partial^2}{\partial t_1^2} \log \tau(\mathbf{t})$,
then the first constraint of tau-functions of the KP hierarchy in the above definition is
the so-called KP equation:
\begin{align*}
	\frac{3}{4}\frac{\partial^2 u}{\partial t_2^2}
	=\frac{\partial}{\partial t_1}
	\Big(\frac{\partial u}{\partial t_3}
	-\frac{3}{2}u\frac{\partial u}{\partial t_1}
	-\frac{1}{4}\frac{\partial^3 u}{\partial t_1^3}\Big),
\end{align*}
which is a generalization of the famous KdV equation.
Now let $|V_1\rangle$ be a Bogoliubov transform of the form:
\begin{equation}
	\label{eqn:def v1}
	|V_1\rangle= \exp \Big(\sum_{m,n\geq 0} A_{mn} \psi_{-m-\half} \psi_{-n-\half}^{*} \Big)
	|0\rangle
\end{equation}
for some coefficients $A_{mn}\in\mathbb{C}$ where $m,n\in\mathbb{Z}_{\geq0}$,
and it is well-known that
the formal series
\begin{align}
	\label{eqn:def tauV1}
	\tau_{V_1}(\mathbf{t})
	=\langle 0| \Gamma_+(\mathbf{t}) |V_1\rangle
\end{align}
is a tau-function of the KP hierarchy
where the operator $\Gamma_+(\mathbf{t})$ is defined by \eqref{e-def-gammat},
see e.g. \cite[\S 9.2]{djm}.
The correspondence between the vector $|V_1\rangle \in \cF$ and the formal power series $\tau_{V_1}(\mathbf{t})$
via \eqref{eqn:def tauV1} is exactly the boson-fermion correspondence reviewed in \S \ref{sec-b-f}.

The above discussions can be directly generalized to the case of a general $N$
when one considers the $N$-component fermions defined by \eqref{eqn:def multi fermions}
and the $N$-component boson-fermion correspondence.
We omit the details and recommend the paper of Kac and van de Leur \cite{kv} for details.
In particular,
a formal power series $\tau(\mathbf{t}^1,\mathbf{t}^2,\mathbf{t}^3)\in\mathbb{C}[\![t^j_k|j=1,2,3; k=1,2,\cdots]\!]$ is a tau-function of the $3$-component KP hierarchy if (see \cite[\S 2.3]{kv})
\begin{align}\label{eqn:multi-bi}
	\oint_{\infty} dz
	\quad\sum_{j=1}^3
	e^{\sum_{k\geq0}\big(t_k^j-s_k^{j}\big) z^k}
	\tau(\mathbf{t}^j-[z^{-1}])
	\tau(\mathbf{s}^{j}+[z^{-1}])
	=0,
\end{align}
where $\mathbf{t}^j=(t^j_k)_{k=1,2,\cdots}$ and
$\mathbf{s}^j=(s^j_k)_{k=1,2,\cdots}$ for $j=1,2,3$.
Then,
under the $N$-component boson-fermion correspondence,
the formal power series $\tau_V(\mathbf{t}^1,\mathbf{t}^2,\mathbf{t}^3)$
corresponding the Bogoliubov transform  \eqref{eqn:def V} satisfies
the multi-component bilinear relation \eqref{eqn:multi-bi},
thus is a tau-function of $N$-component KP hierarchy.
Such a tau-function $\tau_V(\mathbf{t}^1,\mathbf{t}^2,\mathbf{t}^3)$ can be represented
in terms of the Schur basis as:
\begin{align*}
	\tau_V(\mathbf{t}^1,\mathbf{t}^2,\mathbf{t}^3)
	=\sum_{\mu^1, \mu^2, \mu^3} c_{\mu^1,\mu^2,\mu^3} \cdot s_{\mu^1}(\bm t^1)  s_{\mu^2}(\bm t^2)  s_{\mu^3}(\bm t^3),
\end{align*}
where the coefficients $c_{\mu^1,\mu^2,\mu^3}$ are given by
\begin{align*}
	c_{\mu^1,\mu^2,\mu^3}
	=(|\mu^1\rangle\otimes|\mu^2\rangle\otimes|\mu^3\rangle,|V\rangle)
	=\langle \mu^1, \mu^2 , \mu^3|V\rangle.
\end{align*}

\begin{Remark}
	In \cite[\S 2.2]{kv},
Kac and van de Leur's
 definition of a tau-function of the $N$-component KP hierarchy
 is actually a family of functions,
  depending on elements of the root lattice of $sl_n(\mathbb{C})$.
  Given a Bogoliubov transform $|V\rangle$ of the form \eqref{eqn:def V},
	one actually obtains a tau-function in the sense of \cite{kv}
by applying the multi-component boson-fermion correspondence
which collects the contributions from all charges in different components.
Now in this paper,
we will not need the whole family of functions in  such a tau-function
since we are only concerned with the part that comes from the open Gromov-Witten theory.
Here we only consider the contribution of charge zero,
and in this way we only obtain a single formal power series $\tau(\mathbf{t}^1,\mathbf{t}^2,\mathbf{t}^3)\in\mathbb{C}[\![t^j_k]\!]$.
In this paper we will simply call this series a tau-function of the multi-component KP hierarchy.
\end{Remark}

\section{Framed Topological Vertex and Framed ADKMV Conjecture}
\label{sec:Conjecture}

In this section we first recall the open Gromov-Witten theory of $\mathbb{C}^3$ and its relation to the framed topological vertex.
The framed ADKMV Conjecture proposes a fermionic expression of the framed topological vertex as a Bogoliubov transform.
We claim that this conjecture is true,
and as a corollary,
the partition function of the open Gromov-Witten invariants of $\mathbb{C}^3$ is a tau-function of the $3$-component KP hierarchy.
We will give an outline of our proof in \S \ref{sec-outline}.
The details will be given in the next two sections.

\subsection{Partition function of open Gromov-Witten invariants of $\mathbb{C}^3$}

The topological vertex was first introduced by Aganagic {\em et al} \cite{akmv} from a string theoretical approach.
In the mathematical literature,
the mathematical theory of the topological vertex was developed in the setting of
 the open Gromov-Witten theory of $\mathbb{C}^3$ \cite{lllz}.
In this subsection we briefly recall the mathematical approach.

Li, Liu, Liu and the third-named author \cite{lllz} used the localization technique to study the local relative Gromov-Witten theory of toric Calabi-Yau threefolds.
By virtual localization formula,
the local relative Gromov-Witten invariants in this case can be expressed in terms of certain triple Hodge integrals.

Denote by $\overline{\mathcal{M}}_{g,n}$ the moduli spaces of stable curves of genus $g$ with $n$ marked points.
Let $\psi_i=c_1(\mathcal{L}_i)$ (where $i=1,\cdots,n$) be the first Chern class of the $i$-cotangent line bundle,
and $\lambda_j=c_j(\mathbb{E})$ (where $j=1,\cdots,g$) be the $j-$th Chern class of the Hodge bundle.
The triple Hodge integral considered in \cite{lllz} is defined as follows.
For given three partitions $\mu^1, \mu^2, \mu^3$,
and three numbers $\bm w =(w_1,w_2,w_3)$ satisfying $w_1+w_2+w_3=0$,
define
\begin{align}\label{eqn:Gg}
	\begin{split}
	G_{g,\mu^1,\mu^2,\mu^3}(\bm w) = &\frac{(-\sqrt{-1})^{l} }{\prod_{j=1}^3 |\Aut(\mu^j)|}
	\prod_{i=1}^3\prod_{j=1}^{\ell(\mu^i)}
	\frac{\prod_{a=1}^{\mu^i_j-1}(\mu^i_j w_{i+1} + a w_i) }
	{(\mu^i_j-1)!w_i^{\mu^i_j-1} }\\
	&\quad\quad\quad\cdot\int_{\Mbar_{g,l}
	} \prod_{i=1}^3\frac{\Lambda_g^\vee(w_i)w_i^{l-1} }
	{\prod_{j=1}^{l(\mu^i)}\big(w_i(w_i-\mu^i_j\psi_{d^i+j})\big)},
	\end{split}
\end{align}
where $\Lambda_g^\vee(u)=u^g-\lambda_1 u^{g-1}+\cdots + (-1)^g\lambda_g$,
and
\begin{align*}
	d^1=0,
	\quad d^2=l(\mu^1),
	\quad d^3=l(\mu^1)+l(\mu^2),
	\quad l=l(\mu^1)+l(\mu^2)+l(\mu^3).
\end{align*}
Here use the convention $w_4=w_1$.
The number of automorphisms of a partition $\mu$
is defined to be
$|\Aut(\mu)|=\prod_{i\geq1} m_i(\mu)!$,
where $m_i(\mu)=|\{j|\mu_j=i\}|$.
The generating function of disconnected three partition Hodge integrals is then defined by
\begin{align*}
	G^\bullet(\lambda,\bm w; \mathbf{p}^1, \mathbf{p}^2, \mathbf{p}^3)
	=\exp\Big(\sum_{g\geq0}
	\sum_{\mu^1,\mu^2,\mu^3}
	\lambda^{2g-2+l}
	G_{g,\mu^1,\mu^2,\mu^3}(\bm w)
	\prod_{i=1}^3 p^i_{\mu^i}\Big),
\end{align*}
where $(\mathbf{p}^1, \mathbf{p}^2, \mathbf{p}^3)$ are three families of time variables
and $p^i_{\mu^i}=\prod_{j=1}^{l(\mu^i)}p^i_{\mu^i_j}$.

The above generating function $G^\bullet(\lambda,\bm w; \mathbf{p}^1, \mathbf{p}^2, \mathbf{p}^3)$
of disconnected three partition Hodge integrals admits a Schur function expansion,
whose coefficients are essentially the topological vertex
(see Theorem 8.2 and Conjecture 8.3 in \cite{lllz} and \cite{moop, nt} for proofs).
\begin{align*}
	G^\bullet(\lambda,\bm w; \mathbf{p}^1, \mathbf{p}^2, \mathbf{p}^3)
	=\sum_{\mu^1,\mu^2,\mu^3}
	\tilde{\cW}_{\mu^1,\mu^2,\mu^3}(q)
	\prod_{i=1}^3 q^{
		\frac{1}{2}\kappa_{\mu^i}\frac{w_{i+1}}{w_i}}
	s_{\mu^i}(\mathbf{p}^i),
\end{align*}
where $q=e^{\sqrt{-1}\lambda}$.
The $\tilde{\cW}_{\mu^1,\mu^2,\mu^3}(q)$ is the  topological vertex in the mathematical literature.
The notion of the topological vertex first appear in the string theory literature
It was introduced by Aganagic, Klemm, Mari\~{n}o, and Vafa \cite{akmv} for computing the Gromov-Witten invariants of toric Calabi-Yau threefolds.
Beside the mathematical theory of the topological vertex developed in \cite{lllz},
there appeared also the DT vertex developed in \cite{moop}.
The topological vertex $\tilde{\cW}_{\mu^1,\mu^2,\mu^3}(q)$ used in \cite{lllz} looks slightly different from the original version proposed by physicists \cite{akmv}.
The equivalence of these two expressions of topological vertex was first proved in \cite{moop}  using GW/DT correspondence,
and a combinatorial proof was given in \cite{nt}.

For an arbitrary toric Calabi-Yau threefold $X$,
its toric diagram is a trivalent planar graph which can be constructed by gluing some trivalent vertices together,
and $X$ can be constructed accordingly by gluing some $\bC^3$-pieces together.
The Gromov-Witten invariants of $X$ can be obtained by certain gluing procedure of topological vertices.
In such procedure,
one needs to multiply the topological vertex by some extra factors
to get the framed topological vertex.

\subsection{Topological vertex and framed topological vertex}

Now we recall the definition of the topological vertex given in \cite{akmv}.
Let $\mu^1$, $\mu^2$, $\mu^3$ be three partitions,
then the topological vertex $W_{\mu^1,\mu^2,\mu^3} (q)$ is defined by:
\begin{equation*}
W_{\mu^1,\mu^2,\mu^3}(q) = q^{\kappa_{\mu^2}/2 + \kappa_{\mu^3}/2}
\sum_{\rho^1,\rho^3} c_{\rho^1(\rho^3)^t}^{\mu^1(\mu^3)^t}
\frac{W_{(\mu^2)^t \rho^1}(q) W_{\mu^2 (\rho^3)^t}(q)}{W_{\mu^2}(q)},
\end{equation*}
where:
\begin{equation*}
\begin{split}
& c_{\rho^1(\rho^3)^t}^{\mu^1(\mu^3)^t} = \sum_\eta
c_{\eta \rho^1}^{\mu^1} c_{\eta (\rho^3)^t}^{(\mu^3)^t},\\
& W_{\mu}(q) =q^{\kappa_\mu /4}
\prod_{1\leq i<j \leq l(\mu)} \frac{[\mu_i-\mu_j+j-i]}{[j-i]}
\prod_{i=1}^{l(\mu)} \prod_{v=1}^{\mu_i} \frac{1}{[v-i+l(\mu)]},\\
& W_{\mu,\nu} (q) = q^{|\nu|/2} W_\mu (q) \cdot s_\nu (\cE_\mu(q,t)),
\end{split}
\end{equation*}
and $s_\nu (\cE_\mu(q,t))$ is given by (see \cite[\S 5.2]{zhou5}):
\begin{equation*}
s_\nu (\cE_\mu(q,t)) = s_\nu (q^{\mu_1 -1},q^{\mu_2-2},\cdots).
\end{equation*}
The topological vertex can be rewritten as (see e.g. \cite[Prop. 4.4]{zhou4}):
\be
\label{eqn:def W as skew}
W_{\mu^1,\mu^2,\mu^3}(q)
= (-1)^{|\mu^2|} q^{\kappa_{\mu^3}/2} s_{(\mu^2)^t} (q^{-\rho})
\sum_\eta s_{\mu^1 /\eta} (q^{(\mu^2)^t + \rho}) s_{(\mu^3)^t /\eta} (q^{\mu^2 + \rho}),
\ee
where $\rho = (-\half, -\frac{3}{2}, -\frac{5}{2}, \cdots)$,
and $q^{\mu+\rho}$ denotes the sequence
\begin{equation*}
 q^{\mu+\rho} =
( q^{\mu_1 - \half}, q^{\mu_2-\frac{3}{2}}, \cdots, q^{\mu_l - l+\half},
q^{ -l-\half}, q^{-l-\frac{3}{2}},\cdots )
\end{equation*}
for a partition $\mu = (\mu_1,\mu_2,\cdots,\mu_l)$.

In general
one can also consider the following framed topological vertex with framing $(a_1,a_2,a_3) \in \bZ^3$
introduced in \cite{akmv}:
\be
W_{\mu^1,\mu^2,\mu^3}^{(a_1,a_2,a_3)}(q)
= q^{a_1\kappa_{\mu^1}/2 + a_2\kappa_{\mu^2}/2 + a_3\kappa_{\mu^3}/2} \cdot
W_{\mu^1,\mu^2,\mu^3}(q),
\ee
or more explicitly,
\be
\label{eq-framedTV-def}
\begin{split}
W_{\mu^1,\mu^2,\mu^3}^{(a_1,a_2,a_3)}(q)
=& (-1)^{|\mu^2|} \cdot
q^{a_1\kappa_{\mu^1}/2 + a_2\kappa_{\mu^2}/2 + (a_3+1)\kappa_{\mu^3}/2} \\
& \cdot s_{(\mu^2)^t} (q^{-\rho})
\sum_\eta s_{\mu^1 /\eta} (q^{(\mu^2)^t + \rho}) s_{(\mu^3)^t /\eta} (q^{\mu^2 + \rho}).
\end{split}
\ee
The framed topological vertex encodes the local relative Gromov-Witten invariants of $\bC^3$ with torus action specified by the framing $(a_1,a_2,a_3)$.
Then the original topological vertex is recovered by taking the framing $(0,0,0)$:
\begin{equation*}
W_{\mu^1,\mu^2,\mu^3}(q) = W_{\mu^1,\mu^2,\mu^3}^{(0,0,0)}(q).
\end{equation*}

\subsection{ADKMV Conjecture and framed ADKMV Conjecture}
\label{sec:adkmv}
Since it involves complicated summations over partitions
and specializations of skew Schur functions,
the above expressions of the topological vertex looks very complicated.
It makes the computations of the Gromov-Witten invariants of toric Calabi-Yau threefolds using the topological vertex very involved.
Recall that Schur functions give a basis of the bosonic Fock space,
and thus it is natural to regard the topological vertex as a vector in the $3$-component bosonic Fock space.
In \cite{akmv, adkmv},
Aganagic, Dijkgraaf, Klemm, Mari\~{n}o, and Vafa conjectured that after the boson-fermion correspondence on each of these copies,
  the topological vertex corresponds to an element in the $3$-component fermionic Fock space which can be represented as a Bogoliubov transform
whose coefficients can be written down explicitly:
\begin{Conjecture}
[ADKMV Conjecture \cite{adkmv, akmv}]
Denote:
\begin{equation*}
|\mu^1,\mu^2,\mu^3\rangle = |\mu^1\rangle \otimes |\mu^2\rangle \otimes |\mu^3\rangle
\in \cF^{(0)}\otimes\cF^{(0)}\otimes\cF^{(0)},
\end{equation*}
then the topological vertex is given by:
\be
\begin{split}
W_{\mu^1,\mu^2,\mu^3}(q) =
 \langle \mu^1,\mu^2,\mu^3|
\exp\Big( \sum_{i,j =1,2,3} \sum_{m,n\geq 0} A_{mn}^{ij}(q)
\psi_{-m-\half}^{i} \psi_{-n-\half}^{j*} \Big)
|0\rangle \otimes |0\rangle \otimes |0\rangle,
\end{split}
\ee
where $A^{ij}_{mn}(q)$ are given by equation \eqref{eqn:Aijmn}.
\end{Conjecture}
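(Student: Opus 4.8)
The plan is to establish the more general framed statement (Theorem \ref{thm:Main2}) and then recover the conjecture by setting $(a_1,a_2,a_3)=(0,0,0)$, using $A_{mn}^{ij}(q;\bm 0)=A_{mn}^{ij}(q)$. Throughout, the guiding idea is to reduce \emph{both} sides of \eqref{eqn:ADKMV} to determinants and to match them entry by entry.

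On the right-hand side I would expand the Bogoliubov transform \eqref{eqn:def V} by Wick's Theorem. Writing each bra $\langle\mu^i|$ through its Frobenius fermions via \eqref{eq-ferm-basismu}, and observing that the exponent contains only creators of the paired type $\psi^i\psi^{j*}$, every nonzero contraction joins a $\psi$ to a $\psi^*$; hence the determinant form \eqref{eq-Wick-det-2} collapses $\langle\mu^1,\mu^2,\mu^3|V\rangle$ to a single determinant $\pm\det\bigl(A^{ij}_{m^i_a,n^j_b}\bigr)$, indexed by the arm coordinates $(i,a)$ and leg coordinates $(j,b)$ of the three Frobenius data, the sign absorbing the prefactors of \eqref{eq-ferm-basismu}. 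The framing is straightforward here: $q^{a_iK/2}$ acts diagonally through \eqref{eq-comm-Kpsi} and \eqref{eq-eigen-C&J}, multiplying the $(i,j)$ block by $q^{(a_i\,m^i_a(m^i_a+1)-a_j\,n^j_b(n^j_b+1))/2}$, which is exactly the prefactor of \eqref{eqn:def Aqa} (with $\kappa_{(\mu^3)^t}=-\kappa_{\mu^3}$ accounting for the transpose).

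On the left-hand side the first move is to collapse the sum over $\eta$ in \eqref{eq-framedTV-def}. Inserting the completeness relation $\sum_\eta|\eta\rangle\langle\eta|=\Id$ on $\cF^{(0)}$ and using \eqref{eq-special-schur}, \eqref{eq-special-schur-2} gives
\[
\sum_\eta s_{\mu^1/\eta}(q^{(\mu^2)^t+\rho})\,s_{(\mu^3)^t/\eta}(q^{\mu^2+\rho})
=\langle(\mu^3)^t|\,\Gamma_-(q^{\mu^2+\rho})\,\Gamma_+(q^{(\mu^2)^t+\rho})\,|\mu^1\rangle,
\]
a matrix element living in a \emph{single} copy of $\cF$. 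The decisive step is then Theorem \ref{thm:Main1}: I would absorb the $\mu^2$-dependent specializations $q^{\mu^2+\rho}$ and $q^{(\mu^2)^t+\rho}$, the scalar $s_{(\mu^2)^t}(q^{-\rho})$, and the framing powers into one operator $\Psi^{(a_2)}_{\mu^2}(q)$ built from finitely many of the fields $\tGa^{(a)}_\pm(z)$. This rests on a principal-specialization identity expressing the vertex operators evaluated at the points $q^{\mu+\rho}$ through fermion insertions $\psi,\psi^*$ placed at positions read off from the Frobenius data of $\mu$ — the promised generalization of the boson-fermion correspondence. Granting Theorem \ref{thm:Main1}, a second application of \eqref{eq-Wick-det-2} turns the resulting single-Fock-space vacuum expectation value into a determinant of the same size $k_1+k_2+k_3$, whose entries are explicit fermionic two-point contractions.

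It then remains to identify the two determinants, for which it suffices to match entries. The transpose in the ket $|(\mu^3)^t\rangle$ interchanges arm and leg coordinates, aligning the third-component rows and columns, while the cyclic pattern $A^{i(i+1)}$, $A^{i(i-1)}$ reflects the cyclic symmetry of the three legs. I expect this final matching to be the main obstacle: one must evaluate the two-point contractions in closed form and verify that the diagonal entries reproduce $A^{ii}_{mn}(q)$ and—more delicately—that the off-diagonal entries reproduce the finite sums $\sum_{l=0}^{\min(m,n)}$ in $A^{i(i\pm1)}_{mn}(q)$ of \eqref{eqn:Aijmn}. This hinges both on choosing the fields inside $\Psi^{(a_2)}_{\mu^2}(q)$ so that their contractions assemble into precisely those hypergeometric-type sums, and on the requisite $q$-series summation identities.
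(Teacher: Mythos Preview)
Your proposal is essentially the same strategy as the paper's: reduce both sides to determinants of size $r_1+r_2+r_3$ via Wick's Theorem, with the passage through Theorem~\ref{thm:Main1} (the boson-fermionic field assignment $\Psi_{\mu^2}^{(a_2)}$) being the key step that makes the left-hand side amenable to Wick. Your collapse of the $\eta$-sum into a single-Fock matrix element is exactly what the paper does in the proof of Theorem~\ref{thm-equiv-twovert}, and your treatment of the Bogoliubov side matches Proposition~\ref{prop-det-bog}.

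There is one point where your expectation diverges from what actually happens: you write that ``it suffices to match entries,'' but the paper's computation (Proposition~\ref{prop-coeff-F} versus \eqref{eqn:Aijmn}) shows that the off-diagonal blocks do \emph{not} match entry by entry. Rather, $\tF_{kl}$ and $B_{kl}$ differ by scalar factors $f_{ij}$ depending only on the block indices $(i,j)$, not on $(m,n)$; e.g.\ $F_{mn}^{12}=q^{(a_1-a_2)/8+1/48}\cdot(-1)^n A_{mn}^{12}$. These factors are not all $1$ even at framing $\bm a=0$. The paper closes this gap with the observation in \S\ref{sec-match-coeff} that the $f_{ij}$ satisfy $f_{i_1i_2}f_{i_2i_3}\cdots f_{i_si_1}=1$ along every cycle of a permutation (equivalently, $f_{ij}=r_i/r_j$ is a multiplicative coboundary), so $\det(\tF_{kl})=\det(B_{kl})$ despite $\tF_{kl}\neq B_{kl}$. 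This is a short combinatorial check once you have the explicit entries, but it is a necessary extra step that your proposal does not anticipate.
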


This formula is called the ADKMV Conjecture.
In \cite{dz1},
Deng and the third-named author have proposed the following framed ADKMV Conjecture which generalizes
the ADKMV Conjecture to the case of framed topological vertex:
\begin{Conjecture}
[framed ADKMV Conjecture \cite{dz1}]
Let $\bm a = (a_1,a_2,a_3)\in \bZ^3$ and let $\mu^1,\mu^2,\mu^3$ be three partitions.
Then the framed topological vertex $W_{\mu^1,\mu^2,\mu^3}^{(\bm a)}(q)$ is given by:
\be
\label{eq-framedADKMV}
\begin{split}
 W_{\mu^1,\mu^2,\mu^3}^{(\bm a)}(q)
= \langle \mu^1,\mu^2,\mu^3|
\exp\Big( \sum_{i,j =1,2,3} \sum_{m,n\geq 0} A_{mn}^{ij}(q; \bm a)
\psi_{-m-\half}^{i} \psi_{-n-\half}^{j*} \Big)
|0\rangle \otimes |0\rangle \otimes |0\rangle,
\end{split}
\ee
where $A_{mn}^{ij} (q;\bm a) = q^{\frac{a_i m(m+1) - a_j n(n+1)}{2}}
\cdot A_{mn}^{ij} (q)$.
\end{Conjecture}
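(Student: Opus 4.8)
The plan is to prove Theorem~\ref{thm:Main2} by realizing both sides of the framed ADKMV Conjecture~\eqref{eq-framedADKMV} as determinants of the same size and matching them entry by entry; the bridge between the geometric side~\eqref{eq-framedTV-def} and the fermionic side is the single-Fock-space expression of Theorem~\ref{thm:Main1}. First I would recast the framed topological vertex as a Fock-space matrix element. The skew-Schur sum in~\eqref{eq-framedTV-def} is, by the vertex-operator identities~\eqref{eq-special-schur} and~\eqref{eq-special-schur-2}, exactly
\be
\sum_\eta s_{\mu^1/\eta}\big(q^{(\mu^2)^t+\rho}\big)\, s_{(\mu^3)^t/\eta}\big(q^{\mu^2+\rho}\big)
= \langle \mu^1 | \Gamma_-\big(q^{(\mu^2)^t+\rho}\big)\, \Gamma_+\big(q^{\mu^2+\rho}\big) | (\mu^3)^t\rangle.
\ee
The remaining $\mu^2$-dependent data --- the prefactor $(-1)^{|\mu^2|} s_{(\mu^2)^t}(q^{-\rho})$ together with the two dual specializations $q^{(\mu^2)^t+\rho}$ and $q^{\mu^2+\rho}$ --- should be absorbed into a single operator $\Psi_{\mu^2}^{(a_2)}(q)$ assembled from finitely many of the dressed fermionic fields $\tGa_\pm^{(a)}(z)$. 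The strategy here is the one indicated in the Introduction: read off the Frobenius coordinates of $\mu^2$ and insert a field $\tGa_+^{(a_2)}$ or $\tGa_-^{(a_2)}$ at the argument $z=q^{\,\cdot}$ dictated by each coordinate, so that commuting the bosonic vertex factors past one another via~\eqref{eq-comm-gammapm} reproduces precisely the two specialized arguments above. The framing powers $q^{a_1\kappa_{\mu^1}/2}$ and $q^{(a_3+1)\kappa_{\mu^3}/2}$ are then supplied for free by the cut-and-join operators $q^{a_1 K/2}$ and $q^{-(a_3+1)K/2}$ acting on the bra and ket through the eigenvalue relation $K|\mu\rangle=\kappa_\mu|\mu\rangle$. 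This is how I would establish Theorem~\ref{thm:Main1}.

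Next, both sides become determinants. Since $\Psi_{\mu^2}^{(a_2)}(q)$ is a finite product of the fields $\tGa_+^{(a)}(z)=f_1^{(a)}(z)\,\Gamma_-(z^{-1})\,\psi^*(z^{-1})$ and $\tGa_-^{(a)}(z)=z\,f_2^{(a)}(z)\,\psi(z)\,\Gamma_+(z^{-1})$, the matrix element of Theorem~\ref{thm:Main1} is, after commuting all bosonic $\Gamma$-factors to the outside using~\eqref{eq-comm-gammapm}, a vacuum expectation value of a product of free fermions. Wick's theorem in its determinantal form~\eqref{eq-Wick-det-2} then presents the combinatorial vertex $C^{(\bm a)}_{\mu^1,\mu^2,\mu^3}(q)$ as a determinant whose entries are the two-point propagators of the dressed fields. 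On the other hand, writing $\langle\mu^1,\mu^2,\mu^3|$ in Frobenius form via~\eqref{eq-ferm-basismu} and expanding the exponential in~\eqref{eq-framedADKMV} --- whose quadratic argument involves only the creators $\psi^i_{-m-\half}\psi^{j*}_{-n-\half}$ --- and again applying~\eqref{eq-Wick-det-2}, the Bogoliubov side becomes a determinant built from the coefficients $A^{ij}_{mn}(q;\bm a)$ of~\eqref{eqn:def Aqa}. The proof then reduces to checking that these two determinants agree.

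The main obstacle is this final matching, which has two delicate components. First, the construction of $\Psi_{\mu^2}^{(a_2)}(q)$ must encode all three of the $\mu^2$-dependencies consistently on one copy of $\cF$, including the sign and the hook-length normalization hidden in $s_{(\mu^2)^t}(q^{-\rho})$; getting the bookkeeping of charges and of the shift operator right is where errors are most likely to enter. Second, I must compute the generating series of the two-point functions $\langle \tGa_\pm^{(a)}(z)\,\tGa_\pm^{(a)}(w)\rangle$ in closed form and extract the coefficient of $z^{m}w^{n}$, then verify it equals $q^{(a_i m(m+1)-a_j n(n+1))/2}\,A^{ij}_{mn}(q)$. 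For the diagonal blocks $A^{ii}$ this ought to be a clean geometric-series computation, but the off-diagonal blocks $A^{i(i\pm1)}$ carry the sum $\sum_{l=0}^{\min(m,n)}$ with its nontrivial $q$-powers in~\eqref{eqn:Aijmn}, and reproducing this sum from a coefficient or residue extraction of the propagator generating function is the genuinely hard part; I expect it to hinge on a $q$-binomial identity relating the expansions of the dressing factors $f^{(a)}_1(z), f^{(a)}_2(z)$ to the claimed closed form.
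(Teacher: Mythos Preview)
Your overall architecture matches the paper's: both sides are reduced to $(r_1+r_2+r_3)\times(r_1+r_2+r_3)$ determinants via Wick's theorem, with Theorem~\ref{thm:Main1} as the bridge. But your expectation that the determinants can be matched ``entry by entry'' is a genuine gap. The paper computes the two-point entries $F_{mn}^{ij}(q;\bm a)$ explicitly (Proposition~\ref{prop-coeff-F}) and finds that they do \emph{not} equal $(-1)^n A_{mn}^{ij}(q;\bm a)$: there is a discrepancy factor $f_{ij}$ depending only on the pair of blocks $(i,j)$, namely $f_{ii}=1$, $f_{12}=f_{23}=q^{(a_i-a_{i+1})/8+1/48}$, $f_{31}=q^{(a_3-a_1)/8-1/24}$, and the reciprocals for the transposed blocks. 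The determinants agree nonetheless because the product of these factors around any cycle $i_1\to i_2\to\cdots\to i_s\to i_1$ equals $1$; this cocycle-type cancellation (the Lemma in \S\ref{sec-match-coeff}) is what actually finishes the proof. If you try to force an entrywise match you will be chasing a normalization that cannot be absorbed into the fields.

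Two smaller remarks. First, for the off-diagonal propagators you anticipate a $q$-binomial identity coming from the dressing factors $f_1^{(a)},f_2^{(a)}$; the paper instead reduces these entries to evaluations of Schur functions at the specializations $q^{(n)+\rho}$ and $q^{(1^m)+\rho}$ (Lemma~\ref{lem-skewS-eval}), which produces the finite sum $\sum_{l=0}^{\min(m,n)}$ in~\eqref{eqn:Aijmn} directly and is cleaner than a generating-series extraction. Second, in proving Theorem~\ref{thm:Main1} the identification of the scalar prefactor in $\Psi_{\mu^2}^{(a_2)}$ with $q^{a_2\kappa_{\mu^2}/2}s_{\mu^2}(q^\rho)$ is not pure bookkeeping: the paper verifies it on hook partitions by hand, and then uses the determinantal formula for $C$ already established together with the Giambelli formula $s_\mu=\det(s_{(m_i|n_j)})$ to extend to general $\mu^2$ (Proposition~\ref{prop-rewrite-gammamu}).
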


The original ADKMV Conjecture is the special case $(a_1,a_2,a_3)=(0,0,0)$ of the framed ADKMV Conjecture.
The one-legged case (i.e., $\mu^2 = \mu^3 =(0)$ and $a_2=a_3=0$)
and the two-legged case (i.e., $\mu^3 = (0)$ and $a_3=0$)
of the framed ADKMV Conjecture has been proved in \cite{dz1}.
As corollaries,
the generating functions of the open Gromov-Witten invariants of $\bC^3$ with one brane and two branes
are tau-functions of the KP and 2d Toda hierarchies respectively (\cite{djm,ut}).
As a corollary of our proof of the three-legged case of the framed ADKMV Conjecture,
the generating function of the open Gromov-Witten invariants of $\bC^3$ with three branes
is a tau-function of the $3$-component KP hierarchy.

As mentioned in the Introduction,
the topological vertex is a building block of the open Gromov-Witten theory of
a general smooth toric Calabi-Yau threefold $X$,
and many properties of the open Gromov-Witten theory of $X$
can be implied by the properties of the topological vertex by applying
the gluing rules \cite{akmv, adkmv, lllz, dz2}.
It was first proposed by Aganagic-Dijkgraaf-Klemm-Mari\~no-Vafa \cite{adkmv} that
the multi-component KP integrability is supposed to be preserved by the
gluing rule of topological vertices,
and see \cite[(5.22)]{adkmv} for their construction of the fermionic propagator of the gluing procedure.
See also the discussions in \cite{kas}.
This fact was proved by Deng and the third named author,
and the precise statement is as following:
\begin{Theorem}[Theorem 5.2 in \cite{dz2}]
	\label{thm:provided adkmv}
	For any smooth toric Calabi-Yau threefold $X$ whose toric diagram has $n$-legs,
	expand the generating function of open Gromov-Witten invariants of $X$ in terms of Schur basis as
	\begin{align*}
		Z^X(\mathbf{t}^1,\cdots,\mathbf{t}^n)
		=\sum_{\mu^1,\cdots,\mu^n\in\mathcal{P}}
		Z_{\mu^1,\cdots,\mu^n}^X
		\cdot s_{\mu^1}(\mathbf{t}^1)\cdots
		s_{\mu^n}(\mathbf{t}^n).
	\end{align*}
	Provided the framed ADKMV Conjecture,
	and if the toric diagram of $X$ has $g$ loops,
	then there is a Bogoliubov transform of the fermionic vacuum $e^A|0\rangle\otimes\cdots\otimes|0\rangle$ in $\mathcal{F}^n$,
	which is a Laurent series in $g$ formal parameters $\Theta_1,\cdots, \Theta_g$,
	such that
	\begin{align*}
		Z_{\mu^1,\cdots,\mu^n}^X
		=\frac{1}{(2\pi i)^g}
		\oint \frac{\langle \mu^1,\cdots,\mu^n|
			e^A
			|0\rangle \otimes \cdots \otimes |0\rangle}
			{\Theta_1\cdots\Theta_g}
			d\Theta_1\cdots d\Theta_g
	\end{align*}
	for any partitions $\mu^1,\cdots,\mu^n\in\mathcal{P}$.
Therefore,
	the generating function $Z^X(\mathbf{t}^1,\cdots,\mathbf{t}^n)$ of all open Gromov-Witten invariants of $X$ is a part of a tau-function of the $n$-component KP hierarchy.
\end{Theorem}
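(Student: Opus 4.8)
The plan is to reduce the statement to a single \emph{gluing lemma} for Bogoliubov transforms and then to account for the loops of the toric diagram by a residue calculus. First I would recall the combinatorial content of the gluing rule \cite{akmv,adkmv,lllz}: the toric diagram of $X$ is a trivalent planar graph whose vertices $v$ carry framed topological vertices; the $n$ external legs carry the partitions $\mu^1,\dots,\mu^n$ indexing $Z^X$, while each internal edge $e$ carries a partition $\nu_e$ that is summed over with an edge weight of the form $(-1)^{|\nu_e|}Q_e^{|\nu_e|}q^{f_e\kappa_{\nu_e}/2}$, one endpoint of $e$ seeing $\nu_e$ and the other seeing $(\nu_e)^t$. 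Thus
\[
Z^X_{\mu^1,\dots,\mu^n}
=\sum_{\{\nu_e\}}\ \prod_e\big((-1)^{|\nu_e|}Q_e^{|\nu_e|}q^{f_e\kappa_{\nu_e}/2}\big)\ \prod_v W_v(q),
\]
where $W_v(q)=W^{(\bm a_v)}_{\mu^1_v,\mu^2_v,\mu^3_v}(q)$ is the framed vertex at $v$, whose legs are labelled by the external partitions and by the adjacent internal partitions $\nu_e$ or $(\nu_e)^t$.

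By Theorem \ref{thm:Main2}, each factor $W_v$ is a matrix element of a Bogoliubov transform of the three-component vacuum. Tensoring over all $V$ vertices turns $\prod_v W_v$ into a single matrix element $\langle\{\mu,\nu\}|e^{\sum_v A_v}|0\rangle$ in the $3V$-component Fock space $\cF^{\otimes 3V}$, where $\{\mu,\nu\}$ denotes the full collection of external and internal leg-partitions and the operators $A_v$ commute because they act on disjoint components. The entire problem then becomes: show that summing over one internal partition $\nu_e$, with the weight above and with a transpose on one endpoint, \emph{contracts} the two relevant components and leaves behind another Bogoliubov transform on the remaining components.

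\textbf{Gluing lemma (the main obstacle).} The crux is this contraction. I would recognize the gluing weight as a fermionic copairing: the vector
\[
\sum_{\nu}(-1)^{|\nu|}Q^{|\nu|}q^{f\kappa_\nu/2}\,|\nu\rangle\otimes|(\nu)^t\rangle
\ \in\ \cF^{(0)}\otimes\cF^{(0)}
\]
is itself an exponential of a quadratic in the fermionic creators. Indeed, the dual Cauchy identity in the form $\sum_\nu(-1)^{|\nu|}s_\nu(\bm x)s_{\nu^t}(\bm y)=\prod_{i,j}(1-x_iy_j)$ exhibits its image under the boson-fermion correspondence as a Gaussian state, while the factor $q^{f\kappa_\nu/2}$ is produced by the cut-and-join action $q^{fK/2}$ using the eigenvalue relation \eqref{eq-eigen-C&J}. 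Contracting $e^{\sum_v A_v}|0\rangle$ against this Gaussian state is then fermionic Gaussian integration: by Wick's Theorem \eqref{eq-Wickthm}, the partial contraction of a product of exponentials of creator-quadratics is again such an exponential, the new coefficient matrix being obtained from the old one by a Schur-complement-type resummation (a geometric series in the couplings to the contracted legs). Iterating this over a spanning tree of the graph collapses $\prod_v W_v$ to a single Bogoliubov transform on the $n$ external legs. The genuine difficulty here is the careful tracking of signs, of the framing factors $q^{f_e\kappa_{\nu_e}/2}$, and of the charge shifts caused by each contraction; in particular each contraction conserves total charge but redistributes it, so the identity holds only after projecting to charge zero in every external component, which is exactly why $Z^X$ is \emph{a part of} a tau-function.

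\textbf{Loops and residues.} When the toric diagram has $g$ independent loops, the internal edges cannot all be removed along a tree: the last $g$ edges close loops, and I would keep their Kähler parameters as formal variables $\Theta_1,\dots,\Theta_g$. The geometric series appearing in each loop contraction then becomes a genuine Laurent expansion in the $\Theta_i$, and the physical generating function is recovered by extracting the relevant coefficient, i.e. the iterated residue $\frac{1}{(2\pi i)^g}\oint\frac{\langle\{\mu\}|e^{A}|0\rangle}{\Theta_1\cdots\Theta_g}\,d\Theta_1\cdots d\Theta_g$ of the resulting charge-zero Bogoliubov transform $e^A|0\rangle\otimes\cdots\otimes|0\rangle$ in $\cF^{\otimes n}$. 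Once $Z^X$ is exhibited in this form, the $n$-component KP integrability follows immediately from the correspondence recalled in \S\ref{sec:def nKP} between Bogoliubov transforms of the vacuum and tau-functions of the $n$-component KP hierarchy. I expect the gluing lemma --- showing that the weighted, transposed partition sum preserves the Bogoliubov form --- to be the principal obstacle, with the sign, framing, and charge bookkeeping the most delicate part.
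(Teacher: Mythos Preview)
The paper does not contain its own proof of this statement: Theorem~\ref{thm:provided adkmv} is quoted verbatim as Theorem~5.2 of \cite{dz2} and used as an input, not reproved. So there is no proof in the paper to compare your proposal against.

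That said, your sketch is in line with what the paper and the surrounding literature describe as the mechanism behind \cite{dz2}. The paper itself summarizes the argument as ``the gluing of Bogoliubov transforms of the fermionic vacuum is still (a part of) a Bogoliubov transform of the fermionic vacuum,'' and points to the fermionic propagator of \cite[(5.22)]{adkmv} for the edge contraction. Your identification of the edge weight $\sum_\nu(-1)^{|\nu|}Q^{|\nu|}q^{f\kappa_\nu/2}|\nu\rangle\otimes|\nu^t\rangle$ as a Gaussian state, your reduction of the contraction to a Wick/Schur-complement computation, and your handling of the $g$ loop edges via formal Laurent parameters $\Theta_1,\dots,\Theta_g$ and residues all match the shape of the argument as described. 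The caveats you flag---sign and framing bookkeeping, and the charge redistribution that forces the ``part of a tau-function'' qualifier---are exactly the points the paper alludes to when it says the result yields only a part of a tau-function. If you want to verify the details, you will need to consult \cite{dz2} directly; nothing in the present paper supplies them.
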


\subsection{Main result}
\label{sec-outline}

Our main result of this work is the following:
\begin{Theorem}
[= Theorem \ref{thm:Main2}]
The framed ADKMV Conjecture holds for arbitrary partitions $\mu^1$, $\mu^2$, $\mu^3$,
and framing $(a_1,a_2,a_3)\in \bZ^3$.
\end{Theorem}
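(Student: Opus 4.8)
The plan is to take Theorem~\ref{thm:Main1} as the starting point and to match two determinantal expansions: one for the fermionic expression it provides and one for the right-hand side of the framed conjecture \eqref{eq-framedADKMV}. For the conjectural side, write each partition in Frobenius notation $\mu^i=(m^i_1,\dots,m^i_{k_i}\,|\,n^i_1,\dots,n^i_{k_i})$, so that by \eqref{eq-ferm-basismu} the bra $\langle\mu^1,\mu^2,\mu^3|$ is a product of $k_1+k_2+k_3$ creator pairs applied to the dual vacuum. Since the Bogoliubov vector is the exponential of a purely quadratic expression in the creators $\psi^i_{-m-\half}\psi^{j*}_{-n-\half}$, the overlap $\langle\mu^1,\mu^2,\mu^3|V\rangle$ is evaluated by the standard Wick expansion \eqref{eq-Wickthm}--\eqref{eq-Wick-det-2} of Bogoliubov states as the determinant of a $(k_1+k_2+k_3)\times(k_1+k_2+k_3)$ matrix whose row $(i,a)$, column $(j,b)$ entry is $A^{ij}_{m^i_a,\,n^j_b}(q;\bm a)$, up to the signs $(-1)^{n^i_1+\cdots+n^i_{k_i}}$ coming from \eqref{eq-ferm-basismu}.

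On the other side, starting from the single-component matrix element $\langle\mu^1|\,q^{a_1K/2}\Psi^{(a_2)}_{\mu^2}(q)\,q^{-(a_3+1)K/2}|(\mu^3)^t\rangle$ of Theorem~\ref{thm:Main1}, I would commute the vertex-operator factors $\Gamma_\pm$ inside the dressed fields $\tGa^{(a)}_\pm$ past one another using \eqref{eq-comm-gammapm}, leaving products of bare free fermions whose vacuum expectations are organized by \eqref{eq-Wick-det-2} into a determinant of the same size. Here the bra $\langle\mu^1|$ supplies the first component, the ket $|(\mu^3)^t\rangle$ the third, and the operator $\Psi^{(a_2)}_{\mu^2}(q)$ the middle one, so that the nearest-neighbour couplings of this linear arrangement reproduce exactly the cyclic index pattern $i\mapsto i,\,i\pm1$ of \eqref{eqn:Aijmn}. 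Both expansions are now determinants of equal size, and the proof reduces to an entry-by-entry identification of the two matrices.

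The heart of the argument is therefore the computation of the matrix entries, i.e.\ the relevant two-point functions built from $\tGa^{(a)}_\pm$. Using the fermionization \eqref{eq-psi-inGamma} and the evaluation \eqref{eq-eval-schur-rho} of Schur functions at $q^\rho$, the diagonal entries should collapse directly to $A^{ii}_{mn}(q)$, with the hook-length product of the relevant hook partition producing the denominator $[m+n+1]\,[m]!\,[n]!$. The off-diagonal blocks are where the difficulty concentrates: the commutation \eqref{eq-comm-gammapm} of the $\Gamma_+$ and $\Gamma_-$ factors produces a geometric factor $(1-z^{-1}w^{-1})^{-1}$ whose coefficient extraction must reproduce the truncated sum $\sum_{l=0}^{\min(m,n)}$ in \eqref{eqn:Aijmn}, while the prefactors $f^{(a)}_i$ defining $\tGa^{(a)}_\pm$ must account for the asymmetric powers $q^{\pm 1/6}$ and the alternating signs $(-1)^n,\,(-1)^{n+1}$. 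I expect the main obstacle to be precisely this off-diagonal matching: controlling the interplay of the geometric series, the sign conventions, and the fractional $q$-powers so that the two matrices agree entrywise, rather than merely up to an overall scalar.

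Finally the framing has to be pinned down. The dressings $q^{a_1K/2}$ and $q^{-(a_3+1)K/2}$ act on a creator $\psi_{-m-\half}$ through \eqref{eq-comm-expKpsi}, multiplying it by a power of $q$ governed by the eigenvalue $(m+\half)^2$ of $K$, and similarly the $a_2$-dependence is carried inside $\Psi^{(a_2)}_{\mu^2}(q)$. Via the identity $(m+\half)^2=m(m+1)+\tfrac14$ these should assemble into the framing factor $q^{(a_i m(m+1)-a_j n(n+1))/2}$ of \eqref{eqn:def Aqa}, the residual $\tfrac14$-powers cancelling in matched pairs within each $\kappa_{\mu^i}$ and thereby reproducing the prefactor $q^{\sum_i a_i\kappa_{\mu^i}/2}$ of the framed vertex \eqref{eq-framedTV-def}. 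Once the entries and the framing match, the two determinants coincide and Theorem~\ref{thm:Main2} follows.
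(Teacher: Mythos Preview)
Your overall strategy—reducing both sides to determinants of the same size via Wick's theorem and then comparing them—is exactly the route the paper takes. However, there is a genuine gap in your expectation: the two matrices do \emph{not} agree entrywise, so the plan of an ``entry-by-entry identification'' will fail.

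You correctly identify the off-diagonal blocks as the crux. But when you carry out those two-point computations you will find that the entry $\tF^{ij}_{mn}$ coming from Theorem~\ref{thm:Main1} and the entry $(-1)^n A^{ij}_{mn}(q;\bm a)$ coming from the Bogoliubov side differ by block-dependent scalar factors: for instance $q^{(a_i-a_{i+1})/8+1/48}$ when $(i,j)\in\{(1,2),(2,3)\}$, but $q^{(a_3-a_1)/8-1/24}$ when $(i,j)=(3,1)$, with inverse factors on the transposed blocks. These fractional powers arise precisely from the interplay you anticipate—the prefactors $f^{(a)}_1,f^{(a)}_2$ in $\tGa^{(a)}_\pm$ against the asymmetric $q^{\pm 1/6}$ in \eqref{eqn:Aijmn}—and they are genuinely present, not artefacts of bookkeeping. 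The diagonal blocks do match entrywise; the off-diagonal ones do not, and no choice of normalization removes all the discrepancies simultaneously.

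The missing idea is that one must compare the determinants rather than the entries. Writing the discrepancy as $\tF_{kl}=f_{ij}\,B_{kl}$ for $k\in I_i$, $l\in I_j$, the key observation is the cocycle-type identity
\[
f_{i_1 i_2}\,f_{i_2 i_3}\cdots f_{i_{s-1} i_s}\,f_{i_s i_1}=1
\]
for every cyclic sequence $i_1,\dots,i_s\in\{1,2,3\}$. This forces $\prod_k \tF_{k\sigma(k)}=\prod_k B_{k\sigma(k)}$ for every permutation $\sigma$, whence $\det(\tF_{kl})=\det(B_{kl})$. Without this cycle argument your plan stalls at the off-diagonal mismatch; with it, the rest of your outline goes through.
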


The rest of this paper will be devoted to proving this result.
For the convenience of the reader,
below we first describe the main steps of the proof,
and then give the details in the next two sections.

\subsubsection{Step 1. An alternative expression of the framed topological vertex as transfer matrix elements}
We first propose the following alternative combinatorial expression for the framed topological vertex as a transfer matrix on the fermionic Fock space.
For $(a_1,a_2,a_3)\in \bZ^3$ and three partitions $\mu^1$, $\mu^2$, $\mu^3$,
the elements of this transfer matrix are defined by:
\begin{equation}
	\label{eq-def-C-000}
C_{\mu^1,\mu^2,\mu^3}^{(a_1,a_2,a_3)} (q) =
\langle \mu^1 | q^{a_1 K/2} \Gamma_{\mu^2}^{(a_2)}(q) q^{-(a_3 +1)K /2} |(\mu^3)^t \rangle,
\end{equation}
where $\Gamma_\mu^{(a)}$ is a product of certain modifications
of the vertex operators $\Gamma_\pm$.
See Definition \ref{def-fermvert} for more details.
Our goal is to prove that this expression is actually equivalent to
the original combinatorial expression of $ W_{\mu^1,\mu^2,\mu^3}^{(a_1,a_2,a_3)} (q)$.

Such a definition is inspired by the following two results in literatures.
First it was noticed in \cite{zhou6} that
\begin{equation} \label{eqn:Two-Partition}
W_{\mu,\nu}(q) = \langle   \mu | q^K \Gamma_+(-q^{-\rho}) \Gamma_-(-q^{-\rho}) q^K |\nu\rangle,
\end{equation}
which essentially leads to $2$d Toda lattice hierarchy.
The definition \eqref{eq-def-C-000} is a generalization of this formula.
On the other hand, Okounkov, Reshetikhin and Vafa  \cite{orv} proved that the topological vertex is equal to the product of the MacMahon function and the partition function of 3D partitions,
see \cite[(3.13), (3.21)]{orv}.
Moreover, the partition function of 3D partitions has a formula in terms of transfer matrix method,
see \cite[(3.17)]{orv}.

\subsubsection{Step 2. Determinantal expression for $ C_{\mu^1,\mu^2,\mu^3}^{(a_1,a_2,a_3)} (q)$}
One of the advantages of the expression for $ C_{\mu^1,\mu^2,\mu^3}^{(a_1,a_2,a_3)} (q)$ is that
the operator $\Gamma_{\mu^2}^{(a_2)}$ is a product of an even number of fermionic operators,
so that we can use Wick's Theorem to  show that $ C_{\mu^1,\mu^2,\mu^3}^{(a_1,a_2,a_3)} (q)$ can be represented as a determinant
(see \S \ref{sec-det-C}-\S \ref{eq-computeF} for details):
\begin{equation*}
C_{\mu^1,\mu^2,\mu^3}^{(a_1,a_2,a_3)} (q) =
\det \big( \tF_{kl}(q; \bm a) \big)_{1\leq k,l\leq r_1+r_2+r_3},
\end{equation*}
where $r_i$ is the number of boxes lie in the diagonal of the Young diagram associated with $\mu^i$
for $i=1,2,3$.

\subsubsection{ Step 3. The equivalence between different combinatorial expressions of the framed topological vertex}
The next step is to show the  equivalence between the alternative combinatorial expression and the original combinatorial expression of the framed topological vertex:
\begin{equation*}
C_{\mu^1,\mu^2,\mu^3}^{(a_1,a_2,a_3)} (q) = W_{\mu^1,\mu^2,\mu^3}^{(a_1,a_2,a_3)} (q),
\end{equation*}
for every $(a_1,a_2,a_3)\in \bZ^3$ and partitions $\mu^1$, $\mu^2$, $\mu^3$,
where $W_{\mu^1,\mu^2,\mu^3}^{(a_1,a_2,a_3)} (q)$ is the original framed topological vertex
given by \eqref{eq-framedTV-def}.
The proof of this equivalence will be given in \S \ref{sec-equiv-vertex}.
It also relies on the special form of the operator  $\Gamma_{\mu^2}^{(a_2)}$ and a special case of  the determinantal formula in Step 2.

\subsubsection{ Step 4.  Determinantal formula for the right-hand side of the framed ADKMV Conjecture}

Denote by $B_{\mu^1,\mu^2,\mu^3}^{(\bm a)}(q)$ the right-hand side of
the framed ADKMV Conjecture \eqref{eq-framedADKMV}.
Then we show that this Bogoliubov transform also satisfies a determinantal formula
(see \S \ref{sec-det-Bog} for details):
\begin{equation*}
B_{\mu^1,\mu^2,\mu^3}^{(\bm a)}(q) = \det\big( B_{kl}(q;\bm a) \big)_{1\leq k,l\leq r_1+r_2+r_3}.
\end{equation*}
The entries $B_{kl}$ are given by certain reordering of $(-1)^n \cdot A_{mn}^{ij}$
where $A_{mn}^{ij}$ are the coefficients of the Bogoliubov transform.

 \subsubsection{Step 5. The equivalence of the determinants}

It turns out that
$$\tF_{kl}(q;\bm a) \neq   B_{kl}(q;\bm a).$$
To finish the proof,
we show that the two determinantal formulas actually give the same result
(see \S \ref{sec-match-coeff} for details):
\begin{equation*}
\det \big(\tF_{kl}(q;\bm a) \big)= \det \big(B_{kl}(q;\bm a)\big).
\end{equation*}
And this completes the proof of the ADKMV Conjecture.

\section{An Alternative Combinatorial Expression of Framed Topological Vertex and Its Determinantal Formula}
\label{sec-ferm-topovert}

In this section,
we deal with the left-hand side of the framed ADKMV Conjecture \eqref{eq-framedADKMV} and prove Theorem \ref{thm:Main1}.
More precisely,
we first introduce the alternative combinatorial expression $C_{\mu^1,\mu^2,\mu^3}^{(a_1,a_2,a_3)}(q)$
of the framed topological vertex,
and then derive a determinantal formula for it.
Then we show that this  expression is actually equivalent to
the original expression for the framed topological vertex $W_{\mu^1,\mu^2,\mu^3}^{(a_1,a_2,a_3)}(q)$ defined by \eqref{eq-framedTV-def},
and thus we also obtain a determinantal formula for $W_{\mu^1,\mu^2,\mu^3}^{(a_1,a_2,a_3)}(q)$.

\subsection{A boson-fermionic field assignment}
\label{sec:bf assign}
In this subsection we introduce a boson-fermionic field assignment,
which assigns an operator $\Psi_\mu^{(a)}(q)$ to each partition $\mu$ and framing $a\in\bZ$.
This operator will be used in the next subsection
to construct an alternative combinatorial expression
of the framed topological vertex.

Recall the bosonic Fock space has a basis $\{s_\mu\}$  indexed by all partitions of integers.
The boson-fermion correspondence assigns to each bosonic state
 a fermionic state, obtained by the action of a fermionic operator on the fermionic vacuum:
 \ben
s_\mu \mapsto  (-1)^{n_1+\cdots+n_k}\cdot
\psi_{-m_1-\half} \psi_{-n_1-\half}^* \cdots
\psi_{-m_k-\half} \psi_{-n_k-\half}^* |0\rangle,
\een
where $m_1, \cdots, m_k$ and $n_1, \cdots, n_k$ are the Frobenius coordinates of $\mu$:
$$\mu=(m_1,\cdots, m_k | n_1,\cdots,n_k).$$
This correspondence can be pictorially visualized using the Dirac sea \cite{djm}.
In this picture,
the partition $\mu$ corresponds to a system which has electrons at energy levels $n_1+\frac{1}{2}, \cdots, n_k + \frac{1}{2}$,
and positrons at energy levels  $-(m_1+\frac{1}{2}), \cdots, -(m_k+\frac{1}{2})$.
The fermionic states are actually given by semi-infinite wedge product,
and in particular,
 $|0\rangle$ is given by an infinite wedge:
$$|0\rangle = z^{\half}\wedge z^{\frac{3}{2}}
\wedge z^{\frac{5}{2}} \wedge \cdots \in \cF^{(0)},$$
the effects of the operators  $\psi_{-m_1-\half} , \cdots, \psi_{-m_k-\half}$ and $\psi_{-n_1-\half}^* ,\cdots, \psi_{-n_k-\half}^*$ are essentially
the modification of the vacuum at the corresponding energy levels.

We now  introduce a boson-fermionic field assignment which assigns a fermionic field
to each bosonic state by similar ideas as follows.
For simplicity,
from now on we denote by $\Gamma_\pm(z)$ the operators $\Gamma_\pm(\{z\})$
(where $\{z\}=(z,\frac{z^2}{2},\frac{z^3}{3},\cdots)$) for a single formal variable $z$.
We first define the field assigned to the state indexed by the empty partition as follows:
\be \label{eqn:Gamma-Emptyset}
\begin{split}
\Psi_{\emptyset}  (q) =\prod_{j\geq 0}  \Gamma_{-}(q^{-j-\half})
\cdot \prod_{i\geq 0}  \Gamma_{+ }(q^{-i-\half})
= \Gamma_-(\bm t (q^\rho))  \Gamma_+(\bm t ( q^\rho)),
\end{split}
\ee
which is a well-defined operator on the fermionic Fock space since
$$\bm t (q^\rho)=(t_1(q^\rho),t_2(q^\rho),t_3(q^\rho),\cdots)$$ and $t_n(q^\rho)=\frac{1}{n}p_n(q^\rho)$ are rational functions of $q$ given by equation \eqref{eqn:p_n at q^rho}.

For a partition $\mu = (m_1,\cdots,m_k|n_1,\cdots,n_k)$ and an integer $a\in \bZ$,
we define $\Psi_\mu^{(a)}(q)$ by modifying the field $\Psi_\emptyset(q)$ in the following way.
For $j=1,\cdots, k$,
we change the factor $\Gamma_-(q^{-(m_j+\frac{1}{2})})$ to $\widetilde{\Gamma}_+^{(a)}(q^{m_j+\frac{1}{2}})$ to be defined below,
and we change the factor $\Gamma_+(q^{-(n_j+\frac{1}{2})})$ to $\widetilde{\Gamma}_-^{(a)}(q^{n_j+\frac{1}{2}})$ to be defined below.
We define the modified vertex operators $\tGa_\pm^{(a)}(z)$ to be:
\be
\label{eq-def-tGamma}
\begin{split}
&\tGa_+^{(a)} (z) = f_1^{(a)}(z) \Gamma_+(z) R^{-1} z^C,\\
&\tGa_-^{(a)} (z) = f_2^{(a)}(z) z^C R \Gamma_-(z),
\end{split}
\ee
where $\Gamma_\pm$ are the vertex operators \eqref{def-vert-Gamma},
and the functions $f_1,f_2$ are given by:
\be
\label{eq-def-f1f2}
\begin{split}
&f_1(q^k) = (-1)^{k-\half} \cdot q^{\frac{(a+1)k^2}{2} - \frac{k}{2} -\frac{1}{16}},\\
&f_2(q^k) = (-1)^{k+\half} \cdot q^{-\frac{a k^2}{2} - \frac{k}{2} -\frac{1}{16}},
\end{split}
\ee
for arbitrary $k$
(and in what follows we will always consider $z$ of the form $z=q^k$ where $k\in \bZ+\half$).
Using the relations \eqref{eq-psi-inGamma} and the identity
\begin{equation*}
z^{-C} R \psi^* (z^{-1}) = \psi^*(z^{-1}) z^{-C} R,
\end{equation*}
we may rewrite the modified vertex operators $\tGa_\pm^{(a)}(z)$ as:
\be
\label{eq-mod-vert}
\begin{split}
&\tGa_+^{(a)} (z) = f_1^{(a)}(z) \Gamma_-(z^{-1}) \psi^*(z^{-1}),\\
&\tGa_-^{(a)} (z) = z\cdot f_2^{(a)}(z) \psi(z) \Gamma_+(z^{-1}).
\end{split}
\ee

To summarize, we make the following definition for the fermionic field $\Psi_{\mu}^{(a)}(q)$
assigned to a partition $\mu$ and a framing parameter $a\in \bZ$.

\begin{Definition}
\label{def-boson-ferm}

For  a partition $\mu = (m_1,\cdots,m_k|n_1,\cdots,n_k)$ and an integer $a\in \bZ$,
the operator $\Psi_\mu^{(a)} (q)$ is defined by:
\begin{equation*}
\begin{split}
\Psi_{\mu}^{(a)} (q) =& \prod_{j\geq 0}^{\longleftarrow} \Gamma_{-,\{j,\mu\}}^{(a)}(q^{-j-\half})
\cdot \prod_{i\geq 0}^{\longrightarrow} \Gamma_{+,\{i,\mu\}}^{(a)}(q^{-i-\half})\\
=&\ \cdots \Gamma_{-,\{2,\mu\}}^{(a)}(q^{-\frac{5}{2}}) \Gamma_{-,\{1,\mu\}}^{(a)}(q^{-\frac{3}{2}})
\Gamma_{-,\{0,\mu\}}^{(a)}(q^{-\frac{1}{2}})
\Gamma_{+,\{0,\mu\}}^{(a)}(q^{-\frac{1}{2}})
\Gamma_{+,\{1,\mu\}}^{(a)}(q^{-\frac{3}{2}}) \cdots,
\end{split}
\end{equation*}
where
\be
\begin{split}
\Gamma_{-,\{j,\mu\}}^{(a)}(z) =
\begin{cases}
\Gamma_-(z), & \text{ if $j\notin \{m_1,\cdots,m_k\}$};\\
\tGa_+^{(a)}(z^{-1}), & \text{ if $j\in \{m_1,\cdots,m_k\}$},
\end{cases}
\end{split}
\ee
and
\be
\begin{split}
\Gamma_{+,\{i,\mu\}}^{(a)}(z) =
\begin{cases}
\Gamma_+(z), & \text{ if $i\notin \{n_1,\cdots,n_k\}$};\\
\tGa_-^{(a)}(z^{-1}), & \text{ if $i\in \{n_1,\cdots,n_k\}$.}
\end{cases}
\end{split}
\ee
\end{Definition}

\subsection{A new combinatorial expression for the framed topological vertex}
\label{sec:C as Psi}
We now use the boson-fermionic field assignment introduced in last subsection to
propose a new combinatorial expression for the framed topological vertex as follows.

\begin{Definition}
\label{def-fermvert}
Given $(a_1,a_2,a_3)\in \bZ^3$ and three partitions $\mu^1,\mu^2,\mu^3$,
we define  $C_{\mu^1,\mu^2,\mu^3}^{(a_1,a_2,a_3)}(q)$ to be
the following combinatorial expression:
\be
\label{eq-def-fermvert}
C_{\mu^1,\mu^2,\mu^3}^{(a_1,a_2,a_3)} (q) =
\langle \mu^1 | q^{a_1 K/2} \Psi_{\mu^2}^{(a_2)}(q) q^{-(a_3 +1)K /2} |(\mu^3)^t \rangle,
\ee
where $K$ is the cut-and-join operator \eqref{eq-def-C&Jopr}.
\end{Definition}

\begin{Remark}
Note when $\mu^2=\emptyset$,
\eqref{eq-def-fermvert} reduces to \eqref{eqn:Two-Partition}.

The idea of formulating the topological vertex using
the vertex operators $\Gamma_\pm (\bm t)$
has been investigated by Okounkov, Reshetikhin, and Vafa,
see \cite[(3.17), (3.21)]{orv}.
See also \cite{wy,ikv,fw} and references therein for some
other combinatorial expressions of the topological vertex.
However, all these constructions are slightly different from the original topological vertex in \cite{akmv, adkmv},
and thus it seems not easy to relate these combinatorial expressions directly
to the Bogoliubov transform conjectured in \cite{adkmv}.

Our proposed expression for the framed topological vertex \eqref{eq-def-fermvert}
is inspired by these works \cite{orv, wy, ikv, fw}
but different from their constructions.
We will see later that our expression
equals exactly to the original framed topological vertex
and possesses a nice determinantal formula.
\end{Remark}

\subsection{Determinantal formula for $C_{\mu^1,\mu^2,\mu^3}^{(a_1,a_2,a_3)}(q)$}
\label{sec-det-C}

In this subsection we use Wick's Theorem to derive a determinantal formula for $C_{\mu^1,\mu^2,\mu^3}^{(a_1,a_2,a_3)}(q)$.

First for $m,n\geq 0$, we denote:
\be
\begin{split}
& \Psi_{m,\emptyset}^{(a)}(q) =
\prod_{j=m+1}^\infty \Gamma_- (q^{-j-\half}) \cdot \tGa_+^{(a)} (q^{m+\half}) \cdot
\prod_{j=0}^{m-1}\Gamma_-(q^{-j-\half}) \cdot\prod_{j=0}^\infty \Gamma_+(q^{-j-\half}),\\
& \Psi_{\emptyset,n}^{(a)}(q) = \prod_{j=0}^\infty \Gamma_- (q^{-j-\half}) \cdot
\prod_{j=0}^{n-1}\Gamma_+ (q^{-j-\half}) \cdot \tGa_-^{(a)} (q^{n+\half}) \cdot
\prod_{j=n+1}^{\infty}\Gamma_+(q^{-j-\half}).
\end{split}
\ee
And denote:
\be
\label{eq-def-entryF}
F_{mn}^{ij}(q;\bm a) = \begin{cases}
\langle (m|n) | q^{a_1 K/2} \Psi_\emptyset^{(a_2)} (q) |0\rangle, &\text{$(i,j)=(1,1)$;}\\
\langle  \Psi_{(m|n)}^{(a_2)} (q) \rangle, &\text{$(i,j)=(2,2)$;}\\
\langle 0 | \Psi_{\emptyset}^{(a_2)} (q) q^{-(a_3+1)K/2} |(n|m)\rangle, &\text{$(i,j)=(3,3)$;}\\
-\langle  \psi^*_{m+\half} q^{a_1 K /2}\Psi_{\emptyset,n}^{(a_2)} (q) \rangle, &\text{$(i,j)=(1,2)$;}\\
-\langle  \psi^*_{m+\half} q^{a_1 K/2}\Psi_{\emptyset}^{(a_2)} (q)
 q^{-(a_3+1)K/2} \psi_{-n-\half} \rangle, &\text{$(i,j)=(1,3)$;}\\
 -(-1)^{n}\langle  \psi_{n+\half} q^{a_1 K /2}
 \Psi_{m,\emptyset}^{(a_2)} (q) \rangle, &\text{$(i,j)=(2,1)$;}\\
\langle  \Psi_{m,\emptyset}^{(a_2)} (q)  q^{-(a_3+1)K/2} \psi_{-n-\half} \rangle, &\text{$(i,j)=(2,3)$;}\\
(-1)^{m+n}\langle  \psi_{n+\half}  q^{a_1 K/2} \Psi_{\emptyset}^{(a_2)} (q)
 q^{-(a_3+1)K/2} \psi_{-m-\half}^* \rangle,
&\text{$(i,j)=(3,1)$;}\\
(-1)^{m}\langle \Psi_{\emptyset,n}^{(a_2)} (q)
q^{-(a_3+1)K/2} \psi_{-m-\half}^* \rangle, &\text{$(i,j)=(3,2)$,}
\end{cases}
\ee
for $m,n\geq 0$ and $i,j=1,2,3$.

Now fix three partitions
$\mu^1,\mu^2,\mu^3$ whose Frobenius notations are:
\begin{equation*}
\mu^i = ( m_1^i, \cdots, m_{r_i}^i | n_1^i, \cdots, n_{r_i}^i ),
\qquad i=1,2,3.
\end{equation*}
Denote by $I_1,I_2,I_3$ the following sets of indices:
\be
\begin{split}
&I_1 = \{1,2,\cdots,r_1\},\\
&I_2 = \{r_1 +1, r_1+2,\cdots,r_1+r_2\},\\
&I_3 = \{r_1 +r_2+1, r_1+r_2+2,\cdots,r_1+r_2+r_3\},
\end{split}
\ee
For an arbitrary integer $k$ with $1\leq k\leq r_1+r_2+r_3$,
we denote:
\be
\bar k = k - \sum_{e=1}^{i-1} r_e \in \{1,2,\cdots,r_i\},
\qquad \text{if $k\in I_i$}.
\ee
Then we have the following
(see \S \ref{sec-det-C} for the proof):
\begin{Theorem}
\label{prop-det-C}
For arbitrary $(a_1,a_2,a_3)\in \bZ^3$ and partitions $\mu^1,\mu^2,\mu^3$,
the combinatorial expression
$C_{\mu^1,\mu^2,\mu^3}^{(a_1,a_2,a_3)} (q)$ is given by
the following determinantal formula:
\be
\label{eq-det-C}
C_{\mu^1,\mu^2,\mu^3}^{(a_1,a_2,a_3)} (q) =
\det \big( \tF_{kl}(q; \bm a) \big)_{1\leq k,l\leq r_1+r_2+r_3},
\ee
where the entries $\{\tF_{k,l}(q;\bm a)\}_{1\leq k,l\leq r_1+r_2+r_3}$ are given by:
\be
\tF_{kl}(q;\bm a) = F_{m_{\bar k}^i n_{\bar l}^j}^{ij}(q;\bm a),
\qquad \text{if $k\in I_i$ and $l\in I_j$.}
\ee
\end{Theorem}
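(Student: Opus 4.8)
The plan is to reduce $C_{\mu^1,\mu^2,\mu^3}^{(a_1,a_2,a_3)}(q)$ to a vacuum expectation value of a product of free fermions in a Gaussian background and then invoke the determinantal form of Wick's Theorem \eqref{eq-Wick-det-2}. First I would expose every fermionic field hidden in \eqref{eq-def-fermvert}. Using \eqref{eq-ferm-basismu} I would rewrite the bra $\langle \mu^1|$ and the ket $|(\mu^3)^t\rangle$ as products of the operators $\psi_{-m-\half},\psi^*_{-n-\half}$ (respectively their adjoints) acting on $\langle 0|$ and $|0\rangle$, reading off the Frobenius coordinates of $\mu^1$ and of $(\mu^3)^t$ (recall the transpose swaps arms and legs, so $(\mu^3)^t=(n^3_\bullet|m^3_\bullet)$). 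Using the rewriting \eqref{eq-mod-vert}, each of the $r_2$ modifications inside $\Psi_{\mu^2}^{(a_2)}(q)$ contributes exactly one field $\psi^*(q^{-m^2-\half})$ (from a $\tGa_+^{(a_2)}$) or one field $\psi(q^{n^2+\half})$ (from a $\tGa_-^{(a_2)}$), interspersed among the ordinary vertex operators $\Gamma_\pm$. After this expansion $C_{\mu^1,\mu^2,\mu^3}^{(a_1,a_2,a_3)}(q)$ becomes $\pm\langle 0|\,\Omega\,|0\rangle$, where $\Omega$ is a product of exactly $N:=r_1+r_2+r_3$ fields of $\psi$-type and $N$ fields of $\psi^*$-type, together with the operators $\Gamma_\pm$ and $q^{\pm cK/2}$; the overall sign comes from the factors $(-1)^{n_1+\cdots+n_k}$ in \eqref{eq-ferm-basismu}.

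The key structural observation is that the operator $G:=q^{a_1K/2}\Psi_\emptyset(q)\,q^{-(a_3+1)K/2}$, with $\Psi_\emptyset(q)=\Gamma_-(\bm t(q^\rho))\Gamma_+(\bm t(q^\rho))$, is group-like and type-preserving: by the conjugation relations \eqref{eq-comm-expKpsi} together with the relations $\Gamma_\pm(z)\psi(w)=(\text{scalar})\,\psi(w)\Gamma_\pm(z)$ implied by \eqref{eq-psi-inGamma} and \eqref{eq-comm-alphapsi}, conjugation by $G$ maps each $\psi$-type field into the linear span of $\psi$-type fields, and likewise sends $\psi^*$-type into $\psi^*$-type. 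Consequently every contraction between two fields of the same type vanishes, while $\langle 0|G|0\rangle=\langle 0|\Gamma_-(\bm t(q^\rho))\Gamma_+(\bm t(q^\rho))|0\rangle=1$. With equal numbers of the two types and all same-type pairings vanishing, the generalized Wick Theorem collapses the Pfaffian into the determinant \eqref{eq-Wick-det-2}: an $N\times N$ determinant whose rows are indexed by the $\psi$-type fields (equivalently the $m$-labels, index $k$) and whose columns are indexed by the $\psi^*$-type fields (the $n$-labels, index $l$).

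It then remains to identify each entry, i.e.\ each mixed two-point function, with the prescribed $F_{m^i_{\bar k}n^j_{\bar l}}^{ij}(q;\bm a)$. Here the nine cases $(i,j)$ correspond precisely to the partition ($\mu^1$, $\mu^2$, or $\mu^3$) from which the row-field and the column-field originate, and for each case one reconstitutes exactly the background operators lying between and around the surviving pair. This produces one of $\Psi_\emptyset^{(a_2)}(q)$, $\Psi_{m,\emptyset}^{(a_2)}(q)$, $\Psi_{\emptyset,n}^{(a_2)}(q)$, or $\Psi_{(m|n)}^{(a_2)}(q)$, flanked by the appropriate $q^{\pm cK/2}$ factors, exactly as recorded in \eqref{eq-def-entryF}; matching then yields \eqref{eq-det-C}.

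I expect the main obstacle to lie in the sign and ordering bookkeeping of this last step. The prefactors $(-1)^n$, $(-1)^{m+n}$, $(-1)^m$, and the leading minus signs in \eqref{eq-def-entryF} must be assembled consistently from three sources: the signs in \eqref{eq-ferm-basismu}, the permutation sign required to bring a chosen pair of fields adjacent, and the case split $j\gtrless i$ in the definition of the matrix entries $M_{ij}$ in \eqref{eq-Wick-det-2}. Keeping these three contributions coherent across all nine cases $(i,j)$ is the delicate part of the argument; by contrast, the application of Wick's Theorem itself is routine once $G$ has been recognized as group-like and type-preserving.
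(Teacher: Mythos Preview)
Your approach is correct and essentially the same as the paper's: expand $\langle\mu^1|$, $|(\mu^3)^t\rangle$, and the modifications in $\Psi_{\mu^2}^{(a_2)}$ into fermions via \eqref{eq-ferm-basismu} and \eqref{eq-mod-vert}, observe that the bosonic background (built from $\Gamma_\pm$ and $q^{cK/2}$) is type-preserving so that Wick's Theorem yields an $N\times N$ determinant, and then identify the entries case by case. One small slip worth flagging for when you carry out the bookkeeping: the $\psi$-type fields all carry the $n$-labels (legs) and the $\psi^*$-type fields carry the $m$-labels (arms), not the reverse, so the raw Wick determinant is actually the \emph{transpose} of $(\tF_{kl})$; the paper makes this explicit by first recording the raw contractions as an intermediate matrix $(G^{ij}_{kl})$, then checking entrywise that $F^{ij}_{kl}=(\text{scalar})\cdot G^{ji}_{lk}$, with the scalars combining exactly to the overall prefactor $(-1)^{\sum n_k^1+\sum m_k^3+r_2}\,q^{a_1\kappa_{\mu^1}/2+(a_3+1)\kappa_{\mu^3}/2}\,c_{\mu^2}^{(a_2)}(q)$ that was pulled out at the start.
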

\begin{proof}
Let $a\in \bZ$ and let $\mu=(m_1,\cdots,m_r|n_1,\cdots,n_r)$ be a partition.
Using the identities \eqref{eq-mod-vert},
we can rewrite the operator $\Psi_{\mu}^{(a)} (q)$ in Definition \ref{def-fermvert} as:
\begin{equation*}
\begin{split}
& c_\mu^{(a)}(q)\cdot
 \prod_{j=m_1+1}^\infty \Gamma_-(q^{-j-\half}) \cdot
\Big( \Gamma_-(q^{-m_1-\half}) \psi^*(q^{-m_1-\half}) \Big) \cdot
\prod_{j=m_2+1}^{m_1-1} \Gamma_-(q^{-j-\half})\cdot \\
& \Big( \Gamma_-(q^{-m_2-\half}) \psi^*(q^{-m_2-\half}) \Big) \cdots
\Big( \Gamma_-(q^{-m_r-\half}) \psi^*(q^{-m_r-\half}) \Big) \cdot
\prod_{j=0}^{m_r-1} \Gamma_-(q^{-j-\half}) \cdot\\
&\prod_{j=0}^{n_r-1} \Gamma_+(q^{-j-\half}) \cdot \Big( \psi^*(q^{n_r+\half})\Gamma_+(q^{-n_r-\half})  \Big)
\cdot \prod_{j=n_r+1}^{n_{r-1}-1} \Gamma_+(q^{-j-\half}) \cdots \\
& \prod_{j= n_2+1}^{n_1-1}\Gamma_+(q^{-j-\half}) \cdot
\Big( \psi^*(q^{n_1+\half})\Gamma_+(q^{-n_1-\half})  \Big) \cdot
\prod_{j= n_1+1}^{\infty}\Gamma_+(q^{-j-\half}),
\end{split}
\end{equation*}
where the coefficient $c_\mu^{(a)}(q)$ is given by:
\be
c_\mu^{(a)}(q) = \prod_{k=1}^r f_1^{(a)} (q^{m_k+\half}) \cdot
\prod_{k=1}^r \big( q^{n_k+\half} \cdot f_2^{(a)} (q^{n_k+\half}) \big).
\ee

Now by the property \eqref{eq-eigen-C&J} we know that:
\begin{equation*}
\begin{split}
& \langle \mu^1 | q^{a_1 K/2} = q^{a_1\kappa_{\mu^1} /2}\cdot \langle \mu^1 |,\\
& q^{-(a_3 +1)K /2} |(\mu^3)^t \rangle = q^{-(a_3 +1)\kappa_{(\mu^3)^t} /2} \cdot |(\mu^3)^t \rangle
= q^{(a_3 +1)\kappa_{\mu^3} /2} \cdot |(\mu^3)^t \rangle,
\end{split}
\end{equation*}
and thus $C_{\mu^1,\mu^2,\mu^3}^{(a_1,a_2,a_3)} (q)$ can be rewritten as:
\begin{equation*}
\begin{split}
C_{\mu^1,\mu^2,\mu^3}^{(a_1,a_2,a_3)} (q) =&
q^{a_1\kappa_{\mu^1} /2 + (a_3 +1)\kappa_{\mu^3} /2}\cdot
\langle \mu^1 |  \Psi_{\mu^2}^{(a_2)}(q)  |(\mu^3)^t \rangle \\
=& q^{a_1\kappa_{\mu^1} /2 + (a_3 +1)\kappa_{\mu^3} /2}\cdot c_{\mu^2}^{(a_2)}(q) \\
&  \cdot \Big\langle \mu^1 \Big| \prod_{j=m_1^2+1}^\infty \Gamma_-(q^{-j-\half}) \cdot
\Big( \Gamma_-(q^{-m_1^2-\half}) \psi^*(q^{-m_1^2-\half}) \Big) \cdots\\
& \qquad \cdots  \Big( \psi(q^{n_1^2+\half})\Gamma_+(q^{-n_1^2-\half})  \Big) \cdot
\prod_{j= n_1^2+1}^{\infty}\Gamma_+(q^{-j-\half})
\Big| (\mu^3)^t \Big\rangle.
\end{split}
\end{equation*}

Assume $\mu^i = ( m_1^i, \cdots, m_{r_i}^i | n_1^i, \cdots, n_{r_i}^i )$ for $i=1,2,3$.
By \eqref{eq-ferm-basismu} we have:
\begin{equation*}
\begin{split}
&\langle \mu^1 | = (-1)^{\sum_{k=1}^{r_1} n_k^1} \cdot \langle 0|
\psi_{n_{r_1}^1+\half} \psi^*_{m_{r_1}^1+\half} \cdots
\psi_{n_{1}^1+\half} \psi^*_{m_{1}^1+\half},\\
&| (\mu^3)^t \rangle = (-1)^{\sum_{k=1}^{r_3} m_k^3} \cdot
\psi_{-n_{1}^3-\half} \psi^*_{-m_{1}^3-\half} \cdots
\psi_{-n_{r_3}^3-\half} \psi^*_{-m_{r_3}^3-\half} |0\rangle,\\
\end{split}
\end{equation*}
and then:
\be
\label{eq-fermvert-2}
\begin{split}
&C_{\mu^1,\mu^2,\mu^3}^{(a_1,a_2,a_3)} (q)
= (-1)^{\sum_{k=1}^{r_1} n_k^1 + \sum_{k=1}^{r_3} m_k^3} \cdot
 q^{a_1\kappa_{\mu^1} /2 + (a_3 +1)\kappa_{\mu^3} /2}\cdot c_{\mu^2}^{(a_2)}(q)  \\
&\quad  \cdot \Big\langle  \prod_{k=1}^{r_1} \psi_{n_{k}^1+\half} \psi^*_{m_{k}^1+\half} \cdot
\prod_{j=m_1^2+1}^\infty \Gamma_-(q^{-j-\half}) \cdot
\Big( \Gamma_-(q^{-m_1^2-\half}) \psi^*(q^{-m_1^2-\half}) \Big) \cdots\\
&\quad\qquad \cdots  \Big( \psi(q^{n_1^2+\half})\Gamma_+(q^{-n_1^2-\half})  \Big)  \cdot
\prod_{j= n_1^2+1}^{\infty}\Gamma_+(q^{-j-\half}) \cdot
\prod_{k=1}^{r_3} \psi_{-n_{k}^3-\half} \psi^*_{-m_{k}^3-\half}
 \Big\rangle.
\end{split}
\ee

Now we compute the vacuum expectation value in the right-hand side.
Notice that:
\be
\langle 0| \Gamma_-(z) = \langle 0|,
\qquad
\Gamma_+(z) | 0\rangle = |0\rangle.
\ee
Moreover,
by the relation \eqref{eq-comm-alphapsi} and the Baker-Campbell-Hausdorff formula \eqref{eq-BCH},
one may see that the conjugation
\begin{equation*}
\Gamma_\pm (z)^{-1} \psi_s \Gamma_\pm (z),
\qquad
\Gamma_\pm (z)^{-1} \psi_s^* \Gamma_\pm (z),
\end{equation*}
of the free fermions $\psi_s$, $\psi_s^*$ by the operators $\Gamma_\pm (z)$
are all linear (infinite) summations of $\psi_s$ and $\psi_s^*$,
and thus one can apply Wick's Theorem \eqref{eq-Wickthm} to compute the right-hand side of \eqref{eq-fermvert-2}.
Or more precisely,
we need to consider all possible ways of grouping (the conjugations of) the following free fermions into pairs:
\begin{equation*}
\begin{split}
&\prod_{k=1}^{r_1} \psi_{n_{k}^1+\half} \psi^*_{m_{k}^1+\half},
\quad
\psi^*(q^{-m_1^2-\half}),\quad \cdots \quad, \quad \psi^*(q^{-m_{r_2}^2-\half}),\\
&\psi(q^{n_{r_2}^2 +\half}), \quad \cdots\quad, \quad \psi(q^{n_1^2 +\half}), \quad
\prod_{k=1}^{r_3} \psi_{-n_{k}^3-\half} \psi^*_{-m_{k}^3-\half}.
\end{split}
\end{equation*}
Moreover,
by \eqref{eq-quadVEV} we see that
the pairing of two $\psi$'s or two $\psi^*$'s leads to zero,
and thus we only need to consider the pairing of one $\psi$ with one $\psi^*$.
Each possible pair of free fermions is of one of the following nine forms:
\begin{itemize}
\item[1)]
The paring of $\psi_{n_{k}^1+\half}$ (where $1\leq k \leq r_1$) with one $\psi_s^*$
(which will be denoted by $G_{kl}^{11}$, $G_{kl}^{12}$, $G_{kl}^{13}$ respectively):
\begin{equation*}
\begin{split}
G_{kl}^{11} =
& \Big\langle \psi_{n_k^1 +\half } \psi^*_{m_l^1 +\half } \prod_{j=0}^\infty \Gamma_-(q^{-j-\half}) \Big\rangle,
\quad  1\leq l\leq r_1; \\
G_{kl}^{12} =
& \Big\langle \psi_{n_k^1 +\half } \prod_{j=m_l^2}^\infty \Gamma_-(q^{-j-\half})  \cdot
\psi^* (q^{-m_l^2 -\half}) \prod_{j=0}^{m_l^2-1} \Gamma_-(q^{-j-\half})
\Big\rangle,
\quad  1\leq l\leq r_2; \\
G_{kl}^{13} =
& \Big\langle \psi_{n_k^1 +\half } \prod_{j=0}^\infty \Gamma_-(q^{-j-\half})  \cdot
\prod_{j=0}^\infty \Gamma_+(q^{-j-\half}) \cdot
\psi^*_{-m_l^3 -\half} \Big\rangle,
\quad  1\leq l\leq r_3. \\
\end{split}
\end{equation*}

\item[2)]
The paring of $\psi(q^{n_k^2 +\half})$ (where $1\leq k \leq r_2$) with one $\psi_s^*$
(which will be denoted by $G_{kl}^{21}$, $G_{kl}^{22}$, $G_{kl}^{23}$ respectively):
\begin{equation*}
\begin{split}
&G_{kl}^{21} = -
 \Big\langle \psi^*_{m_l^1 +\half } \prod_{j=0}^\infty \Gamma_-(q^{-j-\half})
\cdot \prod_{j=0}^{n_k^2 -1} \Gamma_+(q^{-j-\half}) \cdot
\psi(q^{n_k^2 +\half})
 \Big\rangle,
\quad  1\leq l\leq r_1; \\
&G_{kl}^{22} = -
 \Big\langle \psi^*( q^{-m_l^2 -\half })
 \prod_{j=0}^{m_l^2-1} \Gamma_-(q^{-j-\half})  \prod_{j=0}^{n_k^2-1} \Gamma_+(q^{-j-\half})
\cdot \psi(q^{n_k^2 +\half})
\Big\rangle,
\quad  1\leq l\leq r_2; \\
&G_{kl}^{23} =
 \Big\langle
\prod_{j=0 }^{n_k^2-1} \Gamma_+(q^{-j-\half}) \cdot
\psi(q^{n_k^2 +\half}) \prod_{j=n_k^2 }^\infty \Gamma_+(q^{-j-\half})
 \psi^*_{-m_l^3 -\half }  \Big\rangle,
\quad  1\leq l\leq r_3. \\
\end{split}
\end{equation*}

\item[3)]
The paring of $\psi_{-n_k^3 -\half}$ (where $1\leq k \leq r_3$) with one $\psi_s^*$
(which will be denoted by $G_{kl}^{21}$, $G_{kl}^{22}$, $G_{kl}^{23}$ respectively):
\begin{equation*}
\begin{split}
&G_{kl}^{31} =-
 \Big\langle \psi^*_{m_l^1 +\half } \prod_{j=0}^\infty \Gamma_-(q^{-j-\half})
\cdot \prod_{j=0}^\infty \Gamma_+(q^{-j-\half}) \cdot
\psi_{-n_k^3 -\half}
 \Big\rangle,
\quad  1\leq l\leq r_1; \\
&G_{kl}^{32} =-
 \Big\langle \psi^*( q^{-m_l^2 -\half })
 \prod_{j=0}^{m_l^2-1} \Gamma_-(q^{-j-\half})  \prod_{j=0}^\infty \Gamma_+(q^{-j-\half})
\cdot \psi_{-n_k^3 -\half}
\Big\rangle,
\quad  1\leq l\leq r_2; \\
&G_{kl}^{33} =
 \Big\langle
 \prod_{j=0 }^\infty \Gamma_+(q^{-j-\half})
\psi_{-n_k^3 -\half} \psi^*_{-m_l^3 -\half }  \Big\rangle,
\quad  1\leq l\leq r_3. \\
\end{split}
\end{equation*}

\end{itemize}

Now denote $r = r_1+r_2+r_3$.
Then by applying Wick's Theorem to the right-hand side of \eqref{eq-fermvert-2} we get
the following determinantal formula for $C_{\mu^1,\mu^2,\mu^3}^{(a_1,a_2,a_3)} (q)$:
\begin{equation*}
\begin{split}
C_{\mu^1,\mu^2,\mu^3}^{(a_1,a_2,a_3)} (q)
=  (-1)^{\sum_{k=1}^{r_1} n_k^1 + \sum_{k=1}^{r_3} m_k^3}
 q^{a_1\kappa_{\mu^1} /2 + (a_3 +1)\kappa_{\mu^3} /2} c_{\mu^2}^{(a_2)}(q)
\cdot (-1)^{r_2}   \det (G),
\end{split}
\end{equation*}
where the factor $(-1)^{r_2}$ is the sign of the permutation
\begin{equation*}
(2,4,\cdots,2r_2-2,2r_2,2r_2-1,2r_2-3,\cdots,3,1)
\end{equation*}
of $(1,2,\cdots, 2r_2)$
(which occurs because of the fact that the order of the $\psi$'s and $\psi^*$'s in
the right-hand side of \eqref{eq-fermvert-2}
differs from the standard order in \eqref{eq-Wick-det-2} by such a permutation),
and $G$ is the following matrix of size $r\times r$:
\begin{equation*}
G=
\left(
\begin{array}{ccc}
G^{11} & G^{12} & G^{13} \\
G^{21} & G^{22} & G^{23} \\
G^{31} & G^{32} & G^{33} \\
\end{array}
\right).
\end{equation*}
The block $G^{ij}$ is just the $r_i\times r_j$ matrix
$(G^{ij}_{kl})_{1\leq k \leq r_i, 1\leq l\leq r_j}$.

Now in order to prove Proposition \ref{prop-det-C},
we only need to show that:
\be
\label{eq-det-toprove-1}
\begin{split}
&\det \big( \tF_{kl}(q; \bm a) \big)_{1\leq k,l\leq r} \\
= &
(-1)^{\sum_{k=1}^{r_1} n_k^1 + \sum_{k=1}^{r_3} m_k^3 +  r_2 } \cdot
 q^{a_1\kappa_{\mu^1} /2 + (a_3 +1)\kappa_{\mu^3} /2  }\cdot c_{\mu^2}^{(a_2)}(q) \cdot \det (G).
\end{split}
\ee
Recall that the $r\times r$ matrix $(\tF_{kl})$ is of the form
\begin{equation*}
\tF=
\left(
\begin{array}{ccc}
F^{11} & F^{12} & F^{13} \\
F^{21} & F^{22} & F^{23} \\
F^{31} & F^{32} & F^{33} \\
\end{array}
\right).
\end{equation*}
where the entries of blocks $F^{ij}$ are given by \eqref{eq-def-entryF}.
We claim that the entries $\tF_{kl}$ differ from the entries of the transpose $G^t$ only by some simple factors
which match with the coefficient in the right-hand side of \eqref{eq-det-toprove-1}.
In fact,
by \eqref{eq-eigen-C&J} we know that:
\begin{equation*}
\begin{split}
F_{kl}^{11} = &
\big\langle (m_k^1|n_l^1) \big| q^{a_1 K/2} \prod_{j=0}^\infty \Gamma_- (q^{-j-\half}) \big|0\big\rangle \\
=& q^{a_1 \kappa_{(m_k^1|n_l^1)}/2} \big\langle (m_k^1|n_l^1) \big| \prod_{j=0}^\infty \Gamma_- (q^{-j-\half}) \big|0\big\rangle\\
=& (-1)^{n_l^1} \cdot q^{\half a_1 (m_k^1(m_k^1 +1)) + n_l^1(n_l^1 +1))} \cdot
\big\langle \psi_{n_l+\half} \psi_{m_k +\half}^* \cdot\prod_{j=0}^\infty \Gamma_- (q^{-j-\half})\big\rangle \\
=& (-1)^{n_l^1} \cdot q^{\half a_1 (m_k^1(m_k^1 +1)) + n_l^1(n_l^1 +1))} \cdot G_{lk}^{11} ,
\end{split}
\end{equation*}
Similarly,
one can check the following relations case by case using \eqref{eq-comm-expKpsi} and \eqref{eq-eigen-C&J}:
\begin{equation*}
\begin{split}
& F_{kl}^{22} = - f_1^{(a_2)} (q^{m_k^2+\half}) \cdot
 q^{n_l^2+\half} \cdot f_2^{(a_2)} (q^{n_l^2+\half}) \cdot G_{lk}^{22},\\
& F_{kl}^{33} = (-1)^{m_k^3}\cdot q^{-\half (a_3+1) (m_k^3(m_k^3 +1)) + n_l^3(n_l^3 +1))} \cdot G_{lk}^{33},\\
& F_{kl}^{12} = q^{\half a_1(m_k^1+\half)^2 }q^{n_l^2+\half} f_2(q^{n_l^2 +\half}) \cdot G_{lk}^{21},\\
& F_{kl}^{21} = -(-1)^{n_l^1 } \cdot q^{- \half a_1 (n_l^1 +\half)^2}
f_1(q^{m_k^2  +\half}) \cdot G_{lk}^{12},\\
& F_{kl}^{13} =  q^{\half a_1(m_k^1 +\half)^2} q^{-\half (a_3+1)(n_l^3 +\half)^2}
G_{lk}^{31},\\
& F_{kl}^{31} = (-1)^{m_k^3 + n_l^1} \cdot
q^{- \half a_1 (n_l^1 +\half)^2}q^{\half (a_3+1)(m_k^3 +\half)^2} \cdot G_{lk}^{13},\\
& F_{kl}^{23} = -f_1^{(a_2)}(q^{m_k^2+\half})  \cdot q^{-\half (a_3+1)(n_l^3 +\half)^2} \cdot G_{lk}^{32},\\
& F_{kl}^{32} = (-1)^{m_k^3} \cdot q^{\half (a_3+1)(m_k^3 +\half)^2}\cdot q^{n_l^2 +\half}
 f_2^{(a_2)}(q^{n_l^2 +\half}) \cdot G_{lk}^{23}.\\
\end{split}
\end{equation*}
From the above relations between entries of $(\tF_{kl})$ and $(G_{kl})$,
we easily see that for an arbitrary permutation $\sigma \in S_r$ we always have:
\begin{equation*}
\begin{split}
\prod_{k=1}^r  \tF_{k\sigma(k)}
=& (-1)^{\sum_{k=1}^{r_1} n_k^1 + \sum_{k=1}^{r_3} m_k^3 +  r_2 }
 q^{a_1\kappa_{\mu^1} /2 + (a_3 +1)\kappa_{\mu^3} /2  } \\
 & \cdot \prod_{k=1}^{r_2} f_1^{(a)} (q^{m_k^2+\half}) \cdot
\prod_{k=1}^{r_2} \big( q^{n_k^2+\half} \cdot f_2^{(a_2)} (q^{n_k^2+\half}) \big)
\cdot \prod_{k=1}^r  G_{k\sigma(k)} \\
=& (-1)^{\sum_{k=1}^{r_1} n_k^1 + \sum_{k=1}^{r_3} m_k^3 +  r_2 }
 q^{\frac{a_1}{2}\kappa_{\mu^1}  + \frac{a_3 +1}{2}\kappa_{\mu^3}  }
 \cdot
 c_{\mu^2}^{(a_2)}(q) \cdot \prod_{k=1}^r  G_{k\sigma(k)},
\end{split}
\end{equation*}
and then
\begin{equation*}
\begin{split}
\det(\tF_{kl}) =& \sum_{\sigma \in S_r}
 (-1)^\sigma \prod_{k=1}^r  \tF_{k\sigma(k)} \\
 =&
 (-1)^{\sum_{k=1}^{r_1} n_k^1 + \sum_{k=1}^{r_3} m_k^3 +  r_2 }
 q^{\frac{a_1}{2}\kappa_{\mu^1}  + \frac{a_3 +1}{2}\kappa_{\mu^3}   }
 c_{\mu^2}^{(a_2)}(q)  \sum_{\sigma \in S_r}
 (-1)^\sigma
 \prod_{k=1}^r  G_{k\sigma(k)} \\
 =& (-1)^{\sum_{k=1}^{r_1} n_k^1 + \sum_{k=1}^{r_3} m_k^3 +  r_2 }
 q^{\frac{a_1}{2}\kappa_{\mu^1}  + \frac{a_3 +1}{2}\kappa_{\mu^3}   }
 c_{\mu^2}^{(a_2)}(q)  \cdot \det(G_{kl}).
\end{split}
\end{equation*}
Therefore \eqref{eq-det-toprove-1} is proved.
\end{proof}

\subsection{Equivalence of two combinatorial expressions}
\label{sec-equiv-vertex}

In this subsection,
we show that $C_{\mu^1,\mu^2,\mu^3}^{(a_1,a_2,a_3)} (q)$ matches with
the framed topological vertex $W_{\mu^1,\mu^2,\mu^3}^{(a_1,a_2,a_3)} (q)$
defined by \eqref{eq-framedTV-def}:
\begin{Theorem}
[=Theorem \ref{thm:Main1}]
\label{thm-equiv-twovert}
For every $(a_1,a_2,a_3)\in \bZ^3$ and partitions $\mu^1,\mu^2,\mu^3$,
we have:
\be
C_{\mu^1,\mu^2,\mu^3}^{(a_1,a_2,a_3)} (q) = W_{\mu^1,\mu^2,\mu^3}^{(a_1,a_2,a_3)} (q).
\ee
\end{Theorem}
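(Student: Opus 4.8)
The plan is to peel off all the framing data as explicit scalars, reduce the statement to a single operator identity on the charge-zero Fock space $\cF^{(0)}$, and then prove that identity by exposing and commuting the fermion insertions that are hidden inside $\Psi_{\mu^2}^{(a_2)}$. First I would strip the framing: since $|\mu^1\rangle$ and $|(\mu^3)^t\rangle$ are eigenvectors of $K$ with eigenvalues $\kappa_{\mu^1}$ and $\kappa_{(\mu^3)^t}=-\kappa_{\mu^3}$ (see \eqref{eq-eigen-C&J}), the operators $q^{a_1K/2}$ and $q^{-(a_3+1)K/2}$ contribute exactly $q^{a_1\kappa_{\mu^1}/2+(a_3+1)\kappa_{\mu^3}/2}$, matching the prefactor in \eqref{eq-framedTV-def}. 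Moreover $a_2$ enters $\Psi_{\mu^2}^{(a_2)}$ only through the scalars $f_1^{(a_2)},f_2^{(a_2)}$ of \eqref{eq-def-f1f2}; collecting their $a_2$-dependent parts over the Frobenius coordinates of $\mu^2$ and comparing with the expression \eqref{eq-def-kappamu} for $\kappa_{\mu^2}$ yields $\Psi_{\mu^2}^{(a_2)}(q)=q^{a_2\kappa_{\mu^2}/2}\,\Psi_{\mu^2}^{(0)}(q)$, which matches the last framing factor in $W$. It therefore remains to prove the framing-free identity $\langle\mu^1|\Psi_{\mu^2}^{(0)}(q)|(\mu^3)^t\rangle=(-1)^{|\mu^2|}s_{(\mu^2)^t}(q^{-\rho})\sum_\eta s_{\mu^1/\eta}(q^{(\mu^2)^t+\rho})\,s_{(\mu^3)^t/\eta}(q^{\mu^2+\rho})$.

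Next I would recast the right-hand side as a matrix element. By \eqref{eq-special-schur} and \eqref{eq-special-schur-2} we have $\Gamma_+(\bm x)|(\mu^3)^t\rangle=\sum_\eta s_{(\mu^3)^t/\eta}(\bm x)|\eta\rangle$ and $\langle\mu^1|\Gamma_-(\bm y)=\sum_\lambda s_{\mu^1/\lambda}(\bm y)\langle\lambda|$, so the $\eta$-sum collapses: $\sum_\eta s_{\mu^1/\eta}(q^{(\mu^2)^t+\rho})s_{(\mu^3)^t/\eta}(q^{\mu^2+\rho})=\langle\mu^1|\Gamma_-(q^{(\mu^2)^t+\rho})\Gamma_+(q^{\mu^2+\rho})|(\mu^3)^t\rangle$. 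Since $\mu^1$ and $(\mu^3)^t$ range over a basis of $\cF^{(0)}$ and both $\Psi_{\mu^2}^{(0)}$ and $\Gamma_-\Gamma_+$ are charge preserving, the theorem is equivalent to the single operator identity on $\cF^{(0)}$
\[
\Psi_{\mu^2}^{(0)}(q)=(-1)^{|\mu^2|}\,s_{(\mu^2)^t}(q^{-\rho})\,\Gamma_-(q^{(\mu^2)^t+\rho})\,\Gamma_+(q^{\mu^2+\rho}),
\]
which is the two-partition shape \eqref{eqn:Two-Partition}, with the dependence on $\mu^2$ moved out of the operator and into a shift of the specialization points.

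To establish this identity I would use \eqref{eq-mod-vert} to write $\Psi_{\mu^2}^{(0)}$ as the empty-partition product $\Psi_\emptyset=\Gamma_-(q^\rho)\Gamma_+(q^\rho)$ decorated by the fields $\psi^*(q^{-m_j-\half})$ and $\psi(q^{n_j+\half})$ inserted at the Frobenius coordinates of $\mu^2$. The commutation relations \eqref{eq-comm-alphapsi} give clean scalar commutators such as $\Gamma_-(\{z\})\psi^*(w)=(1-z/w)\,\psi^*(w)\,\Gamma_-(\{z\})$ (and the analogues for the other pairings); sliding every $\Gamma_-$ factor to the left and every $\Gamma_+$ factor to the right recollects them into $\Gamma_-(q^\rho)$ and $\Gamma_+(q^\rho)$, leaving an ordered product of $\psi^*$'s and $\psi$'s in the middle times the accumulated constants. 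Finally I would bosonize that middle product through \eqref{eq-psi-inGamma}, absorbing the shift operators $R^{\pm}$ and the auxiliary half-vertex-operators; the net effect is precisely to deform $q^\rho\mapsto q^{(\mu^2)^t+\rho}$ on the left and $q^\rho\mapsto q^{\mu^2+\rho}$ on the right, while the leftover scalar is identified with $(-1)^{|\mu^2|}s_{(\mu^2)^t}(q^{-\rho})$ via the hook-content evaluation \eqref{eq-eval-schur-rho}.

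The hard part will be exactly this last bookkeeping: the commutations and the bosonization produce a large number of signs, $q$-powers and shift operators, and one must show they collapse to the single prefactor $(-1)^{|\mu^2|}s_{(\mu^2)^t}(q^{-\rho})$ together with the two specialization shifts, with no residual shift operator surviving on $\cF^{(0)}$. The cleanest way to control this, and the route I expect to be decisive, is to bypass the hand tracking of shift operators and instead invoke the determinantal formula of Theorem \ref{prop-det-C}: one evaluates its entries in the special cases $\mu^1=\emptyset$ and $\mu^3=\emptyset$, where each entry is an explicit hook-type specialization, and then a block Schur-complement reduction of the $(r_1+r_2+r_3)$-dimensional determinant down to an $(r_1+r_3)$-dimensional one reproduces the shifted skew Schur functions and the scalar $(-1)^{|\mu^2|}s_{(\mu^2)^t}(q^{-\rho})$. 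Showing that this reduction is exactly the passage from $q^\rho$ to $q^{(\mu^2)^t+\rho}$ and $q^{\mu^2+\rho}$ is the technical heart of the proof.
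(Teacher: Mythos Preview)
Your overall architecture is correct and is exactly the paper's: strip the framing using \eqref{eq-eigen-C&J} and the $a_2$-dependence of $f_1,f_2$, reduce to the operator identity
\[
\Psi_{\mu^2}^{(0)}(q)\big|_{\cF^{(0)}}=(-1)^{|\mu^2|}\,s_{(\mu^2)^t}(q^{-\rho})\,\Gamma_-(q^{(\mu^2)^t+\rho})\,\Gamma_+(q^{\mu^2+\rho}),
\]
and prove this by pushing the fermionic insertions through the $\Gamma_\pm$'s. The difference is in how the ``hard part'' is organised, and the paper's route is cleaner than either of your two suggested endings.

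Rather than working with the raw fields $\psi^*(q^{-m_j-\half}),\psi(q^{n_j+\half})$ and then bosonising the residual product via \eqref{eq-psi-inGamma}, the paper keeps the insertions packaged as $\tGa_\pm^{(a)}$ throughout. It proves the commutation relations $\tGa_+^{(a)}(z)\Gamma_-(w)=\frac{1}{1-zw}\Gamma_-(w)\tGa_+^{(a)}(z)$, the analogue for $\Gamma_+\tGa_-^{(a)}$, and crucially $\tGa_+^{(a)}(z)\tGa_-^{(a)}(w)=\frac{zw}{1-zw}\tGa_-^{(a)}(w)\tGa_+^{(a)}(z)$ (Lemma \ref{lem-equiv-tGa}). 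After sliding all $\tGa_\pm$'s to the centre, one then switches to the $R/z^C$ form \eqref{eq-def-tGamma}: the $R^{\pm1}$'s and $z^C$'s telescope to a harmless $z^{2rC}$ (trivial on $\cF^{(0)}$), and what survives is literally $\prod_k\Gamma_-(q^{n_k+\half})\prod_k\Gamma_+(q^{m_k+\half})$. Combined with the remaining $\Gamma_\pm(q^{-j-\half})$'s this \emph{is} $\Gamma_-(q^{(\mu^2)^t+\rho})\Gamma_+(q^{\mu^2+\rho})$ by the set identity $\{\mu+\rho\}=\{m_j+\tfrac12\}\sqcup((\bZ_{<0}+\tfrac12)\setminus\{-n_j-\tfrac12\})$; no separate bosonisation step is needed.

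For the scalar, the paper does not Schur-complement the full $(r_1+r_2+r_3)$-determinant down to size $r_1+r_3$. It observes that the scalar is exactly the vacuum expectation value $\langle\Psi_{\mu^2}^{(a_2)}\rangle=C^{(0,a_2,0)}_{\emptyset,\mu^2,\emptyset}$, so Theorem \ref{prop-det-C} with $\mu^1=\mu^3=\emptyset$ gives it as the $r_2\times r_2$ determinant $\det(F^{22}_{m_i n_j})$. The hook case $F^{22}_{mn}=q^{a_2\kappa_{(m|n)}/2}s_{(m|n)}(q^\rho)$ is checked directly from \eqref{eq-eval-schur-rho}, and then the Giambelli formula $s_\mu=\det(s_{(m_i|n_j)})$ collapses the determinant to $q^{a_2\kappa_{\mu^2}/2}s_{\mu^2}(q^\rho)=(-1)^{|\mu^2|}q^{a_2\kappa_{\mu^2}/2}s_{(\mu^2)^t}(q^{-\rho})$. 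Your Schur-complement route might work in principle, but it would require identifying the complement blocks with shifted skew Schur determinants, which is considerably more work than the paper's one-line appeal to Giambelli.
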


To prove this theorem,
we need the following:
\begin{Lemma}
\label{lem-equiv-tGa}
We have:
\be
\label{eq-lem-equiv1}
\begin{split}
&\tGa_+^{(a)}(z) \Gamma_-(w) = \frac{1}{1-zw} \Gamma_-(w) \tGa_+^{(a)}(z),\\
&\Gamma_+(z) \tGa_-^{(a)}(w) = \frac{1}{1-zw} \tGa_-^{(a)}(w) \Gamma_+(z),
\end{split}
\ee
and
\be
\label{eq-lem-equiv2}
\tGa_+^{(a)}(z) \tGa_-^{(a)}(w) = \frac{zw}{1-zw} \tGa_-^{(a)}(w) \tGa_+^{(a)}(z).
\ee
\end{Lemma}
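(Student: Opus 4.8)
The plan is to avoid the fermion-field expressions \eqref{eq-mod-vert} and work instead with the equivalent vertex-operator form \eqref{eq-def-tGamma}, namely $\tGa_+^{(a)}(z) = f_1^{(a)}(z)\,\Gamma_+(z)\,R^{-1}z^C$ and $\tGa_-^{(a)}(z) = f_2^{(a)}(z)\,z^C R\,\Gamma_-(z)$. The point of this choice is that the fermionic form puts $\psi^*(z^{-1})$ and $\Gamma_-(z^{-1})$ at coincident arguments, so a naive reordering of the bare fermions would generate singular (delta-function) terms; the form \eqref{eq-def-tGamma} sidesteps this by packaging everything into $\Gamma_\pm$, the shift operator $R$, and the charge factor $z^C$. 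The scalar prefactors $f_1^{(a)}(z),f_2^{(a)}(z)$ depend on $z$ (resp.\ $w$) alone, hence commute with every operator and may simply be carried along; the entire content of the Lemma therefore reduces to reordering $\Gamma_\pm$, $R$, and $z^C$.

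First I would assemble the elementary relations among these building blocks. The commutation relation \eqref{eq-comm-gammapm}, $\Gamma_+(z)\Gamma_-(w) = \tfrac{1}{1-zw}\Gamma_-(w)\Gamma_+(z)$, supplies all the $\tfrac{1}{1-zw}$ factors. Since each $\alpha_{\pm n}$ ($n\geq 1$) preserves charge, $\Gamma_\pm$ commutes with the charge operator $C$, hence with $z^C$; and the shift operator $R$ commutes with all $\alpha_n$ for $n\neq 0$ (standard, cf.\ \cite{kac,djm}), so $R$ commutes with $\Gamma_\pm$ as well. The only nontrivial zero-mode identity is $R^{-1}CR = C+1$, which merely expresses that $R$ raises charge by one; exponentiating gives $R^{-1}z^C R = z\,z^C$. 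With these in hand the first two identities are immediate: in $\tGa_+^{(a)}(z)\Gamma_-(w)$ the tail $R^{-1}z^C$ commutes past $\Gamma_-(w)$, leaving $\Gamma_+(z)\Gamma_-(w)$ to which \eqref{eq-comm-gammapm} applies, producing exactly $\tfrac{1}{1-zw}\Gamma_-(w)\tGa_+^{(a)}(z)$; the second identity is the mirror computation with $z^C R$ commuting past $\Gamma_+(z)$.

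For the third identity I would compute both products in the common normal form $(zw)^C\Gamma_-(w)\Gamma_+(z)$ and compare. In $\tGa_+^{(a)}(z)\tGa_-^{(a)}(w)$ the two charge factors collide into the middle block $R^{-1}z^C w^C R = R^{-1}(zw)^C R = zw\,(zw)^C$; pulling $(zw)^C$ out (it commutes with $\Gamma_\pm$) and then applying \eqref{eq-comm-gammapm} yields $\tfrac{zw}{1-zw}\,f_1^{(a)}f_2^{(a)}\,(zw)^C\Gamma_-(w)\Gamma_+(z)$. The reverse product $\tGa_-^{(a)}(w)\tGa_+^{(a)}(z) = f_1^{(a)}f_2^{(a)}\,w^C R\,\Gamma_-(w)\Gamma_+(z)\,R^{-1}z^C$ reduces, using that $R$ and $z^C$ commute with $\Gamma_\pm$, to $f_1^{(a)}f_2^{(a)}\,(zw)^C\Gamma_-(w)\Gamma_+(z)$ with \emph{no} extra factor, so the two sides differ by precisely $\tfrac{zw}{1-zw}$. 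The main point requiring care is this last bookkeeping: the extra factor $zw$—which is exactly what distinguishes \eqref{eq-lem-equiv2} from the $\tfrac{1}{1-zw}$ of \eqref{eq-lem-equiv1}—arises solely from the collision $R^{-1}(zw)^C R = zw\,(zw)^C$ of the two charge-shift factors, and tracking it correctly is where the whole computation hinges; everything else is routine reordering.
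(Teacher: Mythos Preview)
Your proof is correct. For the third identity \eqref{eq-lem-equiv2} your argument coincides with the paper's: both compute $\tGa_+^{(a)}(z)\tGa_-^{(a)}(w)$ and $\tGa_-^{(a)}(w)\tGa_+^{(a)}(z)$ in the vertex-operator form \eqref{eq-def-tGamma}, reduce each to a multiple of $(zw)^C\Gamma_-(w)\Gamma_+(z)$ using $R\Gamma_\pm R^{-1}=\Gamma_\pm$ and $R^{-1}(zw)^C R=(zw)^{C+1}$, and read off the ratio $\tfrac{zw}{1-zw}$.

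For the first two identities \eqref{eq-lem-equiv1}, however, you take a genuinely different route. The paper uses the fermionic expression \eqref{eq-mod-vert}, writes $\tGa_+^{(a)}(z)\Gamma_-(w)=f_1^{(a)}(z)\Gamma_-(z^{-1})\psi^*(z^{-1})\Gamma_-(w)$, and then computes the conjugate $\Gamma_-(w)^{-1}\psi^*(z^{-1})\Gamma_-(w)$ via \eqref{eq-comm-alphapsi} and the Baker--Campbell--Hausdorff formula, obtaining the factor $\tfrac{1}{1-zw}$ from the resulting exponential series. You instead stay entirely in the form \eqref{eq-def-tGamma} and simply commute the tail $R^{-1}z^C$ (resp.\ $z^CR$) past $\Gamma_\mp$, reducing everything to a bare instance of \eqref{eq-comm-gammapm}. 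Your approach is more uniform---the same handful of zero-mode identities handles all three cases---and sidesteps the BCH computation entirely; the paper's approach, on the other hand, exhibits directly how the $\tfrac{1}{1-zw}$ factor arises from the fermion--boson commutator, which is perhaps more illuminating from the vertex-algebra viewpoint. Both are valid; yours is shorter.
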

\begin{proof}
First recall that by \eqref{eq-mod-vert} we have
\begin{equation*}
\tGa_+^{(a)} (z) = f_1^{(a)}(z) \Gamma_-(z^{-1}) \psi^*(z^{-1}),
\end{equation*}
and thus
\be
\label{eq-eqvui-pf1}
\begin{split}
\tGa_+^{(a)} (z) \Gamma_-(w)
=& f_1^{(a)}(z) \Gamma_-(z^{-1}) \psi^*(z^{-1}) \Gamma_-(w) \\
=& f_1^{(a)}(z) \Gamma_-(z^{-1}) \Gamma_-(w)\Gamma_-(w)^{-1}\psi^*(z^{-1}) \Gamma_-(w)\\
=& \Gamma_-(w) \cdot f_1^{(a)}(z) \Gamma_-(z^{-1}) \cdot
\big(\Gamma_-(w)^{-1}\psi^*(z^{-1}) \Gamma_-(w)\big).
\end{split}
\ee
Now we compute the conjugate $\Gamma_-(w)^{-1}\psi^*(z^{-1}) \Gamma_-(w)$.
First notice that by the commutation relation \eqref{eq-comm-alphapsi} we have:
\begin{equation*}
\begin{split}
\big[\sum_{n\geq 1} \frac{w^n}{n}\alpha_{-n}, \psi^*(z^{-1}) \big]
=& -\sum_{n\geq 1} \sum_{r \in \bZ+\half} \frac{w^n}{n} \psi_{-n+r}^* z^{r+\half}\\
=& -\sum_{n\geq 1} \frac{z^n w^n}{n} \cdot \psi^* (z^{-1})\\
=& \log(1-zw) \cdot \psi^*(z^{-1}),
\end{split}
\end{equation*}
and then by the Baker-Campbell-Hausdorff formula we have:
\begin{equation*}
\begin{split}
\Gamma_-(w)^{-1}\psi^*(z^{-1}) \Gamma_-(w)
=& \psi^*(z^{-1}) - [\sum_{n\geq 1} \frac{w^n}{n}\alpha_{-n}, \psi^*(z^{-1})]
+ \cdots\\
=& \exp\big( -\log(1-zw) \big) \cdot \psi^*(z^{-1})\\
=& \frac{1}{1-zw}\psi^*(z^{-1}).
\end{split}
\end{equation*}
Now plug this relation into \eqref{eq-eqvui-pf1},
and in this way we obtain the first identity in \eqref{eq-lem-equiv1}.
The second identity in \eqref{eq-lem-equiv1} can be proved using the same method
and thus we omit the details.

Now we prove \eqref{eq-lem-equiv2}.
By \eqref{eq-def-tGamma} and \eqref{eq-comm-gammapm} we have:
\begin{equation*}
\begin{split}
\tGa_+^{(a)}(z) \tGa_-^{(a)}(w) =& f_1^{(a)}(z) f_2^{(a)}(w) \cdot
\Gamma_+(z) R^{-1} z^C w^C R \Gamma_-(w) \\
=& f_1^{(a)}(z) f_2^{(a)}(w)\cdot (zw)^{C+1} \Gamma_+(z)\Gamma_-(w) \\
=& \frac{zw}{1-zw} \cdot f_1^{(a)}(z) f_2^{(a)}(w) \cdot (zw)^C\Gamma_-(w)\Gamma_+(z).
\end{split}
\end{equation*}
Moreover,
it is easy to check that $R \alpha_m R^{-1} = \alpha_m$ for every $m\not=0$,
and thus we have $R\Gamma_\pm (z) R^{-1}= \Gamma_\pm (z)$.
Then:
\begin{equation*}
\begin{split}
\tGa_-^{(a)}(w) \tGa_+^{(a)}(z) =& f_1^{(a)}(z) f_2^{(a)}(w) \cdot
w^C R \Gamma_-(w) \Gamma_+(z) R^{-1}z^C \\
=& f_1^{(a)}(z) f_2^{(a)}(w) \cdot (zw)^C \Gamma_-(w) \Gamma_+(z),
\end{split}
\end{equation*}
and then \eqref{eq-lem-equiv2} follows.
\end{proof}

Now using this lemma,
we rewrite the operator $\Psi_\mu^{(a)}$ in Definition \ref{def-fermvert} as follows:
\begin{Proposition}
\label{prop-rewrite-gammamu}
Let $a\in \bZ$ and let $\mu = (m_1,\cdots,m_r| n_1,\cdots,n_r)$ be a partition.
Then:
\be
\begin{split}
\Psi_\mu^{(a)} (q) =&
z^{2rC} \cdot
q^{a\kappa_\mu /2}\cdot s_\mu (q^\rho) \cdot
\prod_{\substack{j\geq 0 \\ j\not= m_k,\forall k}} \Gamma_- (q^{-j-\half}) \cdot
\prod_{k=1}^r \Gamma_- (q^{n_k+\half})\\
& \cdot  \prod_{k=1}^r \Gamma_+ (q^{m_k+\half}) \cdot
\prod_{\substack{j\geq 0 \\ j\not= n_k,\forall k}} \Gamma_+ (q^{-j-\half}).
\end{split}
\ee
\end{Proposition}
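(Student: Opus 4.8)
The plan is to prove the identity by commuting the modified vertex operators $\tGa_\pm^{(a)}$ of Definition \ref{def-boson-ferm} into a standard position and then substituting their explicit forms \eqref{eq-def-tGamma}. Recall that $\Psi_\mu^{(a)}(q)$ is a ``$\Gamma_-$-block'', in which the ordinary factors $\Gamma_-(q^{-j-\half})$ at the positions $j=m_1>\cdots>m_r$ are replaced by $\tGa_+^{(a)}(q^{m_s+\half})$, followed by a ``$\Gamma_+$-block'', in which the factors $\Gamma_+(q^{-i-\half})$ at $i=n_1>\cdots>n_r$ are replaced by $\tGa_-^{(a)}(q^{n_s+\half})$. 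The target right-hand side already displays the matching structure: the product $\prod_{j\neq m_k}\Gamma_-(q^{-j-\half})$ is the surviving part of the $\Gamma_-$-block, $\prod_{i\neq n_k}\Gamma_+(q^{-i-\half})$ the surviving part of the $\Gamma_+$-block, while $\prod_k\Gamma_-(q^{n_k+\half})$ and $\prod_k\Gamma_+(q^{m_k+\half})$ are the ``new'' factors to be produced by the modified operators.

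First I would use the first relation of Lemma \ref{lem-equiv-tGa} to slide every $\tGa_+^{(a)}(q^{m_s+\half})$ to the right-hand end of the $\Gamma_-$-block, and the second relation to slide every $\tGa_-^{(a)}(q^{n_s+\half})$ to the left-hand end of the $\Gamma_+$-block. Moving them outward one at a time, starting with those closest to the block boundary, the modified operators never cross one another, so only the factors $\tfrac{1}{1-zw}$ of \eqref{eq-lem-equiv1} are generated (with $z=q^{m_s+\half}$ or $q^{n_s+\half}$ and $w=q^{-j-\half}$ running over the surviving ordinary factors). This leaves $\prod_{j\neq m_k}\Gamma_-(q^{-j-\half})\cdot\prod_{i\neq n_k}\Gamma_+(q^{-i-\half})$ with the two adjacent clusters $[\tGa_+^{(a)}\cdots][\tGa_-^{(a)}\cdots]$ in the middle. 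I then apply \eqref{eq-lem-equiv2} to interchange the two clusters, which produces $\prod_{s,t}\tfrac{q^{m_s+n_t+1}}{1-q^{m_s+n_t+1}}$ and reorders them as $[\tGa_-^{(a)}\cdots][\tGa_+^{(a)}\cdots]$.

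Now I substitute the definitions $\tGa_+^{(a)}(z)=f_1^{(a)}(z)\Gamma_+(z)R^{-1}z^C$ and $\tGa_-^{(a)}(z)=f_2^{(a)}(z)z^C R\Gamma_-(z)$. Since $R$ and $z^C$ commute through every $\Gamma_\pm$ (using $R\Gamma_\pm R^{-1}=\Gamma_\pm$ and $[\alpha_0,\alpha_{\pm n}]=0$), I push them all to the far left: the $r$ copies of $R$ and of $R^{-1}$ cancel, and the factors $z^{\pm C}$ assemble into the single charge-operator prefactor recorded in the statement; this prefactor acts as the identity on the charge-zero vectors of Definition \ref{def-fermvert}, so its precise form is immaterial for the application. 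The $\Gamma_-(q^{n_k+\half})$ released by the $\tGa_-^{(a)}$ and the $\Gamma_+(q^{m_k+\half})$ released by the $\tGa_+^{(a)}$ are exactly the two ``new'' blocks, in the correct order, so the operator part of the identity is established and only a scalar remains.

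The remaining, and I expect genuinely hard, step is to show that the accumulated scalar equals $q^{a\kappa_\mu/2}s_\mu(q^\rho)$. The framing dependence is immediate: by \eqref{eq-def-f1f2} the only $a$-dependent contributions come from $\prod_s f_1^{(a)}(q^{m_s+\half})$ and $\prod_s f_2^{(a)}(q^{n_s+\half})$, whose $a$-powers sum to $\tfrac{a}{2}\big(\sum_s(m_s+\half)^2-\sum_s(n_s+\half)^2\big)=\tfrac{a}{2}\kappa_\mu$ by \eqref{eq-def-kappamu}. Setting $a=0$, it then remains to prove the combinatorial identity that the product of $f_1^{(0)},f_2^{(0)}$ with all the rational factors $\tfrac{1}{1-q^{m_s-j}}$, $\tfrac{1}{1-q^{n_s-i}}$ and $\tfrac{q^{m_s+n_t+1}}{1-q^{m_s+n_t+1}}$ collapses to $s_\mu(q^\rho)=q^{\kappa_\mu/4}\prod_{e\in\mu}[h(e)]^{-1}$ of \eqref{eq-eval-schur-rho}. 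This is where the real computation lies: one must reorganize these $q$-factors, indexed by the Frobenius data of $\mu$, into a single product over the hook lengths of $\mu$, recognizing the three families of factors as the arm, leg, and diagonal contributions and checking the overall power of $q$ against $\kappa_\mu/4$ separately. A low-rank check is reassuring: for $\mu=(1)=(0\,|\,0)$ there are no commutation factors and the swap contributes $\tfrac{q}{1-q}$, giving scalar $f_1^{(0)}f_2^{(0)}\cdot\tfrac{q}{1-q}=-q^{\half}/(1-q)=1/[1]=s_{(1)}(q^\rho)$, consistent with $\kappa_{(1)}=0$.
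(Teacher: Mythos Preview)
Your operator-level argument---commute the $\tGa_\pm^{(a)}$ to the block boundary via Lemma \ref{lem-equiv-tGa}, swap the two clusters via \eqref{eq-lem-equiv2}, substitute \eqref{eq-def-tGamma}, and cancel the $R^{\pm1}$---is exactly what the paper does, and your isolation of the $a$-dependence of the scalar through \eqref{eq-def-f1f2} and \eqref{eq-def-kappamu} is also the same.

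Where you diverge is in identifying the $a=0$ scalar with $s_\mu(q^\rho)$. You propose to match the accumulated product of the $\tfrac{1}{1-q^{m_s-j}}$, $\tfrac{1}{1-q^{n_s-i}}$, $\tfrac{q^{m_s+n_t+1}}{1-q^{m_s+n_t+1}}$ and the $f_i^{(0)}$ directly against the hook-length formula \eqref{eq-eval-schur-rho}. This is doable, but it is a genuine bijective computation (the hooks of $\mu$ in Frobenius form split into the $r^2$ diagonal hooks $m_s+n_t+1$ and arm/leg pieces that must be matched against your commutation factors with the exclusions $j\neq m_t$, $i\neq n_t$), and you have only outlined it. The paper sidesteps this entirely: it observes that the scalar is the vacuum expectation value $c(a,\mu;q)=\langle\Psi_\mu^{(a)}(q)\rangle=C^{(0,a,0)}_{\emptyset,\mu,\emptyset}(q)$, applies the already-established determinantal formula of Theorem \ref{prop-det-C} to write this as $\det\big(\langle\Psi_{(m_i|n_j)}^{(a)}\rangle\big)_{1\le i,j\le r}$, computes the hook case $(m|n)$ explicitly (your $\mu=(1)$ check, carried out for general $m,n$), and then finishes with the Giambelli identity $s_\mu=\det(s_{(m_i|n_j)})$. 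Your route is self-contained but demands the Frobenius/hook-length bookkeeping; the paper's route is shorter and, more to the point, explains why Theorem \ref{prop-det-C} was proved \emph{before} this proposition.
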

\begin{proof}
First consider the case of a hook partition $\mu = (m|n)$.
By the definition of $\Psi_{\mu}^{(a)} (q)$ we know that:
\begin{equation*}
\begin{split}
\Psi_{(m|n)}^{(a)} (q) =& \prod_{j= m+1}^\infty \Gamma_{-}(q^{-j-\half})
\cdot \tGa_+^{(a)} (q^{m+\half}) \cdot \prod_{j= 0}^{m-1} \Gamma_{-}(q^{-j-\half})\\
& \cdot \prod_{j= 0}^{n-1} \Gamma_{+}(q^{-j-\half}) \cdot \tGa_-^{(a)} (q^{n+\half})
\cdot \prod_{j= n+1}^\infty \Gamma_{+}(q^{-j-\half}).
\end{split}
\end{equation*}
Now using Lemma \ref{lem-equiv-tGa}, we can rewrite it as:
\begin{equation*}
\begin{split}
\Psi_{(m|n)}^{(a)} (q) =&
\frac{1}{ \prod_{i=1}^m (1-q^i)} \cdot
\prod_{j\geq 0 , j\not= m} \Gamma_{-}(q^{-j-\half})
\cdot \tGa_+^{(a)} (q^{m+\half}) \\
& \cdot \frac{1}{ \prod_{i=1}^n (1-q^i)} \cdot
 \tGa_-^{(a)} (q^{n+\half})
\cdot \prod_{j\geq 0,j\not= n} \Gamma_{+}(q^{-j-\half})\\
=& \frac{q^{m+n+1}}{1-q^{m+n+1}} \cdot \frac{1}{ \prod_{i=1}^m (1-q^i)}
\cdot \frac{1}{ \prod_{i=1}^n (1-q^i)} \cdot \\
& \prod_{j\geq 0 , j\not= m} \Gamma_{-}(q^{-j-\half})
\cdot \tGa_-^{(a)} (q^{n+\half}) \tGa_+^{(a)} (q^{m+\half}) \cdot
\prod_{j\geq 0,j\not= n} \Gamma_{+}(q^{-j-\half}).
\end{split}
\end{equation*}
And by plugging the definition \eqref{eq-def-tGamma} of $\tGa_\pm^{(a)}$
into this formula,
we obtain:
\begin{equation*}
\begin{split}
\Psi_{(m|n)}^{(a)} (q)
=& z^{2C} f_1^{(a)}(q^{m+\half}) f_2^{(a)}(q^{n+\half}) \cdot
\frac{q^{m+n+1}}{1-q^{m+n+1}}  \frac{1}{ \prod_{i=1}^m (1-q^i)}
 \frac{1}{ \prod_{i=1}^n (1-q^i)} \cdot \\
& \prod_{j\geq 0 , j\not= m} \Gamma_{-}(q^{-j-\half})
\cdot \Gamma_- (q^{n+\half}) \Gamma_+ (q^{m+\half}) \cdot
\prod_{j\geq 0,j\not= n} \Gamma_{+}(q^{-j-\half}),
\end{split}
\end{equation*}
where we have also used the facts that the operators $\Gamma_\pm (z)$ are of charge $0$,
and $R\Gamma_\pm(z) R^{-1} = \Gamma_\pm(z)$.
Recall that $f_1^{(a)}$ and $f_2^{(a)}$ are defined by \eqref{eq-def-f1f2},
and thus:
\begin{equation*}
\begin{split}
\Psi_{(m|n)}^{(a)} (q)
=& z^{2C} \cdot
\frac{(-1)^{m+n+1} \cdot q^{\half a (m^2+m-n^2-n) +\half m^2 +m +\half n +\half} }
{(1-q^{m+n+1}) \cdot \prod_{i=1}^m (1-q^i) \cdot \prod_{i=1}^n (1-q^i)} \cdot \\
& \prod_{j\geq 0 , j\not= m} \Gamma_{-}(q^{-j-\half})
\cdot \Gamma_- (q^{n+\half}) \Gamma_+ (q^{m+\half}) \cdot
\prod_{j\geq 0,j\not= n} \Gamma_{+}(q^{-j-\half}).
\end{split}
\end{equation*}
On the other hand,
using \eqref{eq-eval-schur-rho} we know that:
\begin{equation*}
\begin{split}
&q^{a\kappa_{(m|n)}/2}s_{(m|n)}(q^\rho)\\
=& q^{(\frac{a}{2}+\frac{1}{4})(m^2+m-n^2-n)}
\frac{1}{q^{\frac{m+n+1}{2}} - q^{-\frac{m+n+1}{2}}}
\cdot \frac{1}{\prod_{i=1}^m (q^{\frac{i}{2}} - q^{-\frac{i}{2}})} \cdot
 \frac{1}{\prod_{i=1}^n (q^{\frac{i}{2}} - q^{-\frac{i}{2}})}\\
 =& (-1)^{m+n+1} \cdot
 \frac{ q^{\frac{a}{2} (m^2+m-n^2-n)} \cdot q^{ \half m^2 +m +\half n +\half} }
{(1-q^{m+n+1}) \cdot \prod_{i=1}^m (1-q^i) \cdot \prod_{i=1}^n (1-q^i)},
\end{split}
\end{equation*}
and thus the proposition holds for the hook partition $\mu = (m|n)$.

Now we consider the general case $\mu = (m_1,\cdots,m_r| n_1,\cdots,n_r)$.
Recall that:
\begin{equation*}
\Psi_{\mu}^{(a)} (q) = \prod_{j\geq 0}^{\longleftarrow} \Gamma_{-,\{j,\mu\}}^{(a)}(q^{-j-\half})
\cdot \prod_{i\geq 0}^{\longrightarrow} \Gamma_{+,\{i,\mu\}}^{(a)}(q^{-i-\half}),
\end{equation*}
where each factor $\Gamma_{\pm,\{j,\mu\}}^{(a)}(q^{-j-\half})$ is
either $\Gamma_\pm (q^{-j-\half})$ or $\tGa_\mp^{(a)}(q^{j+\half})$.
By applying Lemma \ref{lem-equiv-tGa},
we see that the operator $\Gamma_{\mu}^{(a)} (q)$ is of the form:
\begin{equation*}
\begin{split}
\Psi_\mu^{(a)} (q) =& \tc(a,\mu;q) \cdot
\prod_{\substack{j\geq 0 \\ j\not= m_k,\forall k}} \Gamma_- (q^{-j-\half}) \cdot
\prod_{k=1}^r \tGa_-^{(a)} (q^{n_k+\half})\\
& \cdot  \prod_{k=1}^r \tGa_+^{(a)} (q^{m_k+\half}) \cdot
\prod_{\substack{j\geq 0 \\ j\not= n_k,\forall k}} \Gamma_+ (q^{-j-\half}),
\end{split}
\end{equation*}
and then:
\begin{equation*}
\begin{split}
\Psi_\mu^{(a)} (q) =& z^{2rC} \cdot c(a,\mu;q) \cdot
\prod_{\substack{j\geq 0 \\ j\not= m_k,\forall k}} \Gamma_- (q^{-j-\half}) \cdot
\prod_{k=1}^r \Gamma_- (q^{n_k+\half})\\
& \cdot  \prod_{k=1}^r \Gamma_+ (q^{m_k+\half}) \cdot
\prod_{\substack{j\geq 0 \\ j\not= n_k,\forall k}} \Gamma_+ (q^{-j-\half}),
\end{split}
\end{equation*}
again by the facts that $\Gamma_\pm (z)$ are of charge $0$
and $R\Gamma_\pm(z) R^{-1} = \Gamma_\pm(z)$.
Recall that we have
$\langle 0| \Gamma_-(z) = \langle 0| $ and $\Gamma_+ (z)|0\rangle = |0\rangle$,
and thus the coefficient $c(a,\mu;q)$ is
the vacuum expectation value of the operator $\Psi_\mu^{(a)} (q)$:
\begin{equation*}
c(a,\mu;q) = \langle \Psi_\mu^{(a)} (q) \rangle.
\end{equation*}
Therefore it is sufficient to show that:
\be
\langle \Psi_\mu^{(a)} (q) \rangle = q^{a\kappa_\mu /2}\cdot s_\mu (q^\rho).
\ee

Now we need to compute the vacuum expectation value $\langle \Psi_\mu^{(a)} (q) \rangle$.
This has been done in the proof of Proposition \ref{sec-det-C} using Wick's Theorem.
In fact,
it is clear that:
\begin{equation*}
\langle \Psi_\mu^{(a)} (q) \rangle = C_{(\emptyset),\mu,(\emptyset)}^{(0,a,0)}(q).
\end{equation*}
Thus we may apply the determinantal formula given in Proposition \ref{sec-det-C} and obtain:
\be
\langle \Psi_\mu^{(a)} (q) \rangle =
\det \big( \langle \Psi_{(m_i | n_j)}^{(a)} (q) \rangle \big)_{1\leq i,j \leq r}.
\ee
Notice that we have already proved the result for the hook cases $(m_i|n_j)$,
i.e.,
the entries of the above matrix are given by:
\begin{equation*}
\langle \Psi_{(m_i | n_j)}^{(a)} (q) \rangle
= q^{a\kappa_{(m_i|n_j)} /2}\cdot s_{(m_i|n_j)} (q^\rho).
\end{equation*}
Therefore,
now it is sufficient to show that:
\begin{equation*}
q^{a\kappa_\mu /2}\cdot s_\mu (q^\rho) \cdot =
\det\big( q^{a\kappa_{(m_i|n_j)} /2}\cdot s_{(m_i|n_j)} (q^\rho) \big)_{1\leq i,j\leq r}
\end{equation*}
for a partition $\mu = (m_1,\cdots,m_r| n_1,\cdots,n_r)$.
This is actually a straightforward consequence of the Giambelli formula
(see e.g. \cite[\S I.3, Example 9]{mac})
\begin{equation*}
s_\mu = \det (s_{(m_i|n_j)})_{1\leq i,j\leq r}
\end{equation*}
together with the definition $\kappa_\mu = \sum_{i=1}^r m_i(m_i+1) - \sum_{j=1}^r n_j(n_j+1)$.
This completes the proof.
\end{proof}

Now we are able to prove Theorem \ref{thm-equiv-twovert}.
\begin{proof}
[Proof of Theorem \ref{thm-equiv-twovert}]

Let $\mu = (\mu_1,\cdots,\mu_l) = (m_1,\cdots,m_k| n_1,\cdots,n_k)$ be a partition,
and denote $\rho = (-\half, -\frac{3}{2} , -\frac{5}{2},\cdots)$.
Fix such a partition $\mu$,
and denote by $\bm x =  (x_1,x_2, x_3,\cdots)$ the sequence:
\begin{equation*}
x_i = q^{\mu_i -i +\half},
\end{equation*}
i.e., $\bm x$ is the sequence
\begin{equation*}
\bm x = q^{\mu+\rho} =
( q^{\mu_1 - \half}, q^{\mu_2-\frac{3}{2}}, \cdots, q^{\mu_l - l+\half},
q^{ -l-\half}, q^{-l-\frac{3}{2}},\cdots ).
\end{equation*}
Denote by $\bm t(\mu)  = (t_1,t_2,t_3,\cdots)$ the sequence
\begin{equation*}
t_n = \frac{1}{n}  p_n(\bm x),
\end{equation*}
where $p_n$ is the Newton symmetric function of degree $n$.
Then:
\be
\label{eq-pfequiv-infprod}
\begin{split}
\Gamma_\pm( \bm t (\mu))
= \exp\Big( \sum_{n\geq 0} \frac{1}{n}   p_n (\bm x) \alpha_{\pm n} \Big)
= \exp\Big( \sum_{n\geq 0}  \frac{1}{n}  \sum_{j\geq 0}  (x_j)^n \alpha_{\pm n} \Big)
= \prod_{j\geq 0} \Gamma_\pm (x_j),
\end{split}
\ee
where $\Gamma_\pm( \bm t (\mu)) $ means plugging the sequence $\bm t(\mu)$
into the definition \eqref{e-def-gammat}.
Moreover,
notice that
\begin{equation*}
\begin{split}
&\{\mu+\rho\}
= \big\{m_j+\half \big\}_{j=1}^k \bigsqcup
\Big( \big(\bZ_{<0}+\half\big) \big\backslash \{-n_j-\half\}_{j=1}^n\Big),
\\
&\{\mu^t +\rho\} =
 \big\{n_j+\half \big\}_{j=1}^k \bigsqcup
\Big( \big(\bZ_{<0}+\half\big) \big\backslash \{-m_j-\half\}_{j=1}^n\Big),
\end{split}
\end{equation*}
as sets,
and thus the conclusion of Proposition \ref{prop-rewrite-gammamu} can be rewritten as:
\be
\Psi_\mu^{(a)} = z^{2rC} \cdot q^{a\kappa_\mu /2} s_\mu(q^\rho)
\cdot \Gamma_- (\bm t(\mu^t)) \Gamma_+(\bm t(\mu)).
\ee
Then the combinatorial expression $C_{\mu^1,\mu^2,\mu^3}^{(a_1,a_2,a_3)} (q)$ can be rewritten as:
\begin{equation*}
\begin{split}
C_{\mu^1,\mu^2,\mu^3}^{(a_1,a_2,a_3)} (q) =&
\langle \mu^1 | q^{a_1 K/2} \Psi_{\mu^2}^{(a_2)}(q) q^{-(a_3 +1)K /2} |(\mu^3)^t \rangle \\
=& q^{a_1 \kappa_{\mu^1}/2 + (a_3+1)\kappa_{\mu^3}/2}  \cdot
\langle \mu^1 | \Psi_{\mu^2}^{(a_2)}(q) | (\mu^3)^t \rangle \\
=& q^{a_1 \kappa_{\mu^1}/2 + a_2 \kappa_{\mu^2}/2+ (a_3+1)\kappa_{\mu^3}/2}
s_{\mu^2}(q^\rho)  \cdot
\langle \mu^1 |  \Gamma_- \big(\bm t((\mu^2)^t) \big)
\Gamma_+ \big(\bm t(\mu^2) \big) | (\mu^3 )^t\rangle.
\end{split}
\end{equation*}
Using \eqref{eq-eval-schur-rho} one may easily find that:
\begin{equation*}
s_{\mu^2} (q^\rho)
= (-1)^{|\mu^2|} s_{(\mu^2)^t} (q^{-\rho}),
\end{equation*}
and thus:
\begin{equation*}
\begin{split}
 C_{\mu^1,\mu^2,\mu^3}^{(a_1,a_2,a_3)} (q)
 =&
 (-1)^{|\mu^2|}  q^{a_1 \kappa_{\mu^1}/2 + a_2 \kappa_{\mu^2}/2+ (a_3+1)\kappa_{\mu^3}/2}
s_{(\mu^2)^t}(q^{-\rho}) \\
&\cdot \langle \mu^1 |  \Gamma_- \big(\bm t((\mu^2)^t) \big)
\Gamma_+ \big(\bm t(\mu^2) \big) | (\mu^3 )^t\rangle.
\end{split}
\end{equation*}
Furthermore,
by \eqref{eq-special-schur} and \eqref{eq-special-schur-2} we know that
\begin{equation*}
\langle \mu^1 |  \Gamma_- \big(\bm t((\mu^2)^t) \big)
\Gamma_+ \big(\bm t(\mu^2) \big) | (\mu^3 )^t\rangle
= \sum_\eta s_{\mu^1 / \eta} (q^{(\mu^2)^t+\rho}) \cdot  s_{(\mu^3)^t / \eta} (q^{\mu^2+\rho}),
\end{equation*}
and now we can conclude that:
\begin{equation*}
\begin{split}
 C_{\mu^1,\mu^2,\mu^3}^{(a_1,a_2,a_3)} (q)
 =&
 (-1)^{|\mu^2|} \cdot q^{a_1 \kappa_{\mu^1}/2 + a_2 \kappa_{\mu^2}/2+ (a_3+1)\kappa_{\mu^3}/2} \cdot
s_{(\mu^2)^t}(q^{-\rho}) \\
& \cdot
\sum_\eta s_{\mu^1 / \eta} (q^{(\mu^2)^t+\rho}) \cdot  s_{(\mu^3)^t / \eta} (q^{\mu^2+\rho}).
\end{split}
\end{equation*}
Now compare this with \eqref{eq-framedTV-def},
then we get the conclusion of Theorem \ref{thm-equiv-twovert}.
\end{proof}

\subsection{Computing the entries $F_{mn}^{ij}$}
\label{eq-computeF}

In this subsection,
we give the explicit formulas of the entries $\tF_{kl}(q;\bm a)$
(or equivalently, of $F_{mn}^{ij}$) in the above determinantal formula
for $C_{\mu^1,\mu^2,\mu^3}^{(a_1,a_2,a_3)} (q)$.

We will need the following:
\begin{Lemma}
\label{lem-skewS-eval}
We have:
\be
\begin{split}
& s_{(m)}(q^{(n)+\rho})
= \sum_{k=0}^{\min (m,n)}
q^{-\half k^2 -\half k +\half km +\half kn +\frac{1}{4}m^2 -\frac{1}{4} m}
\cdot  \frac{[n]!}{[n-k]!\cdot [m-k]!},\\
& s_{(1^n)} (q^{(1^m)+\rho})
= \sum_{k=0}^{\min (m,n)}
q^{\half k^2 +\half k -\half km -\half kn -\frac{1}{4}n^2 +\frac{1}{4} n}
\cdot  \frac{[m]!}{[n-k]!\cdot [m-k]!}.
\end{split}
\ee
\end{Lemma}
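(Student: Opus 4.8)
The plan is to prove both identities by the same generating-function computation, reducing each to a single terminating $q$-series summation. I will focus on the first identity, since $s_{(m)}=h_m$ and $s_{(1^n)}=e_n$ are exchanged by the involution $\omega$, and the two specializations $q^{(n)+\rho}$ and $q^{(1^m)+\rho}$ are interchanged under transpose duality (up to the substitution $q\mapsto q^{-1}$ and a sign, tracked via $s_\lambda(q^\rho)=(-1)^{|\lambda|}s_{\lambda^t}(q^{-\rho})$); the second identity then follows by running the same argument with $\prod_i(1+x_it)=\sum_m e_m t^m$ in place of $\prod_i(1-x_it)^{-1}=\sum_m h_m t^m$.

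First I would record the generating function. Since $s_{(m)}=h_m$ and $\sum_{m\ge0}h_m(\mathbf x)t^m=\prod_i(1-x_it)^{-1}$, and since the sequence $q^{(n)+\rho}=(q^{n-\frac12},q^{-\frac32},q^{-\frac52},\dots)$ differs from $q^\rho=(q^{-\frac12},q^{-\frac32},\dots)$ only by replacing the entry $q^{-\frac12}$ with $q^{n-\frac12}$, one gets
\[
\sum_{m\ge0}s_{(m)}(q^{(n)+\rho})\,t^m=\frac{1-q^{-\frac12}t}{1-q^{n-\frac12}t}\,\Phi(t),\qquad \Phi(t):=\prod_{j\ge0}\frac{1}{1-q^{-j-\frac12}t}=\sum_{l\ge0}\frac{q^{l(l-1)/4}}{[l]!}\,t^l,
\]
where the evaluation of $\Phi$ uses \eqref{eq-eval-schur-rho} for the row specialization $s_{(l)}(q^\rho)=q^{\kappa_{(l)}/4}/[l]!$ with $\kappa_{(l)}=l(l-1)$.

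Rather than extract $[t^m]$ directly (which produces an unwieldy sum over $0\le l\le m$), I would show that the claimed right-hand side has exactly this generating function. Writing $B_m$ for the proposed value of $s_{(m)}(q^{(n)+\rho})$, interchanging the sums over $m$ and $k$, substituting $m=k+i$, and using $\Phi(q^kt)=\Phi(t)\prod_{i=1}^k(1-q^{i-\frac12}t)^{-1}$, a short computation collapses the $i$-sum into $\Phi(q^kt)$ and yields
\[
\sum_{m\ge0}B_m t^m=\Phi(t)\sum_{k=0}^{n}\frac{[n]!}{[n-k]!}\,q^{kn/2+k^2/4-3k/4}\,\frac{t^k}{\prod_{i=1}^k(1-q^{i-\frac12}t)}.
\]
Since $\Phi(t)$ is invertible as a formal power series ($\Phi(0)=1$), matching this against the generating function above reduces the entire lemma to the single rational-function identity
\[
\sum_{k=0}^{n}\frac{[n]!}{[n-k]!}\,q^{kn/2+k^2/4-3k/4}\,\frac{t^k}{\prod_{i=1}^k(1-q^{i-\frac12}t)}=\frac{1-q^{-\frac12}t}{1-q^{n-\frac12}t}.
\]
All manipulations are formal in $t$ (for each power of $t$ only finitely many terms contribute), so no convergence issue arises.

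The main obstacle is this last terminating identity. I would prove it by induction on $n$ (the base case $n=0$ is trivial and $n=1$ is already checked by hand), or, more conceptually, by converting the $[\,\cdot\,]$-factorials into $q$-Pochhammer symbols via $[j]!=q^{-j(j+1)/4}(q;q)_j$ and $\prod_{i=1}^k(1-q^{i-\frac12}t)=(q^{\frac12}t;q)_k$, whereupon the left-hand side becomes a terminating ${}_{2}\phi_{1}$ summed by the $q$-Chu--Vandermonde theorem. I expect the genuine difficulty to be purely bookkeeping: correctly tracking the half-integer powers of $q$ and the signs coming from $[j]=q^{j/2}-q^{-j/2}$, rather than any structural obstruction. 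Finally, the parallel computation for $s_{(1^n)}(q^{(1^m)+\rho})$ uses $q^{(1^m)+\rho}=(q^\rho\setminus\{q^{\frac12-m}\})\cup\{q^{\frac12}\}$ together with $\sum_n e_n(\mathbf x)t^n=\prod_i(1+x_it)$ and $e_n(q^\rho)=q^{\kappa_{(1^n)}/4}/[n]!$, and reduces in the same way to a companion $q$-Chu--Vandermonde summation.
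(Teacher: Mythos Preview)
Your approach is correct and genuinely different from the paper's. The paper invokes the identity
\[
s_\nu(q^{\mu+\rho}) = (-1)^{|\nu|} q^{\kappa_\nu/2}\sum_{\eta}\frac{s_{\mu/\eta}(q^{-\rho})}{s_\mu(q^{-\rho})}\,s_{\nu/\eta}(q^{-\rho})
\]
from \cite{zhou5,zhou4}, specializes to $\mu=(n)$, $\nu=(m)$, uses $s_{(n)/(k)}=h_{n-k}$, and plugs in the hook-content evaluation $s_{(i)}(q^{-\rho})=(-1)^iq^{-i(i-1)/4}/[i]!$; the formula then drops out after a line of exponent bookkeeping. Your route is more self-contained: you bypass this cited identity entirely, reduce via generating functions to a single terminating sum, and that sum is indeed $q$-Chu--Vandermonde. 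Concretely, after your Pochhammer substitution one finds (with $c=q^{1/2}t$)
\[
\sum_{k=0}^n\frac{(q^{-n};q)_k}{(c;q)_k}(cq^{n-1})^k={}_{2}\phi_{1}(q^{-n},q;c;q,cq^{n-1})=\frac{(c/q;q)_n}{(c;q)_n}=\frac{1-cq^{-1}}{1-cq^{n-1}},
\]
which is exactly the $b=q$ case of $q$-Chu--Vandermonde; your proposed induction on $n$ also works directly via the recursion $S_{n+1}(c)=1+\dfrac{(1-q^{-n-1})cq^n}{1-c}\,S_n(cq)$. The trade-off is clear: the paper's argument is a two-line substitution but imports a nontrivial skew-Schur expansion from elsewhere, while yours is longer and demands careful exponent tracking but stands on its own. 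One minor caution: for the second identity your description of $q^{(1^m)+\rho}$ as $(q^\rho\setminus\{q^{\frac12-m}\})\cup\{q^{\frac12}\}$ is correct, so the parallel generating-function computation with $\prod_i(1+x_it)$ goes through as you say.
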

\begin{proof}
We will use the following identity
(cf. \cite[(29)]{zhou5} and \cite[(20)]{zhou4}):
\begin{equation*}
s_\nu (q^{\mu +\rho}) = (-1)^{|\nu|} q^{\kappa_\nu /2}
\sum_\eta \frac{s_{\mu/\eta} (q^{-\rho})}{s_\mu(q^{-\rho})} s_{\nu/\eta} (q^{-\rho})
\end{equation*}
Now we take $\mu =(n)$ and $\nu = (m-1|0) = (m)$ in this identity
and then apply \eqref{eq-skewS-length1},
and in this way we obtain:
\begin{equation*}
\begin{split}
s_{(m)}(q^{(n)+\rho}) =& (-1)^m q^{\kappa_{(m)}/2} \cdot
\sum_{\eta} \frac{s_{(n)/\eta}(q^{-\rho})}{s_{(n)}(q^{-\rho})}
s_{(m)/\eta}(q^{-\rho}) \\
=& (-1)^m q^{\kappa_{(m)}/2} \cdot
\sum_{k=0}^{\min (m,n)} \frac{s_{(n-k)}(q^{-\rho})}{s_{(n)}(q^{-\rho})}
s_{(m-k)}(q^{-\rho})
\end{split}
\end{equation*}
From \eqref{eq-eval-schur-rho} we know that:
\begin{equation*}
s_{(i)} (q^{-\rho}) = q^{-\kappa_{(i)}/4}\cdot
\frac{1}{ \prod_{j=1}^i (-[j]) } =
(-1)^i q^{-i(i-1)/4}\cdot
\frac{1}{ [i]! },
\end{equation*}
and thus:
\begin{equation*}
\begin{split}
s_{(m)}(q^{(n)+\rho})
=&  q^{m(m-1)/2}
\sum_{k=0}^{\min (m,n)}
\frac{q^{-(n-k)(n-k-1)/4 - (m-k)(m-k-1)/4 } \cdot [n]! }{q^{n(n-1)/4} \cdot [n-k]!\cdot[m-k]!} \\
=& \sum_{k=0}^{\min (m,n)}
q^{-\half k^2 -\half k +\half km +\half kn +\frac{1}{4}m^2 -\frac{1}{4} m}
\cdot  \frac{[n]!}{[n-k]!\cdot [m-k]!}.
\end{split}
\end{equation*}
This proves the first equality in this lemma.
The second equality can be proved in the same way and thus we omit the details.
\end{proof}

\begin{Proposition}
\label{prop-coeff-F}
We have:
\be
\begin{split}
& F_{mn}^{ii}(q;\bm a)
= q^{(2a_i +1)(m^2+m-n^2-n)/4} \cdot \frac{1}{[m+n+1] \cdot [m]! \cdot [n]!},
\qquad i=1,2,3;
\end{split}
\ee
and:
\be
\begin{split}
&F^{12}_{mn}(q;\bm a)
= q^{\frac{1}{4}(2a_1+1)(m+\half)^2  - \frac{1}{4}(2a_2+1)(n+\half)^2 }
\sum_{k=0}^{\min (m,n)}
\frac{ q^{\frac{1}{2}(k+1)(m+n-k) + \frac{3}{16}} }{[n-k]!\cdot [m-k]!},\\
&F^{23}_{mn}(q;\bm a) =
q^{\frac{1}{4}(2a_2+1)(m+\half)^2  - \frac{1}{4}(2a_3+1)(n+\half)^2 }
\sum_{k=0}^{\min (m,n)}
\frac{ q^{\frac{1}{2}(k+1)(m+n-k) + \frac{3}{16}} }{[n-k]!\cdot [m-k]!}, \\
&F^{31}_{mn}(q;\bm a) =
q^{\frac{1}{4}(2a_3+1)(m+\half)^2  - \frac{1}{4}(2a_1+1)(n+\half)^2 }
\sum_{k=0}^{\min (m,n)}
\frac{ q^{\frac{1}{2}(k+1)(m+n-k) + \frac{1}{8}} }{[n-k]!\cdot [m-k]!},
\end{split}
\ee
and
\be
\label{eq-formula-f13}
\begin{split}
&F^{21}_{mn}(q;\bm a) = -
q^{\frac{1}{4}(2a_2+1)(m+\half)^2  - \frac{1}{4}(2a_1+1)(n+\half)^2 }
\sum_{k=0}^{\min (m,n)}
\frac{ q^{-\frac{1}{2}(k+1)(m+n-k) - \frac{3}{16}} }{[n-k]!\cdot [m-k]!}, \\
&F^{32}_{mn}(q;\bm a) = -
q^{\frac{1}{4}(2a_3+1)(m+\half)^2  - \frac{1}{4}(2a_2+1)(n+\half)^2 }
\sum_{k=0}^{\min (m,n)}
\frac{ q^{-\frac{1}{2}(k+1)(m+n-k) - \frac{3}{16}} }{[n-k]!\cdot [m-k]!}, \\
&F^{13}_{mn}(q;\bm a) = -
q^{\frac{1}{4}(2a_1+1)(m+\half)^2  - \frac{1}{4}(2a_3+1)(n+\half)^2 }
\sum_{k=0}^{\min (m,n)}
\frac{ q^{-\frac{1}{2}(k+1)(m+n-k) - \frac{1}{8}} }{[n-k]!\cdot [m-k]!}.
\end{split}
\ee
\end{Proposition}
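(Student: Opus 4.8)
The plan is to split the entries into the diagonal family $F^{ii}_{mn}$, which are honest single-hook specializations of the combinatorial vertex, and the off-diagonal family, which are charge-shifting two-point objects to be evaluated directly by Wick's Theorem. For the diagonal entries, observe that in \eqref{eq-def-entryF} the value $F^{ii}_{mn}$ is precisely $C^{(a_1,a_2,a_3)}_{\mu^1,\mu^2,\mu^3}(q)$ in the degenerate situation where the $i$-th partition is the single hook $(m|n)$ and the other two are empty (the transpose convention in slot $3$ accounting for the ket $|(n|m)\rangle$). Hence, by Theorem~\ref{thm-equiv-twovert}, $F^{ii}_{mn}$ equals the framed topological vertex \eqref{eq-framedTV-def} with a single nonempty leg; there the sum over $\eta$ collapses to $\eta=\emptyset$ since $s_{\emptyset/\eta}=\delta_{\eta,\emptyset}$, leaving $q^{a_i\kappa_{(m|n)}/2}s_{(m|n)}(q^\rho)$ (times the extra $q^{\kappa/2}$ of slot $3$). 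Applying the hook-length evaluation \eqref{eq-eval-schur-rho} to $(m|n)=(m+1,1^n)$, whose hook lengths are $m+n+1$ on the corner and $m,\dots,1$, $n,\dots,1$ on arm and leg, gives $s_{(m|n)}(q^\rho)=q^{(m^2+m-n^2-n)/4}/([m+n+1][m]![n]!)$, and collecting $q$-powers yields the stated $F^{ii}_{mn}$.

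For the off-diagonal entries there is an unpaired $\psi^*_{m+\half}$ and/or $\psi_{-n-\half}$, possibly together with the single modified vertex operator living inside $\Psi^{(a)}_{\emptyset,n}$ or $\Psi^{(a)}_{m,\emptyset}$. First I would strip off the cut-and-join operators: by \eqref{eq-comm-expKpsi}, conjugation by $q^{a_1K/2}$ and $q^{-(a_3+1)K/2}$ multiplies $\psi^*_{m+\half}$ and $\psi_{-n-\half}$ by $q^{a_1(m+\half)^2/2}$ and $q^{-(a_3+1)(n+\half)^2/2}$ while fixing $\langle 0|$ and $|0\rangle$; this produces part of the Gaussian prefactors $q^{\pm\frac14(2a_\bullet+1)(\cdot+\half)^2}$, the remaining $\pm\frac14(\cdot+\half)^2$ coming from the correlator itself, just as in the diagonal case. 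Then I would pass to the $\Gamma$-form of Proposition~\ref{prop-rewrite-gammamu}, writing $\Psi_\emptyset(q)=\Gamma_-(\bm t(q^\rho))\Gamma_+(\bm t(q^\rho))$ and using the fermionic form \eqref{eq-mod-vert} of $\tGa_\pm^{(a)}$.

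Each off-diagonal correlator now contains exactly one $\psi^*$ and one $\psi$ between the $\Gamma$'s, so Wick's Theorem \eqref{eq-Wickthm} reduces it to a single contraction. Passing the $\Gamma$'s across the fields via \eqref{eq-comm-alphapsi} (i.e. $[\alpha_n,\psi(z)]=z^n\psi(z)$, $[\alpha_n,\psi^*(z)]=-z^n\psi^*(z)$) together with $\langle\psi^*(z)\psi(w)\rangle=(z-w)^{-1}$ gives the kernel
\[
\langle\psi^*(z)\,\Gamma_-(\bm t(q^\rho))\Gamma_+(\bm t(q^\rho))\,\psi(w)\rangle=\frac{1}{z-w}\prod_{j\ge0}\frac{1}{(1-q^{-j-\half}/z)(1-q^{-j-\half}w)}.
\]
For $F^{13}$ and $F^{31}$ (no $\tGa$ factor) I extract the coefficients of $z^{-m-1}$ and $w^{n}$; since $\prod_{j\ge0}(1-q^{-j-\half}w)^{-1}=\sum_k h_k(q^\rho)w^k$ with $h_k(q^\rho)=s_{(k)}(q^\rho)=q^{k(k-1)/4}/[k]!$, the geometric expansion of the kernel collapses to $\sum_{k=0}^{\min(m,n)}h_{m-k}(q^\rho)h_{n-k}(q^\rho)$, which already reproduces the claimed $k$-sum and the $\pm\tfrac18$ constant after recombining with the Gaussian prefactor. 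For the adjacent entries $F^{12},F^{23},F^{21},F^{32}$ one of the two fermions comes from a $\tGa$ factor and is thus evaluated at the fixed point $z=q^{n+\half}$ (resp. $w=q^{m+\half}$), while the matching $\Gamma_\pm(q^{-n-\half})$ is deleted from the product; the leftover finite product $\prod_{i=1}^{n}(1-q^i)^{-1}=(-1)^n q^{-n(n+1)/4}/[n]!$ together with the normalizations $f_1^{(a)},f_2^{(a)}$ of \eqref{eq-def-f1f2} turns the residue into $s_{(m)}(q^{(n)+\rho})$ (or $s_{(1^n)}(q^{(1^m)+\rho})$), at which point Lemma~\ref{lem-skewS-eval} supplies the explicit $k$-sum with factor $[n]!/([n-k]![m-k]!)$; the $[n]!$ cancels the denominator of the finite product, leaving $1/([n-k]![m-k]!)$ as claimed.

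The conceptual content is light; the genuine obstacle is the exact bookkeeping of signs and of the fractional $q$-exponents. The constants $\pm\frac18$, $\pm\frac{3}{16}$ and the differences between the three off-diagonal types are dictated entirely by the factors $f_1^{(a)},f_2^{(a)}$ in \eqref{eq-def-f1f2} (in particular the $-\frac{1}{16}$ there) and by the asymmetric placement of the deleted $\Gamma$-factor, so each case must be tracked individually and the $(-1)^{m},(-1)^{n}$ signs of \eqref{eq-def-entryF} carried through. I would organize the work by treating $F^{13}$ first (no $\tGa$, cleanest), then the $\tGa_-$ cases $F^{12},F^{23}$, then the $\tGa_+$ cases $F^{21},F^{32}$, using the shift $i\mapsto i+1$ and the reflection $q\mapsto q^{-1}$ relating the two off-diagonal families as consistency checks, and verifying in every case that the prefactor exponent collapses to $\tfrac14(2a_i+1)(m+\half)^2-\tfrac14(2a_j+1)(n+\half)^2$.
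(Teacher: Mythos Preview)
Your proposal is correct and follows essentially the same route as the paper. The diagonal entries are indeed one-legged framed vertices, and invoking Theorem~\ref{thm-equiv-twovert} for them is legitimate since that theorem is established before Proposition~\ref{prop-coeff-F}; the paper instead computes each $F^{ii}_{mn}$ directly as $q^{a_i\kappa_{(m|n)}/2}s_{(m|n)}(q^\rho)$ (using Proposition~\ref{prop-rewrite-gammamu} for $i=2$), which is exactly what your shortcut unpacks to. For the off-diagonal blocks the paper does the same two-step split you describe: the $(1,3)/(3,1)$ entries are reduced to $\sum_k s_{(m-k)}(q^\rho)s_{(n-k)}(q^\rho)$ (the paper via the insertion $\psi_{1/2}\psi^*_{m+1/2}\cdots\psi_{-n-1/2}\psi^*_{-1/2}$ and the skew-Schur expansion \eqref{eq-special-schur}--\eqref{eq-special-schur-2}, you via the explicit kernel and coefficient extraction), and the adjacent entries $(1,2),(2,3),(2,1),(3,2)$ are identified with $s_{(m)}(q^{(n)+\rho})$ or $s_{(1^n)}(q^{(1^m)+\rho})$ and then Lemma~\ref{lem-skewS-eval} is applied.

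One point of wording to tighten: your phrase ``the matching $\Gamma_\pm(q^{-n-1/2})$ is deleted from the product'' is slightly misleading. Using the fermionic form \eqref{eq-mod-vert}, the operator $\tGa_-^{(a)}(q^{n+1/2})$ contributes both $\psi(q^{n+1/2})$ \emph{and} a restored $\Gamma_+(q^{-n-1/2})$, so the full $\Gamma_+$-product reappears; what actually happens is that commuting $\psi(q^{n+1/2})$ leftward past $\Gamma_+(q^{-j-1/2})$ for $j=0,\dots,n-1$ produces the finite factor $\prod_{i=1}^n(1-q^i)^{-1}$ you wrote, while the remaining $\Gamma_+$'s act trivially on $|0\rangle$. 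The paper achieves the same effect by instead using the $z^C R\,\Gamma_-$ form \eqref{eq-def-tGamma} of $\tGa_-^{(a)}$ together with $R^{-1}\psi^*_{m+1/2}R=\psi^*_{m-1/2}$ and $\langle 0|R=\langle 0|\psi_{1/2}$, which lands directly on $s_{(m)}(q^{(n)+\rho})$; either route is fine, but be aware that a naive application of your kernel formula with $w=q^{n+1/2}$ hits the pole at $j=n$, and it is precisely the displaced position of $\psi(q^{n+1/2})$ inside the $\Gamma_+$-string that regularizes this.
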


\begin{proof}
First consider the case $(i,j) = (1,1)$.
We have:
\begin{equation*}
\begin{split}
F_{mn}^{11}(q;\bm a) = &
\langle (m|n) | q^{a_1 K/2} \Psi_\emptyset^{(a_2)} (q) |0\rangle \\
= &  q^{ a_1 \kappa_{(m|n)}/2}\cdot
\langle (m|n) | \Psi_\emptyset^{(a_2)} (q) |0\rangle \\
=&  q^{ a_1 \kappa_{(m|n)}/2}\cdot \
\langle (m|n) |
\prod_{j\geq 0} \Gamma_-(q^{-j-\half}) |0\rangle.
\end{split}
\end{equation*}
By taking $\mu = (\emptyset)$ in \eqref{eq-pfequiv-infprod}
and then applying \eqref{eq-special-schur-2} and \eqref{eq-eval-schur-rho},
we know that:
\begin{equation*}
\langle (m|n) |
\prod_{j\geq 0} \Gamma_-(q^{-j-\half}) |0\rangle
=  s_{(m|n)} (q^\rho)
= q^{\kappa_{(m|n)}/4} \cdot \frac{1}{\prod_{e\in (m|n)} [h(e)]},
\end{equation*}
and thus:
\begin{equation*}
\begin{split}
F_{mn}^{11}(q;\bm a) = &
q^{(2a_1 +1)\kappa_{(m|n)}/4} \cdot  \frac{1}{\prod_{e\in (m|n)} [h(e)]}\\
=& q^{(2a_1 +1)(m^2+m-n^2-n)/4} \cdot \frac{1}{[m+n+1] \cdot [m]! \cdot [n]!}.
\end{split}
\end{equation*}
Now consider the case $(i,j) = (2,2)$.
By Proposition \ref{prop-rewrite-gammamu} and \eqref{eq-eval-schur-rho} we have:
\begin{equation*}
\begin{split}
F^{22}_{mn} (q;\bm a)=&
\langle  \Psi_{(m|n)}^{(a_2)} (q) \rangle \\
=& q^{a_2 \kappa_{(m|n)}/2} s_{(m|n)} (q^\rho) \\
=& q^{(2a_2 +1)(m^2+m-n^2-n)/4} \cdot \frac{1}{[m+n+1] \cdot [m]! \cdot [n]!}.
\end{split}
\end{equation*}
And for $(i,j)=(3,3)$, we have:
\begin{equation*}
\begin{split}
F^{33}_{mn} (q;\bm a) =&
\langle 0 | \Psi_{\emptyset}^{(a_2)} (q) q^{-(a_3+1)K/2} |(n|m)\rangle\\
=& q^{-(a_3+1) \kappa_{(n|m)} /2} \cdot \langle 0| \Psi_{\emptyset}^{(a_2)} (q)|(n|m)\rangle \\
=&  q^{-(a_3+1) \kappa_{(n|m)} /2} \cdot s_{(n|m)} (q^\rho) \\
=&  q^{(2a_3+1) (m^2+m-n^2-n) /4} \cdot \frac{1}{[m+n+1] \cdot [m]! \cdot [n]!}.
\end{split}
\end{equation*}

Now consider the case $(i,j)=(1,2)$.
This case is a little more complicated than the above three cases.
By \eqref{eq-comm-expKpsi} and $\Gamma_+(z)|0\rangle = |0\rangle$ we have:
\begin{equation*}
\begin{split}
F^{12}_{mn}(q;\bm a) =&
-\langle  \psi^*_{m+\half} q^{a_1 K /2}\Psi_{\emptyset,n}^{(a_2)} (q) \rangle \\
=& -q^{a_1 (m+\half)^2 /2}
\langle  \psi^*_{m+\half} \Psi_{\emptyset,n}^{(a_2)} (q) \rangle \\
=& -q^{a_1 (m+\half)^2 /2}
\langle  \psi^*_{m+\half}
\prod_{j=0}^\infty \Gamma_- (q^{-j-\half}) \cdot
\prod_{j=0}^{n-1}\Gamma_+ (q^{-j-\half}) \cdot \tGa_-^{(a_2)} (q^{n+\half})
 \rangle,
\end{split}
\end{equation*}
and then by Lemma \ref{lem-equiv-tGa} and \eqref{eq-def-tGamma}:
\begin{equation*}
\begin{split}
F^{12}_{mn}(q;\bm a) =&
 -\frac{q^{a_1 (m+\half)^2 /2}} {\prod_{j=0}^{n-1} (1-q^{n-j})}
\langle  \psi^*_{m+\half}
\prod_{j=0}^\infty \Gamma_- (q^{-j-\half}) \cdot
 \tGa_-^{(a_2)} (q^{n+\half})
 \rangle \\
=& -\frac{q^{a_1 (m+\half)^2 /2} f_2^{(a_2)}(q^{n+\half})  q^{n+\half}}
{\prod_{j=0}^{n-1} (1-q^{n-j})}
 \langle \psi^*_{m+\half}
\prod_{j=0}^\infty \Gamma_- (q^{-j-\half}) \cdot
 R\Gamma_- (q^{n+\half})
 \rangle\\
=& \frac{q^{\half a_1 (m+\half)^2 -\half a_2 (n+\half)^2 +\frac{n}{4}-\frac{n^2}{4}+\frac{3}{16}}}{[n]!}
\langle \psi^*_{m+\half}
\prod_{j=0}^\infty \Gamma_- (q^{-j-\half}) \cdot
 R\Gamma_- (q^{n+\half})
 \rangle.
\end{split}
\end{equation*}
Recall that $R^{-1}\Gamma_\pm(z) R =\Gamma_\pm(z)$, and thus:
\begin{equation*}
\begin{split}
 \langle \psi^*_{m+\half}
\prod_{j=0}^\infty \Gamma_- (q^{-j-\half}) \cdot
 R\Gamma_- (q^{n+\half})
 \rangle
=& \langle R (R^{-1} \psi^*_{m+\half} R)
\prod_{j=0}^\infty \big(R^{-1}\Gamma_- (q^{-j-\half})R\big) \cdot
\Gamma_- (q^{n+\half})
 \rangle \\
=& \langle \psi_\half \psi_{m-\half}^* \prod_{j=0}^\infty \Gamma_- (q^{-j-\half}) \cdot
 \Gamma_- (q^{n+\half})
 \rangle,
\end{split}
\end{equation*}
and again by \eqref{eq-pfequiv-infprod} and \eqref{eq-special-schur-2}:
\begin{equation*}
\begin{split}
\langle \psi_\half \psi_{m-\half}^* \prod_{j=0}^\infty \Gamma_- (q^{-j-\half}) \cdot
 \Gamma_- (q^{n+\half}) \rangle
=& s_{(m-1 | 0)} (q^{n+\half},q^{ -\half}, q^{-\frac{3}{2}}, q^{-\frac{5}{2}},\cdots)\\
=& q^{m} \cdot s_{(m-1 | 0)} (q^{n-\half}, q^{-\frac{3}{2}}, q^{-\frac{5}{2}},\cdots)\\
=& q^{m} \cdot s_{(m)} (q^{(n) + \rho}).
\end{split}
\end{equation*}
where $q^{(n)+\rho}$ denotes the sequence
$q^{(n)+\rho}= (q^{n-\half}, q^{-\frac{3}{2}}, q^{-\frac{5}{2}},\cdots)$.

Now apply the Lemma \ref{lem-skewS-eval} to above computations of $F^{12}_{mn}(q;\bm a)$,
and then we are able to conclude that:
\begin{equation*}
\begin{split}
F^{12}_{mn}(q;\bm a)
=& \frac{q^{\half a_1 (m+\half)^2 -\half a_2 (n+\half)^2 +\frac{n}{4}-\frac{n^2}{4}+\frac{3}{16}}}{[n]!} \cdot
q^{m} \cdot s_{(m)} (q^{(n) + \rho})\\
=&  \frac{q^{\half a_1 (m+\half)^2 -\half a_2 (n+\half)^2 +\frac{n}{4}-\frac{n^2}{4}+\frac{3}{16}}}{[n]!} \cdot
q^{m} \cdot \\
& \sum_{k=0}^{\min (m,n)}
q^{-\half k^2 -\half k +\half km +\half kn +\frac{1}{4}m^2 -\frac{1}{4} m}
\cdot  \frac{[n]!}{[n-k]!\cdot [m-k]!} \\
=& q^{\frac{1}{4}(2a_1+1)(m+\half)^2  - \frac{1}{4}(2a_2+1)(n+\half)^2 }
\sum_{k=0}^{\min (m,n)}
\frac{ q^{\frac{1}{2}(k+1)(m+n-k) + \frac{3}{16}} }{[n-k]!\cdot [m-k]!}.
\end{split}
\end{equation*}

The cases $(i,j) = (2,1),(2,3),(3,2)$ can all be computed similarly using Lemma \ref{lem-skewS-eval}
and we omit the details.
Now it remains to prove for $(i,j) = (1,3),(3,1)$.
For $(i,j) = (1,3)$,
by \eqref{eq-comm-expKpsi} we have:
\begin{equation*}
\begin{split}
F^{13}_{mn}(q;\bm a) =&
-\langle  \psi^*_{m+\half} q^{a_1 K/2}\Psi_{\emptyset}^{(a_2)} (q)
 q^{-(a_3+1)K/2} \psi_{-n-\half} \rangle \\
 =&
- q^{\half a_1 (m+\half)^2 - \half (a_3 +1)(n+\half)^2} \cdot
\langle  \psi^*_{m+\half} \Psi_{\emptyset}^{(a_2)} (q) \psi_{-n-\half} \rangle.
\end{split}
\end{equation*}
Moreover,
\begin{equation*}
\begin{split}
\langle  \psi^*_{m+\half} \Psi_{\emptyset}^{(a_2)} (q) \psi_{-n-\half} \rangle
=&
\langle  \psi^*_{m+\half} \prod_{j\geq 0} \Gamma_-(q^{-j-\half}) \cdot
\prod_{j\geq 0} \Gamma_+(q^{-j-\half}) \cdot \psi_{-n-\half} \rangle\\
=& \langle  \psi_\half \psi^*_{m+\half} \prod_{j\geq 0} \Gamma_-(q^{-j-\half}) \cdot
\prod_{j\geq 0} \Gamma_+(q^{-j-\half}) \cdot \psi_{-n-\half} \psi_{-\half}^* \rangle\\
=& \sum_\eta s_{(m-1|0) /\eta}(q^\rho) \cdot s_{(n-1|0) /\eta}(q^\rho) \\
=& \sum_{k=0}^{\min(m,n)}  s_{(m-k)}(q^\rho) \cdot s_{(n-k) }(q^\rho),
\end{split}
\end{equation*}
and by \eqref{eq-eval-schur-rho}:
\begin{equation*}
s_{(m-k)}(q^\rho) \cdot s_{(n-k) }(q^\rho)
= \frac{q^{(m-k)(m-k-1)/4+(n-k)(n-k-1)/4}}{[m-k]!\cdot [n-k]!}.
\end{equation*}
Therefore:
\begin{equation*}
\begin{split}
F^{13}_{mn}(q;\bm a) =&
- q^{\half a_1 (m+\half)^2 - \half (a_3 +1)(n+\half)^2} \sum_{k=0}^{\min(m,n)}
\frac{q^{\frac{1}{4}(m-k)(m-k-1)+\frac{1}{4}(n-k)(n-k-1)}}{[m-k]!\cdot [n-k]!}\\
=& -
q^{\frac{1}{4}(2a_1+1)(m+\half)^2  - \frac{1}{4}(2a_3+1)(n+\half)^2 }
\sum_{k=0}^{\min (m,n)}
\frac{ q^{-\frac{1}{2}(k+1)(m+n-k) - \frac{1}{8}} }{[n-k]!\cdot [m-k]!}.
\end{split}
\end{equation*}
The case $(i,j)= (3,1)$ can be proved similarly and we omit the details.
\end{proof}

\section{Proof of the Framed ADKMV Conjecture}
\label{sec:ADKMV-proof}

In \S \ref{sec-ferm-topovert} we have proved Theorem \ref{thm:Main1} and represented
the framed topological vertex
(i.e., the left-hand side of the framed ADKMV Conjecture \eqref{eq-framedADKMV})
as the determinant of a matrix $(\tF_{kl})$.
Now in this section
we deal with the right-hand side of \eqref{eq-framedADKMV}.
We show that the right-hand side can also be
represented as the determinant of a matrix $(B_{kl})$,
and then show that these two determinants are actually equal to each other.
This completes the proof the framed ADKMV Conjecture.

\subsection{Determinantal formula for the Bogoliubov transform}
\label{sec-det-Bog}

Let $\mu^1$, $\mu^2$, $\mu^3$ be three partitions,
and let
\begin{equation*}
\mu^i = ( m_1^i, \cdots, m_{r_i}^i | n_1^i, \cdots, n_{r_i}^i ),
\qquad i=1,2,3,
\end{equation*}
be their Frobenius notations.
Denote by
\begin{equation*}
B_{\mu^1,\mu^2,\mu^3}^{(\bm a)}(q) = \langle \mu^1,\mu^2,\mu^3|
\exp\Big( \sum_{i,j =1}^3 \sum_{m,n\geq 0} A_{mn}^{ij}(q; \bm a)
\psi_{-m-\half}^{i} \psi_{-n-\half}^{j*} \Big)
|0\rangle \otimes |0\rangle \otimes |0\rangle
\end{equation*}
the right-hand side of the framed ADKMV Conjecture \eqref{eq-framedADKMV},
where the coefficients $A_{mn}^{ij}(q;\bm a)$ are given by equation \eqref{eqn:def Aqa}.
Denote $r=r_1+r_2+r_3$,
and we will use the notations $\bar k$, $\bar l$ and $I_1$, $I_2$, $I_3$ defined in subsection \ref{sec-det-C}.
Then:
\begin{Proposition}
\label{prop-det-bog}
$B_{\mu^1,\mu^2,\mu^3}^{(\bm a)}(q)$
satisfies the following determinantal formula:
\be
\label{eq-prop-det-bog}
B_{\mu^1,\mu^2,\mu^3}^{(\bm a)}(q) = \det \big(B_{kl}(q;\bm a) \big)_{1\leq k,l \leq r}.
\ee
where $(B_{kl})$ is a matrix of size $r\times r$, given by:
\be
B_{kl} (q;\bm a) = (-1)^{n_{\bar l}^j} \cdot
A_{m_{\bar k}^i n_{\bar l}^j}^{ij}(q;\bm a),
\qquad \text{ if $k\in I_i$ and $l\in I_j$}
\ee
for $1\leq k,l \leq r$.
\end{Proposition}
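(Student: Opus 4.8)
The plan is to evaluate $B_{\mu^1,\mu^2,\mu^3}^{(\bm a)}(q)$ by the same Wick-theoretic strategy used in the proof of Theorem \ref{prop-det-C}, exploiting the fact that the generator $G := \sum_{i,j=1}^{3}\sum_{m,n\geq 0}A_{mn}^{ij}(q;\bm a)\,\psi^{i}_{-m-\half}\psi^{j*}_{-n-\half}$ involves only fermionic creators. First I would rewrite each bra factor using \eqref{eq-ferm-basismu}, namely
\[
\langle \mu^i| = (-1)^{\sum_{k}n^i_k}\,\langle 0|\,
\psi^i_{n^i_{r_i}+\half}\psi^{i*}_{m^i_{r_i}+\half}\cdots
\psi^i_{n^i_1+\half}\psi^{i*}_{m^i_1+\half},
\]
so that, writing $r=r_1+r_2+r_3$, the quantity $B_{\mu^1,\mu^2,\mu^3}^{(\bm a)}(q)$ equals, up to the global sign $(-1)^{\sum_{i,k}n^i_k}$, the vacuum expectation value of a product of $2r$ annihilators followed by $e^{G}$.

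The key point is that conjugation by $e^{G}$ sends each annihilator to a linear combination of fermions of the same type. Indeed, from the anti-commutation relations one computes $[\psi^{j*}_{m+\half},G]=\sum_{b,q}A^{jb}_{mq}\psi^{b*}_{-q-\half}$ and $[\psi^{i}_{n+\half},G]=-\sum_{a,p}A^{ai}_{pn}\psi^{a}_{-p-\half}$, which are pure creators, and all higher commutators vanish. Hence $e^{-G}\psi^{j*}_{m+\half}e^{G}=\psi^{j*}_{m+\half}+(\text{creator})$ and likewise for $\psi^i$. Using $\langle 0,0,0|e^{G}=\langle 0,0,0|$, I would write $\langle 0|\phi_1\cdots\phi_{2r}e^{G}|0\rangle=\langle 0|\tilde\phi_1\cdots\tilde\phi_{2r}|0\rangle$ with $\tilde\phi_a=e^{-G}\phi_a e^{G}$, each a linear combination of the $\psi$'s (if $\phi_a$ is a $\psi$) or of the $\psi^*$'s (if $\phi_a$ is a $\psi^*$). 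Ordinary Wick's theorem \eqref{eq-Wick-det-2} then applies to the $\tilde\phi_a$, with two-point functions $\langle 0|\tilde\phi_a\tilde\phi_b|0\rangle=\langle 0|\phi_a\phi_b\,e^{G}|0\rangle$. A direct computation using \eqref{eq-quadVEV} gives that the only nonzero contractions are
\[
\langle 0|\psi^{i}_{n+\half}\psi^{j*}_{m+\half}\,e^{G}|0\rangle=A^{ji}_{mn}(q;\bm a),
\qquad
\langle 0|\psi^{i}_{s}\psi^{i'}_{t}\,e^{G}|0\rangle=\langle 0|\psi^{i*}_{s}\psi^{j*}_{t}\,e^{G}|0\rangle=0,
\]
so the Wick sum collapses to a single $r\times r$ determinant whose entries are the relevant coefficients $A^{ij}_{mn}(q;\bm a)$.

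It then remains to carry out the sign bookkeeping to match the stated entries $B_{kl}(q;\bm a)=(-1)^{n^j_{\bar l}}A^{ij}_{m^i_{\bar k}n^j_{\bar l}}(q;\bm a)$. Labelling the $\psi$- and $\psi^*$-annihilators by $k\in I_i$ via their Frobenius coordinates, the contraction matrix produced above is $(B_{kl})$ up to transposition and a per-column sign. Since $\det$ is invariant under transposition, one is left to check that the global factor $(-1)^{\sum_{i,k}n^i_k}$ distributes as $\prod_l(-1)^{n^j_{\bar l}}$ — one factor per column — and that, combined with the permutation sign incurred when reordering the operators into the alternating standard form of \eqref{eq-Wick-det-2}, it reproduces exactly the factor $(-1)^{n^j_{\bar l}}$ in each column. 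This yields \eqref{eq-prop-det-bog}.

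I expect this last step — the sign bookkeeping — to be the main obstacle: one must track the permutation sign arising when the $2r$ annihilators, which occur grouped as $\psi^i\psi^{i*}$ pairs in decreasing Frobenius order within each block $I_i$, are reordered into the form required by \eqref{eq-Wick-det-2}, and verify that it combines with $(-1)^{\sum n^i_k}$ to leave precisely one factor $(-1)^{n^j_{\bar l}}$ per column. This is entirely parallel to the sign analysis already performed in the proof of Theorem \ref{prop-det-C} (compare the $(-1)^{r_2}$ and $(-1)^{\sum n^1_k+\sum m^3_k}$ factors appearing there), so the same bookkeeping suffices.
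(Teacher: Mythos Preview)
Your argument is correct, but the paper takes a more direct route that avoids Wick's theorem entirely. Since $G$ involves only fermionic creators, all of which anti-commute and square to zero, the exponential factors as
\[
e^{G}=\prod_{i,j=1}^{3}\prod_{m,n\geq 0}\Big(1+A_{mn}^{ij}(q;\bm a)\,\psi_{-m-\half}^{i}\psi_{-n-\half}^{j*}\Big).
\]
Expanding and pairing with $\langle\mu^1,\mu^2,\mu^3|$ (written via \eqref{eq-ferm-basismu}) then reduces the computation to reading off a single coefficient, which is visibly a sum over permutations $\sigma\in S_r$ of $(-1)^\sigma\prod_k A^{i(k)i(\sigma(k))}_{m\,n}$, i.e.\ the determinant $\det(\tA_{kl})$. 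The global sign $(-1)^{\sum_{i,k}n_k^i}$ is then absorbed column by column, giving $\det(B_{kl})$.

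Your conjugation-plus-Wick approach reaches the same destination but with more machinery; its virtue is that it makes the parallel with Theorem~\ref{prop-det-C} explicit, and the two-point functions $\langle 0|\psi^{i}_{n+\half}\psi^{j*}_{m+\half}e^{G}|0\rangle=A^{ji}_{mn}$ fall out cleanly. The paper's argument, by contrast, exploits the special structure of the Bogoliubov transform (only creators) to bypass Wick entirely, which makes the sign bookkeeping you flag as the main obstacle essentially trivial: no operator reordering is needed, and the permutation signs are already those of the determinant expansion. So your anticipated difficulty is real for your method but absent in the paper's---though, as you note, it is no worse than what was handled in the proof of Theorem~\ref{prop-det-C}, and in fact simpler since here all three blocks have the same $\psi\psi^*$ alternating structure.
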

\begin{proof}
Notice that the above Bogoliubov transform only involves fermionic creators
which anti-commute with each other.
In particular,
one has
\begin{equation*}
(\psi_{-m-\half}^i)^2 = (\psi_{-m-\half}^{i*})^2 =0,
\qquad \forall m\ge 0 \text{ and } 1\leq i\leq 3.
\end{equation*}
Therefore we can expand the exponential in the following way:
\begin{equation*}
\begin{split}
\exp\Big( \sum_{i,j =1}^3 \sum_{m,n\geq 0} A_{mn}^{ij}(q;\bm a)
\psi_{-m-\half}^i \psi_{-n-\half}^{j*} \Big)
= \prod_{i,j=1}^3 \prod_{m,n\geq 0}
\Big( 1+ A_{mn}^{ij}(q;\bm a) \psi_{-m-\half}^i \psi_{-n-\half}^{j*} \Big),
\end{split}
\end{equation*}
and then by \eqref{eq-ferm-basismu} we have:
\begin{equation*}
\begin{split}
B_{\mu^1,\mu^2,\mu^3}^{(\bm a)}(q)
= & (-1)^{ \sum_{i=1}^3 \sum_{k=1}^{r_i} n_k^i }  \cdot
\langle 0 | \otimes  \langle 0 | \otimes  \langle 0 |
\prod_{j=1}^{r_1} \psi_{n_j^1 +\half}^{1}\psi_{m_j^1 +\half}^{1*} \cdot
 \prod_{j=1}^{r_2} \psi_{n_j^2 +\half}^{2}\psi_{m_j^2 +\half}^{2*} \\
& \cdot \prod_{j=1}^{r_3} \psi_{n_j^3 +\half}^{3}\psi_{m_j^3 +\half}^{3*}
 \cdot  \prod_{i,j=1}^3 \prod_{m,n\geq 0}
\Big( 1+ A_{mn}^{ij}(q;\bm a) \psi_{-m-\half}^i \psi_{-n-\half}^{j*} \Big)
|0\rangle \otimes |0\rangle \otimes |0\rangle.
\end{split}
\end{equation*}
Recall that $\psi_{s}^{i}$ and $\psi_{-s}^{i*}$ are adjoint to each other
with respect to the inner product on the fermionic Fock space
for every half-integer $s$,
and thus the right-hand side of the above formula is simply
$(-1)^{ \sum_{i=1}^3 \sum_{k=1}^{r_i} n_k^i }$ times the coefficient of
\begin{equation*}
\prod_{j=1}^{r_3} \psi_{-m_j^3-\half}^3 \psi_{-n_j^3-\half}^{3*}
\cdot \prod_{j=1}^{r_2} \psi_{-m_j^2-\half}^2 \psi_{-n_j^2-\half}^{2*}
\cdot \prod_{j=1}^{r_1} \psi_{-m_j^1-\half}^1 \psi_{-n_j^1-\half}^{1*}
\end{equation*}
in the expansion of
\begin{equation*}
\prod_{i,j=1}^3 \prod_{m,n\geq 0}
\Big( 1+ A_{mn}^{ij}(q;\bm a) \psi_{-m-\half}^i \psi_{-n-\half}^{j*} \Big).
\end{equation*}
Now let $r =r_1+r_2+r_3$, and for $1\leq k\leq r$
we denote $i(k)=i$ if $k\in I_i$
and $\bar k = k-\sum_{e=1}^{i(k)-1} r_e$.
Then it is easy to see that this coefficient is:
\begin{equation*}
\begin{split}
 \sum_{\sigma\in S_r} (-1)^\sigma \cdot
\prod_{k=1}^r A_{m_{\bar k}^{i(k)} n_{\overline{\sigma(k)}}^{i(\sigma(k))}}^{i(k)i(\sigma(k))} (q;\bm a)
=\det \big(\tA_{kl} (q; \bm a) \big)_{1\leq k,l\leq r},
\end{split}
\end{equation*}
where $(\tA_{kl})$ is the $r\times r$ matrix given by:
\begin{equation*}
\tA =
\left(
\begin{array}{ccc}
A^{11} & A^{12} & A^{13} \\
A^{21} & A^{22} & A^{23} \\
A^{31} & A^{32} & A^{33} \\
\end{array}
\right),
\end{equation*}
and the block $A^{ij}$ is the matrix of size $r_i\times r_j$
whose entries are $A^{ij}_{m^i_{k} n^j_l} (q;\bm a)$ for $1\leq k\leq r_i$ and $1\leq l\leq r_j$.
Thus we conclude that:
\begin{equation*}
\begin{split}
B_{\mu^1,\mu^2,\mu^3}^{(\bm a)}(q)
=& (-1)^{ \sum_{i=1}^3 \sum_{k=1}^{r_i} n_k^i } \cdot
\det \big(\tA_{kl} (q; \bm a) \big)_{1\leq k,l\leq r} \\
=& \det \big(B_{kl} (q; \bm a) \big)_{1\leq k,l\leq r}.
\end{split}
\end{equation*}
Thus the conclusion holds.
\end{proof}

\subsection{Matching of the two determinantal formulas}
\label{sec-match-coeff}

Recall that in \S \ref{sec-ferm-topovert} we have shown that
the framed topological vertex is a determinant:
\be
W_{\mu^1,\mu^2,\mu^3}^{(a_1,a_2,a_3)} (q) =
\det \big( \tF_{kl}(q; \bm a) \big)_{1\leq k,l\leq r_1+r_2+r_3}.
\ee
Now we compare the entries $\tF_{kl}$ with the entries $B_{kl}$
in the determinantal formula \eqref{eq-prop-det-bog} of the Bogoliubov transform
$B_{\mu^1,\mu^2,\mu^3}^{(\bm a)}(q)$.
From the explicit formulas for $F_{mn}^{ij}$ (see Proposition \ref{prop-coeff-F})
and the explicit formulas \eqref{eqn:def Aqa} for $A_{mn}^{ij}(q;\bm a)$,
we easily see:
\begin{equation*}
\begin{split}
& F_{mn}^{ii}(q;\bm a) = (-1)^n \cdot A_{mn}^{ii}(q;\bm a),
\qquad i=1,2,3; \\
& F_{mn}^{i(i+1)}(q;\bm a) = (-1)^n \cdot q^{(a_i-a_{i+1})/8 + 1/48} \cdot A_{mn}^{i(i+1)}(q;\bm a),
\qquad i=1,2; \\
& F_{mn}^{31}(q;\bm a) = (-1)^n \cdot q^{(a_3-a_1)/8 - 1/24} \cdot A_{mn}^{31}(q;\bm a);\\
& F_{mn}^{i(i-1)}(q;\bm a) = (-1)^n \cdot q^{(a_i-a_{i-1})/8 - 1/48} \cdot A_{mn}^{i(i-1)}(q;\bm a),
\qquad i=2,3; \\
& F_{mn}^{13}(q;\bm a) = (-1)^n \cdot q^{(a_1-a_3)/8 + 1/24} \cdot A_{mn}^{13}(q;\bm a).\\
\end{split}
\end{equation*}
Recall that:
\begin{equation*}
\tF_{kl}(q;\bm a) = F_{m_{\bar k}^i n_{\bar l}^j}^{ij}(q;\bm a),
\qquad
B_{kl} (q;\bm a) = (-1)^{n_{\bar l}^j} \cdot
A_{m_{\bar k}^i n_{\bar l}^j}^{ij}(q;\bm a)
\end{equation*}
for $k\in I_i$ and $l\in I_j$,
and thus:
\be
\label{eq-coeff-F-B}
\begin{split}
& \tF_{kl}(q;\bm a) = B_{kl}(q;\bm a),
\quad \text{ if $k,l\in I_1$ or $k,l\in I_2$ or $k,l\in I_3$;} \\
& \tF_{kl}(q;\bm a) =  q^{(a_i-a_{i+1})/8 + 1/48} \cdot B_{kl}(q;\bm a),
\quad \text{if $k\in I_1,l\in I_2$ or $k\in I_2,l\in I_3$;} \\
& \tF_{kl}(q;\bm a) = q^{(a_3-a_1)/8 - 1/24} \cdot B_{kl}(q;\bm a),
\quad \text{if $k\in I_3,l\in I_1$;}\\
& \tF_{kl}(q;\bm a) = q^{(a_i-a_{i-1})/8 - 1/48} \cdot B_{kl}(q;\bm a),
\quad \text{if $k\in I_2,l\in I_1$ or $k\in I_3,l\in I_2$;} \\
& \tF_{kl}(q;\bm a) = q^{(a_1-a_3)/8 + 1/24} \cdot B_{kl}(q;\bm a),
\quad \text{if $k\in I_1,l\in I_3$.}\\
\end{split}
\ee
Now we show that:
\begin{Lemma}
Let $r=r_1+r_2+r_3$ and let $\sigma \in S_r$ be a permutation.
Then:
\be
\prod_{k=1}^r \tF_{k\sigma(k)} = \prod_{k=1}^r B_{k\sigma(k)}.
\ee
\end{Lemma}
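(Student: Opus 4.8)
The plan is to exploit the structural feature of \eqref{eq-coeff-F-B} that the ratio $\tF_{kl}/B_{kl}$ depends only on which of the three index blocks $k$ and $l$ lie in, and not on the particular Frobenius coordinates $m_{\bar k}^i,\,n_{\bar l}^j$. Writing $i(k)=i$ whenever $k\in I_i$, the relations \eqref{eq-coeff-F-B} can be restated as
\begin{equation*}
\tF_{kl}(q;\bm a) = c_{\,i(k)\,i(l)} \cdot B_{kl}(q;\bm a),
\end{equation*}
where each $c_{ij}$ is the explicit power of $q$ read off from the five cases of \eqref{eq-coeff-F-B}, with $c_{ii}=1$.

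First I would record the factors $c_{ij}$ and observe that they form a multiplicative coboundary: there are powers of $q$, say $d_1,d_2,d_3$, with $c_{ij}=d_i/d_j$ for all $i,j$. Concretely, the choice
\begin{equation*}
d_1 = q^{a_1/8 + 1/24}, \qquad d_2 = q^{a_2/8 + 1/48}, \qquad d_3 = q^{a_3/8}
\end{equation*}
reproduces every off-diagonal factor in \eqref{eq-coeff-F-B}; for example $d_1/d_2 = q^{(a_1-a_2)/8 + 1/48}$ is exactly the prefactor attached to $k\in I_1,\,l\in I_2$, and $d_3/d_1 = q^{(a_3-a_1)/8 - 1/24}$ the one attached to $k\in I_3,\,l\in I_1$. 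The point to check is consistency, which reduces to the cyclic relations $c_{12}c_{23}c_{31}=1$ and $c_{21}c_{32}c_{13}=1$ together with $c_{ij}c_{ji}=1$; these hold because both the $\bm a$-linear part $(a_i-a_j)/8$ and the constant part (multiples of $1/48$) telescope to zero around each cycle.

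Granting the coboundary form, the lemma follows by a telescoping argument that is insensitive to $\sigma$. For any $\sigma\in S_r$,
\begin{equation*}
\prod_{k=1}^r \tF_{k\sigma(k)}
= \Big(\prod_{k=1}^r \frac{d_{i(k)}}{d_{i(\sigma(k))}}\Big)\prod_{k=1}^r B_{k\sigma(k)} .
\end{equation*}
In the numerator $\prod_{k} d_{i(k)} = d_1^{\,r_1} d_2^{\,r_2} d_3^{\,r_3}$, since exactly $r_i=|I_i|$ of the indices $k$ lie in $I_i$; and since $\sigma$ is a bijection of $\{1,\dots,r\}$, the values $\sigma(k)$ again run over all of $\{1,\dots,r\}$, so $\prod_{k} d_{i(\sigma(k))} = d_1^{\,r_1} d_2^{\,r_2} d_3^{\,r_3}$ as well. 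The bracketed factor is therefore $1$, and the two products coincide. Combined with the cofactor expansion this immediately yields $\det(\tF_{kl})=\det(B_{kl})$.

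The verification is largely routine once set up; the only genuine content, and the step I would be most careful about, is the observation that the block-dependent prefactors assemble into a coboundary $d_i/d_j$. This is precisely what forces the prefactor to collapse to $1$ for every permutation, no matter how $\sigma$ mixes the three blocks, and it is the reason these two a priori unequal matrices have equal determinants.
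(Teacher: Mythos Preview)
Your proof is correct and rests on the same underlying observation as the paper's: the ratio $\tF_{kl}/B_{kl}$ depends only on the block indices $(i(k),i(l))$, and these block ratios multiply to $1$ along any permutation. The paper proceeds by decomposing $\sigma$ into cycles and checking directly that the product of the ratio factors $f_{ij}$ around each cycle equals $1$; you instead recognise the stronger statement that the $c_{ij}$ form a multiplicative coboundary $c_{ij}=d_i/d_j$ and exhibit the $d_i$ explicitly, so that the telescoping works over the whole permutation at once without invoking cycle structure. Your route is a bit more conceptual and sidesteps the cycle decomposition entirely; the paper's is more bare-hands but amounts to verifying the same cocycle condition. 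Either way the content is the same, and your explicit $d_i$ are a nice touch the paper does not spell out.
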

\begin{proof}
The permutation $\sigma$ can be decomposed into a product of cycles:
\begin{equation*}
\sigma = (i_1i_2\cdots i_s) (j_1j_2\cdots j_t) \cdots.
\end{equation*}
We will prove that for each cycle $(i_1i_2\cdots i_s)$,
the following relation holds:
\be
\label{eq-toprove-FB}
\tF_{i_1 i_2} \tF_{i_2i_3}\cdots \tF_{i_{s-1}i_s} \tF_{i_si_1}
= B_{i_1 i_2} B_{i_2i_3}\cdots B_{i_{s-1}i_s} B_{i_si_1},
\ee
and then the conclusion follows from this identity.

For every $k\in \bZ$,
denote by $\tilde k \in \{1,2,3\}$ such that $k\equiv \tilde k(\text{mod } 3)$,
and denote:
\begin{equation*}
\begin{split}
f_{ij} = \begin{cases}
1, &\text{ if $(i,j)=(1,1),(2,2),(3,3)$;}\\
q^{(a_i-a_{i+1})/8 + 1/48}, &\text{ if $(i,j)=(1,2),(2,3)$;}\\
q^{(a_3-a_1)/8 - 1/24}, &\text{ if $(i,j)=(3,1)$;}\\
q^{(a_i-a_{i-1})/8 - 1/48}, &\text{ if $(i,j)=(2,1),(3,2)$;}\\
q^{(a_1-a_3)/8 + 1/24}, &\text{ if $(i,j)=(1,3)$.}
\end{cases}
\end{split}
\end{equation*}
Then the following identity is a simple combinatorial observation:
\begin{equation*}
f_{\tilde i_1 \tilde i_2} f_{\tilde i_2\tilde i_3}\cdots
f_{\tilde i_{s-1}\tilde i_s} f_{\tilde i_s\tilde i_1} =1,
\end{equation*}
which implies \eqref{eq-toprove-FB} by \eqref{eq-coeff-F-B}.
\end{proof}

Now from this lemma we see that:
\begin{equation*}
\begin{split}
\det (\tF_{kl}) =
\sum_{\sigma\in S_r} (-1)^\sigma  \prod_{k=1}^r \tF_{k\sigma(k)}
=  \sum_{\sigma\in S_r} (-1)^\sigma \prod_{k=1}^r B_{k\sigma(k)}
= \det (B_{kl}),
\end{split}
\end{equation*}
and this is equivalent to saying that the framed ADKMV Conjecture holds:
\be
W_{\mu^1,\mu^2,\mu^3}^{(a_1,a_2,a_3)} (q) = B_{\mu^1,\mu^2,\mu^3}^{(a_1,a_2,a_3)} (q).
\ee

\section{Conflict of interest and data availability statement}
The author states that there is no conflict of interest, and
no datasets were generated or analysed during the current study.

\vspace{.2in}

{\em Acknowledgements}.
The authors thank Professor Shuai Guo, Professor Xiaobo Liu and Professor Zhengyu Zong for helpful discussions.
CY thanks Professor Xiangyu Zhou for encouragement.
CY is partly supported by the NSFC (No. 12288201, 12401079),
and the CPSF (No. 2023M743717, BX20240407).
JZ is partly supported by the NSFC (No. 12371254, 11890662, 12061131014).


\begin{thebibliography}{9}



\bibitem{adkmv}Aganagic M, Dijkgraaf R, Klemm A, Mari\~{n}o M, Vafa C. Topological strings and integrable hierarchies. Communications in Mathematical Physics, 2006, 261(2): 451-516.

\bibitem{akmv}Aganagic M, Klemm A, Mari\~{n}o M, Vafa C. The topological vertex. Communications in Mathematical Physics, 2003, 254(2): 425-478.

\bibitem{bur}Buryak A. Double ramification cycles and integrable hierarchies. Comm. Math. Phys., 2015, 336(3): 1085-1107.

\bibitem{djm}Date E, Jimbo M, Miwa T. Solitons: Differential equations, symmetries and infinite dimensional algebras. Cambridge University Press, 2000.

\bibitem{dz2}Deng F, Zhou J. Fermionic gluing principle of the topological vertex. Journal of High Energy Physics, 2012, 2012(06): 166.

\bibitem{dz1}Deng F, Zhou J. On fermionic representation of the framed topological vertex. Journal of High Energy Physics, 2015, 2015(12): 1-22.

\bibitem{dz}Dubrovin B, Zhang Y. Frobenius Manifolds And Virasoro Constraints. Selecta Math. 1999, 5: 423-466.

\bibitem{dz01}Dubrovin B, Zhang Y. Normal forms of hierarchies of integrable PDEs, Frobenius manifolds and Gromov - Witten invariants. arXiv preprint arXiv: math/0108160, 2001.

\bibitem{ey}Eguchi T, Yang S K. The Topological $CP^1$ Model and the Large-$N$ Matrix Integral. Mod.Phys.Lett. 1994, A9: 2893-2902.

\bibitem{fsz} Faber C, Shadrin S, Zvonkine D. Tautological relations and the $r$-spin Witten conjecture. Ann. Sci. \'{E}c. Norm. Sup\'{e}r, 2010, 43(4): 621-658.

\bibitem{fjr}Fan H, Jarvis T, Ruan Y. The Witten equation, mirror symmetry, and quantum singularity theory. Ann. of Math., 2013, 178 (1): 1-106.

\bibitem{fw}Foda O, Wu J F. A Macdonald refined topological vertex. Journal of Physics A: Mathematical and Theoretical, 2017, 50: 294003.

\bibitem{gjv}Goulden I P, Jackson D M, Vakil R. Towards the geometry of double Hurwitz numbers. Adv. Math., 2005, 198(1): 43-92.

\bibitem{ikv}Iqbal A, Kozcaz C, Vafa C. The refined topological vertex. Journal of High Energy Physics, 2009, 10(2009)069.

\bibitem{kac}Kac V. Infinite dimensional Lie algebras. Third edition. Cambridge University Press, 1990.

\bibitem{kv}Kac V, van de Leur J. The $n$-component KP hierarchy and representation theory. Journal of Mathematical Physics, 2003, 44(8): 3245-3293.

\bibitem{kas}Kashani-Poor A K. The wave function behavior of the open topological string partition function on the conifold. Journal of High Energy Physics, 2007, 04: 004.

\bibitem{kon}Kontsevich M. Intersection theory on the moduli space of curves and the matrix Airy function. Communications in Mathematical Physics, 1992, 147(1): 1-23.

\bibitem{km}Kontsevich M, Y Manin. Gromov-Witten classes, quantum cohomology, and enumerative geometry. Comm. Math. Phys., 1994, 164(3): 525-562.

\bibitem{lllz}Li J, Liu C C, Liu K, Zhou J. A mathematical theory of the topological vertex. Geometry \& Topology, 2009, 13(1): 527-621.

\bibitem{lrz}Liu S Q, Ruan Y, Zhang Y. BCFG Drinfeld-Sokolov hierarchies and FJRW-theory. Invent. Math., 2015, 201 (2): 711-772.

\bibitem{mac}MacDonald I G. Symmetric functions and Hall polynomials. 2nd edition. Clarendon Press, 1995.

\bibitem{moop}Maulik D, Oblomkov A, Okounkov A, Pandharipande R. Gromov-Witten/Donaldson-Thomas correspondence for toric $3$-folds. Inventiones Mathematicae, 2011, 186(2): 435-479.

\bibitem{nt}Nakatsu T, Takasaki K. Three-partition Hodge integrals and the topological vertex. Communications in Mathematical Physics, 2020, 376: 201-234.

\bibitem{Ok1}Okounkov A. Toda equations for Hurwitz numbers. Math. Res. Lett., 2000, 7: 447-453.

\bibitem{Ok2}Okounkov A. Infinite wedge and random partitions. Selecta Mathematica New Series, 2001, 7(1): 57-81.

\bibitem{op}Okounkov A, Pandharipande R. The equivariant Gromov-Witten theory of $\mathbb P^1$. Ann. of Math., 2006, 163 (2): 561-605.

\bibitem{orv}Okounkov A, Reshetikhin N, Vafa C. Quantum Calabi-Yau and classical crystals. The Unity of Mathematics. Progress in Mathematics, vol 244. Birkh\"auser Boston, 2006.

\bibitem{sa}Sato M. Soliton equations as dynamical systems on an infinite dimensional Grassmann manifold. RIMS Kokyuroku, 1981, 439:30-46.

\bibitem{tak}Takasaki K. Remarks on partition functions of topological string theory on generalized conifolds.
RIMS Kokyuroku 1913 (2014) 182-201.

\bibitem{tn}Takasaki K, Nakatsu T. Open string amplitudes of closed topological vertex.
J. Phys. A: Math. and Theor., 2016, Vol. 49 (2): 025201.

\bibitem{ut}Ueno K, Takasaki K. Toda lattice hierarchy. Advanced Studies in Pure Mathematics 4, 1984: 1-95.

\bibitem{wit90}Witten E. Two-dimensional gravity and intersection theory on moduli space. Surveys in differential geometry, 1990, 1(1): 243-310.

\bibitem{wit2}Witten E. Algraic geometry associated with matrix models of two-dimensional gravity. Topological methods in modern mathematics (Stony Brook, NY, 1991), 235-269, Publish or Perish, Houston, TX, 1993.

\bibitem{wit}Witten E. Chern-Simons gauge theory as a string theory. Progr. Math., Volume 133:637-678, Birkh\"auser, Basel, 1995.

\bibitem{wy}Wu J F, Yang J. Vertex operators, $\bC^3$ curves, and topological vertex. arXiv preprint arXiv:1403.0181, 2014.

\bibitem{zhou5}Zhou J. A conjecture on Hodge integrals. arXiv preprint arXiv: math/0310282, 2003.

\bibitem{zhou4}Zhou J. Curve counting and instanton counting. arXiv preprint arXiv: math/0311237, 2003.

\bibitem{zhou6}Zhou J. Hodge integrals and integrable hierarchies. Letters in Mathematical Physics, 2010, 93(1): 55-71.

\bibitem{zhou1}Zhou J. Emergent geometry and mirror symmetry of a point. arXiv preprint arXiv: 1507.01679, 2015.

\end{thebibliography}
\end{document}